\documentclass[11pt,a4paper]{article}
\renewcommand{\arraystretch}{1.5}
\usepackage{amssymb,amsbsy,amsmath,color,comment,epsfig,harvard,rotating}
\usepackage{adjustbox,amsthm,mathtools,bbm}
\usepackage{dsfont,hyperref,pifont,ulem}
\usepackage{subcaption}
\usepackage{settings}
\usepackage{makecell}
\usepackage{booktabs}
\usepackage{multirow}

\newcommand{\xp}{{\textrm{XP}}_{\!\tau}}
\newcommand{\xpb}[1]{{\widehat{\textrm{XP}}_{\!\tau,#1}}}

\begin{document}
\bibliographystyle{joe}
\citationstyle{dcu}
\lineskip=1ex \baselineskip 4ex
\setlength{\arraycolsep}{0.02in}
\pagestyle{empty}
\begin{center}
\noindent{\Large\bf Estimation risk in conditional expectiles}\vskip 3em
\end{center}
\textbf{Marcelo Fernandes}\\
Sao Paulo School of Economics, FGV\\\\
\textbf{Víctor Henriques}\\
Argus Media\\\\
\textbf{Eduardo Fonseca Mendes}\\
Sao Paulo School of Business Administation, FGV\vskip 3em
\noindent\textbf{Abstract:}~~We establish the consistency and asymptotic normality of a two-step estimator of conditional expectiles in the context of conditional scale models. We first estimate the conditional variance parameters by quasi-maximum likelihood and then compute the unconditional expectile of the innovations using the empirical distribution of the standardized residuals. We show how replacing true innovations with standardized residuals affects the asymptotic variances of both conditional and unconditional expectile estimators. Finally, our empirical analysis reveals that conditional expectiles assess tail risk in cryptomarkets in a more robust manner than traditional quantile-based risk measures, such as value at risk and expected shortfall.\\
\textbf{Keywords:} Asymmetric least squares, quantiles, tail risk.\vfill

{\small\noindent\textbf{Acknowledgments:}~~We are indebted to Silvia Gonçalves, Giuseppe Cavaliere, Abdelaati Daouia, Alan De Genaro, and Edu Horta, as well as to seminar participants at the UBRI Connect (Zurich, 2024), High-Voltage Econometrics Workshop (Lecce, 2024), Research in Options (Rio de Janeiro, 2024), São Paulo School of Advanced Sciences on High-Dimensional Modeling (São Paulo, 2025), BCN Workshop in Financial Econometrics (Barcelona, 2025), Italian Congress of Econometrics and Empirical Economics (Palermo, 2025), SoFiE Annual Conference (Paris, 2025), TSE Financial Econometrics Conference (Hammamet, 2025), Brazilian School of Time Series and Econometrics (Campinas, 2025), Meetings of the Brazilian Finance Society (São Paulo, 2025), Meetings of the Brazilian Econometric Society (São Paulo, 2025), Workshop on Econometrics and Complex Networks (Buenos Aires, 2026), Virtual Time Series Seminar, University of Illinois at Urbana-Champaign, and University of São Paulo. Fernandes and Mendes acknowledge financial support not only from FAPESP (2023/01728-0) and CNPq (302278/2018-4 and 302153/2022-5), but also from the Silicon Valley Community Foundation through the University Blockchain Research Initiative (UBRI, 2022/199610). The usual disclaimer applies.}
	
\newpage\pagestyle{plain}
\renewcommand{\arraystretch}{2}
\lineskip=1ex \baselineskip 5ex
\section{Introduction}

Quantile-based measures, such as value at risk (VaR) and expected shortfall (ES), dominate risk assessments. The Basel II regulatory framework calculates minimum capital requirements using a VaR approach (\url{www.bis.org/publ/bcbs24.pdf}). However, value at risk is not a satisfactory risk measure for two reasons. First, it does not adequately account for diversification gains because it violates the subadditivity property that characterizes coherent risk measures  \cite{artzner1999coherent,follmer2002convex}. Second, since the value at risk corresponds to a quantile of the distribution, it defines what constitutes a tail realization, but without assessing expected severity. Accordingly, the Basel III regulatory framework suggests using expected shortfall (\url{www.bis.org/publ/bcbs265.pdf}). By measuring the expected loss in excess of the value at risk, ES effectively accounts for not only diversification gains, but also the severity of the tail realization.

The quest for better risk measures is not over yet. \cn{gneiting2011} demonstrates that ES is not individually elicitable, so there is no natural backtest procedure to assess the performance of ES forecasts. In addition, quantile-based risk measure estimates consider only the relative frequency of observations above or below their corresponding predictions \cite{daouia2018estimation}, depending heavily on the tail of the loss distribution \cite{kuan2009assessing,daouia2019extremiles}. More specifically, ES is too conservative because it restricts attention to a given quantile, whereas VaR is too lenient because it ignores severity. As such, risk measures based on a given quantile arguably either underestimate or overestimate risk exposures. 

The alternative class of risk measures based on \pc{newey1987asymmetric} asymmetric least squares (ALS) has been gaining traction in the literature. In particular, expectiles (XP) are a least-squares analog of quantiles, offering the only  coherent (law-invariant) risk measure that simultaneously accounts for diversification gains and allows straightforward backtesting \cite{ziegel2016coherence,daouia2023expectile}. Moreover, it depends on both the probability and severity of the tail event, making them particularly suitable for actuarial and portfolio-allocation problems. This paper establishes the consistency and asymptotic normality of the two-step estimators of the conditional and unconditional expectiles for conditional scale models. As in \cn{francq2015risk}, we first estimate the scale parameters by Gaussian quasi-maximum likelihood (QML) and then estimate the innovation expectile using the empirical distribution of the standardized residuals. Because innovations are unobservable, inference must account for the first-step estimation error. We derive a Bahadur representation that explicitly shows how replacing innovations with standardized residuals affects the second-step estimation. We also obtain a closed-form asymptotic variance, consistent feasible variance estimators, and a conditional Gaussian approximation for the conditional expectile given past realizations.

We corroborate our asymptotic results with Monte Carlo experiments based on designs that reproduce the main stylized facts of asset returns. The simulations assess the finite-sample bias and dispersion of the two-step estimators of the conditional and unconditional expectiles, as well as the quality of the Gaussian approximation based on the asymptotic variance estimator. Both bias and root mean squared error decrease sharply with sample size, whereas standard errors are on average very close to the standard deviation over Monte Carlo replications. In addition, the actual coverage rates of the Wald confidence intervals for both conditional and unconditional expectiles are close to nominal values.

Our approach aligns with the literature. \cn{gao2008estimation} establish consistency and asymptotic normality of the value-at-risk and expected shortfall based on GARCH standardized residuals. \cn{francq2015risk} extend the asymptotic theory to cover several GARCH-type specifications: e.g., exponential GARCH, asymmetric power ARCH, and GJR-GARCH models \cite{nelson1991conditional,ding1993long,glosten1993relation}. We are the first to assess how the estimation of conditional scale parameters affects the inference on conditional and unconditional expectiles. \cn{holzmann2016expectile} and \cn{kratschmer2017statistical} develop the asymptotic theory for the estimation of unconditional expectiles at a fixed level $\tau$ using independent and identically distributed (iid) data. \cnm{daouia2018estimation} (2018, \cy{daouia2020tail}) derive the asymptotic distribution of a weighted ALS estimator for extreme expectiles at level $\tau_n$, with $\tau_n\to 1$ as the sample size $n$ grows. \cn{girard2021extreme} extend the analysis to consider the estimation of conditional extreme expectiles in heavy-tailed heteroskedastic regressions using a two-step approach. In particular, they show that substituting standardized residuals for true innovations does not affect the asymptotic distribution of extreme quantile estimators because the latter converges at a slower rate $\sqrt{n(1-\tau_n)}$ than the first-step estimation error shrinks to zero. Unfortunately, this is not the case here. The estimation of expectiles at a fixed level $\tau$ converges at the same $\sqrt{n}$-rate as the estimation of the conditional mean and variance parameters, and hence the latter affects the asymptotic distribution of the former.

Finally, we empirically assess the performance of conditional expectiles relative to quantile-based risk measures in cryptocurrency markets \ca{MS22}{for an overview, see}. Our motivation is twofold. First, crypto assets appeal to investors mainly due to their low correlation with traditional asset classes \ca{BB22}{see, among others,}. This suggests that we should not assess tail risk by looking only at the value at risk, as we would miss out any diversification benefit. Second, extreme tail events are relatively common in crypto markets \cite{gkillas2018application,STT18,borri2019,nguyen2020investigating}. This casts doubt on the suitability of the expected-shortfall measure, given that the latter is very hard to estimate precisely under heavy tails. As such, crypto markets make fertile ground for the application of conditional expectiles in the stipulation of minimum capital requirements.

We find that the one-step-ahead prediction intervals of the conditional expectiles at the 1\% level are more precise and tighter for every currency, yielding reasonable capital requirements. In particular, the prediction intervals of the value at risk at the 1\% level are between 11.4\$ and 50.9\% wider, whereas those of the expected shortfall at the 1\% level are between 122.1\% and 183.9\% wider. In addition, we explore the one-to-one mapping between quantiles and expectiles to assess how cryptocurrency risks evolve over time through the lens of the expected gain-loss ratio. Our empirical analyses contribute to a better understanding of cryptocurrency markets, complementing previous studies in the literature. For instance, \cn{zhang2021downside} examine how downside risk affects the cross-section of cryptocurrency returns, whereas \cnm{MS19} (2019, \cy{MS20}) investigate price discovery and arbitrage opportunities in cryptomarkets, respectively.

The remainder of this paper proceeds as follows. Section \ref{sec:ALS} discusses the main aspects of expectile-based risk measures. Section \ref{sec:model} introduces the conditional scale model and the two-step estimators of conditional and unconditional expectiles. Section \ref{sec:estimation} derives the asymptotic theory, while Section \ref{sec:mc} reports Monte Carlo simulations that assess the finite-sample behavior of the two-step estimators and their standard errors. Section \ref{sec:crypto_application_clt} examines tail risk in cryptomarkets. Section \ref{sec:conclusion} offers some concluding remarks. Appendix A collects technical Assumptions~and proofs.

\section{Expectiles}\label{sec:ALS}

Let the loss $L\in\mathbb{R}$ be a square integrable random variable with loss distribution function $F_L$. \cn{newey1987asymmetric} define the $\tau$-th expectile $\xp$ as
\begin{equation}\label{eq:expectiles_optimization_problem}
\xp=\underset{\xi\in\mathbb{R}}{\arg\min}\int_{\mathbb{R}}|\tau-\mathbf{1}(\ell<\xi)|(\ell-\xi)^2\textrm{d}F_L(\ell)=\underset{\theta\in\mathbb{R}}{\arg\min}\,\E\left[\rho_\tau(L-\xi)\right]
\end{equation}
where $\rho_\tau(u)=|\tau-\mathbf{1}(u\le 0)|u^2$ is the expectile check function and $\mathbf{1}(A)$ denotes the indicator function that takes value one if $A$ is true, zero otherwise. Apart from coinciding with the mean for $\tau=1/2$, the expectile at any $\tau\in(0,1)$ is well defined and unique for any integrable random variable \cite{newey1987asymmetric,abdous1995relating}.

Replacing the absolute deviation in the quantile check function by a quadratic deviation facilitates optimization in a substantial manner. The quantile objective function is not continuously differentiable at zero, whereby numerical implementation occasionally leads to quantile-crossing functions. In addition, quantile estimators consider only the relative frequency of observations above or below their corresponding predictions \cite{daouia2018estimation,daouia2019extremiles}, with asymptotic distributions that strongly depend on the density function and smoothing parameters \cite{cheng1997unified}. In contrast, the expectile objective function is continuously differentiable almost everywhere. Implementation is straightforward by iterative reweighted least squares \cite{daouia2018estimation}, whereas asymptotic normality requires only finite second moments \cite{holzmann2016expectile}.

The first-order condition of \eqref{eq:expectiles_optimization_problem} also indicates that expectiles depend both on the probability and magnitude of tail realizations:
\begin{displaymath}
\begin{aligned}
0&=\left.\frac{\textrm{d}}{\textrm{d}\xi}\int_{\mathbb{R}}|\mathbf{1}(\ell\le\xi)-\tau|(\ell-\xi)^2\,\textrm{d} F_L(\ell)\right|_{\xi=\xp}\\
&=\left.\frac{\textrm{d}}{\textrm{d}\xi}\int_{-\infty}^\xi(1-\tau)(\ell-\xi)^2\,\textrm{d}F_L(\ell) \right|_{\xi=\xp}+\left.\frac{\textrm{d}}{\textrm{d}\xi}\int_\xi^\infty\tau(\ell-\xi)^2\,\textrm{d}F_L(\ell)\right|_{\xi=\xp}\\
&=2(1-\tau)\int_{-\infty}^{\xp}(\xp-\ell)\,\textrm{d}F_L(\ell) +2\tau\int_{\xp}^\infty(\xp-\ell)\,\textrm{d}F_L(\ell).
\end{aligned}    
\end{displaymath}
It then follows for $u^+=\max\{u,0\}$ and $u^-=\min\{u,0\}$  that
\begin{displaymath}
\tau=\frac{\E(L-\xp)^-}{\E(L-\xp)}=\frac{\int_{-\infty} ^{\xp}(\xp-\ell) \,\textrm{d}F_L(\ell)}{\int_\mathbb{R}(\xp-\ell)\,\textrm{d}F_L(\ell)},
\end{displaymath}
implying that the expectile corresponds to the ratio of the average deviation of $Y$ below $\xp$ to the overall average deviation \cite{kuan2009assessing}. As such, evaluating \pc{keating2002universal} omega ratio at the expectile yields $\Omega_Y(\xp)=(1-\tau)/\tau$. See \cn{remillard2013statistical}, \cn{bellini2017risk} and \cn{bellini2018expectiles} for more links between expectiles and expected gain-loss ratios.

Expectiles also relate to quantiles in several aspects. First, although they both characterize the entire distribution, only the expectile function is continuous and monotonically increasing on $\tau$ for any distribution. Second, there is a straightforward link between expectiles and quantiles:
\begin{equation}\label{eq:tau equivalence}
\tau(\alpha)=\xp^{-1}(q_\alpha)=\frac{\int_{-\infty}^{q_\alpha}|\ell-q_\alpha| \,\mathrm{d}F_L(\ell)}{\int_\mathbb{R}|\ell-q_\alpha|\,\mathrm{d}F_L(\ell)},
\end{equation}
so that $\textrm{XP}_{\tau(\alpha)}=q_\alpha$ \cite{yao1996asymmetric}. Third, it is possible to estimate quantiles using expectiles, given that both are contained within the convex hull of the distribution's support \cite{waltrup2015expectile,daouia2019quantiles}. In fact, quantiles are a strict subset of the corresponding expectiles. Conveniently, this remains true in a regression context if the distribution belongs to the location-scale family \cite{yao1996asymmetric}. 

After a slow start \cite{breckling1988m,efron1991regression,jones1994expectiles,abdous1995relating,yao1996asymmetric}, the literature on expectiles has recently gained traction \cite{martin2014expectiles,bellini2017risk,kratschmer2017statistical,daouia2018estimation,daouia2020tail,girard2021extreme}. The primary reason is that the expectile is the only coherent risk measure that allows straightforward backtesting due to elicitability \cite{bellini2015elicitable,ziegel2016coherence,daouia2023expectile}.

Coherent risk measures are desirable because they satisfy the axioms of monotonicity, translation invariance, subadditivity, and positive homogeneity. Monotonicity reflects that, if one position is less risky than another, the risk measure should assign a lower risk value. The translation invariance axiom states that adding a constant amount to the position's payout should increase riskiness by exactly that amount. The subadditivity axiom ensures that there are diversification gains in pooling risks, whereas positive homogeneity dictates that multiplying the position by a positive constant should scale the risk measure by the same constant. See \cn{mcneil2015quantitative} for more details. Mathematical properties aside, we must always keep in mind that, in practice, we have to assess risk measures empirically through estimates and/or forecasts. This is exactly the idea of elicitability, which requires the feasibility of backtesting risk measures through minimization of expected scores \cite{gneiting2011,bellini2015elicitable,ziegel2016coherence}.

\subsection{Comparison between expectiles and quantile-based risk measures}

In this section, we compare expectiles with traditional quantile-based risk measures. We start with the value-at-risk measure $\operatorname{VaR}_\alpha$ at level $\alpha$, which reads
\begin{displaymath}
\operatorname{VaR}_\alpha=q_\alpha=F_L^{-1}(\alpha)=\inf\{\ell\in\mathbb{R}:\,F_L(\ell)\ge\alpha\},
\end{displaymath}
where $q_\alpha$ is the quantile function at level $\alpha\in(0,1)$. By definition, quantiles automatically satisfy the axioms of monotonicity, translation invariance, and positive homogeneity. In addition, they are elicitable for strictly increasing distribution functions \cite{THOMSON1979360,saerens2000building}. Unfortunately, value at risk does not account for the expected loss and diversification gains \cite{danielsson2001academic}.

\cn{artzner1999coherent}, \cn{acerbi2001expected}, \cn{rockafellar2002conditional} argue that the expected shortfall (ES), as defined by the expected loss given that it exceeds the value at risk, addresses both weaknesses of the VaR measure. For any integrable loss $L$ with distribution $F_L$, the ES at level $\alpha\in(0,1)$ is
\begin{displaymath}
\mathrm{ES}_\alpha=\E\left[L\mid L>\operatorname{VaR}_\alpha(L)\right].
\end{displaymath}
In particular, ES is a severity-based risk measure that belongs to the class of spectral risk measures \cite{acerbi2002spectral}. However, it fails in two accounts. First, ES is not elicitable by itself, so that backtesting is far from straightforward \cite{gneiting2011}. Second, it yields very imprecise estimates in finite samples for large values of $\alpha$ \cite{hull2014shortfalls}, especially for heavy-tailed loss distributions \cite{yamai2002comparative}.

Both value at risk and expected shortfall depend heavily on the shape of the tail, though. While the ES is too conservative because it is conditional only on tail events, the VaR is too lenient because it does not account for their severity. As such, quantile-based risk measures either underestimate or overestimate the risk exposure of a position \cite{kuan2009assessing,daouia2019extremiles}.

\cn{bellini2017risk} interpret the expectile risk measure as the amount of capital that should be added to a position to produce a sufficiently high expected gain–loss ratio. As the only risk measure that meets the conditions for coherence, law invariance and elicitability \cite{ziegel2016coherence}, expectiles are extremely convenient for modeling, forecasting and backtesting purposes. In particular, the key advantage of expectiles over VaR and ES is the fact that ALS estimation uses the available data more efficiently, by exploiting both the severity and probability of tail events \cite{daouia2018estimation}. Both VaR and ES estimates disregard the shape of the loss distribution to the left of the corresponding quantile. In addition, the precision of the ES estimates depends heavily on $\alpha$, sample size, and tail thickness. This obviously poses a problem for establishing capital requirements in the case of highly volatile and heavy-tailed asset returns \cite{yamai2002comparative,hull2014shortfalls}.

For $\tau=\alpha>1/2$, expectiles are always below both value-at-risk and expected-shortfall measures for every sample size, regardless of tail thickness. However, this ignores the expectile-quantile mapping in \eqref{eq:tau equivalence}. For instance, the first percentile of the standard Gaussian distribution is close to its expectile at $\tau=0.00145$ \cite{bellini2017risk,nolde2017elicitability}. \cn{chen2018exactitude} shows that, if $\tau(\alpha)$ is such that $\textrm{XP}_{\tau(\alpha)}=\textrm{VaR}_\alpha$, then 
\begin{equation}
\label{eq:expectile as expected shortfall}
\tau(\alpha)=\frac{\alpha\,(\textrm{ES}_\alpha-\textrm{VaR}_\alpha)}{\textrm{VaR}_\alpha+2\alpha\,(\textrm{ES}_\alpha-\textrm{VaR}_\alpha)}
\end{equation}
provided that the loss distribution has mean zero. Straightforward manipulations then yield 
\begin{equation}
\label{eq:expectile-based expected shortfall}
\textrm{ES}_{\alpha}=\textrm{VaR}_{\alpha}\left(1+\frac{1}{\alpha(\Omega_{\alpha}-1)}\right).
\end{equation}
This makes the connection with the omega ratio \cite{taylor2022forecasting}. Solving for $\Omega_\alpha$ gives way to
\begin{equation}
\label{eq:omega value-at-risk}
\Omega_\alpha=1+\frac{\textrm{VaR}_{\alpha}}{\alpha(\textrm{ES}_{\alpha}-\textrm{VaR}_{\alpha})},
\end{equation}
allowing us to compute the expected gain-loss ratio as a function of the quantile for some fixed level $\alpha$. In addition, it makes clear that, for a fixed $\alpha$, we should expect the gain-loss ratio to increase (or decrease) as the gap between expected shortfall and value at risk shrinks (or enlarges, respectively). As such, we can infer by means of the omega ratio whether a given value-at-risk model is conservative (or permissive).

In summary, expectiles offer a different perspective of the loss distribution than quantile-based risk measures. However, the literature on ALS estimation is mostly under random sampling. Only a few studies address the estimation of conditional expectiles in a time-series context \cite{taylor2008estimating,kuan2009assessing,bellini2017risk,girard2021extreme}. In the next section, we discuss a two-step estimator of the conditional expectile in a similar context to \cn{francq2015risk}. In particular, due to translation invariance and positive homogeneity, it is straightforward to compute conditional expectiles in conditional scale models by first estimating the conditional volatility of asset returns and then computing the unconditional expectiles of their standardized residuals.

\subsection{Conditional expectiles}\label{sec:model}

As expectiles are coherent risk measures, they are stable under affine transformations. This means that, in a location-scale model, the conditional expectile at a fixed level $\tau$ depends exclusively on the conditional mean and volatility, and of the $\tau$-th expectile of the innovation. As it turns out, asset returns typically exhibit a mean close to zero, but a highly persistent time-varying volatility with leverage effects \cite{bollerslev1994arch} that leads to skewness and heavy tails in asset returns \cite{fama1965behavior}. Accordingly, we entertain a GARCH-type approach to model continuously compounded returns $\{y_t\}$:
\begin{equation}\label{eq:conditional volatility model}
y_{t+1}=\sigma_t\,\eta_{t+1},\qquad\qquad\text{with}~~\sigma_t(\bs{\theta}_0)=\sigma(y_t,y_{t-1},\ldots;\,\bs{\theta}_0)~~\text{for}~~t\in\mathbb{Z},
\end{equation}
where $\{\eta_t\}$ is a sequence of iid random variables with zero mean and unit variance, independent of past returns (i.e., $y_s\perp\!\!\!\perp\eta_t$ for $s<t$), and $\bs{\theta}_0\in\mathbb{R}^m$ is a vector of model parameters. The sequence $\{y_t\}$ is a strictly stationary and ergodic solution to \eqref{eq:conditional volatility model}. This setting is very general, nesting the most popular GARCH-type models in the literature. For example, \pc{bollerslev1986generalized} GARCH(1,1) model is such that $\sigma_t^2=\omega_0+\alpha_0\,y_t^2+\beta_0\,\sigma_{t-1}^2$, where $\bs{\theta}_0=(\omega_0,\alpha_0,\beta_0)'$.

Given our interest in expectiles, we henceforth assume that log-returns $y_{t+1}$ are integrable and follow a strictly stationary process adapted to its natural filtration $\{\mathcal{F}_t\}$. The conditional expectile at level $\tau$ then reads
\begin{equation}\label{eq:conditional expectile}
\xp(y_{t+1}|\mathcal{F}_t)=\sigma_t(\bs{\theta}_0)\,\xp^\eta,
\end{equation}
where $\xp^\eta$ is the innovation expectile for $\tau\in(0,1)$.

\section{Estimation of conditional expectiles}\label{sec:estimation}

If we could observe the true innovations, their empirical expectile at the level $\tau$ would solve
\begin{displaymath}
    \widehat\xi_n^{\,0}=\underset{\xi\in\mathbb R}{\arg\min}\,\frac{1}{2n}\sum_{t=1}^n\big|\tau-\mathbf 1(\eta_t<\xi)\big|\,(\eta_t-\xi)^2.
\end{displaymath}
Equivalently, $\widehat\xi_n^{\,0}$ is the unique solution of $n^{-1}\sum_{t=1}^n\phi_\tau(\eta_t-\xi)=0$, where $\phi_\tau(u)=u\,w_\tau(u)$ with $w_\tau(u)=\big|\mathbf 1(u<0)-\tau\big|$ for $u\in\mathbb R$. This estimator is strongly consistent as long as the first moment is finite \cite{holzmann2016expectile,kratschmer2017statistical}. Unfortunately, this estimator is infeasible in our case as we do not observe the true innovations. We therefore proceed in two steps as in \cn{francq2015risk}. We first estimate $\bs\theta_0$ in the conditional variance specification by quasi-maximum likelihood and then estimate the expectile using the standardized residuals 
\begin{displaymath}
\widehat\eta_t=\frac{y_t}{\widetilde\sigma_t(\widehat{\bs\theta}_n)},\qquad t=1,\ldots,n,
\end{displaymath}
where $\widetilde\sigma_t$ considers arbitrary fixed initial values.

Let $h$ denote the instrumental density in the QML estimation, $g(y,s)=\log\big(s^{-1}h(y/s)\big)$, and
\begin{displaymath}
    \widetilde G_n(\bs\theta)=\frac{1}{n}\sum_{t=1}^n g\big(y_t,\widetilde\sigma_t(\bs\theta)\big).
\end{displaymath}
The first-step QML estimator is then $\widehat{\bs\theta}_n=\underset{\bs\theta\in\Theta}{\arg\max}\,\widetilde G_n(\bs\theta)$. In addition, for generic $(\bs\theta,\xi)$, let $\widetilde Q_n(\xi,\bs\theta)=\frac1n\sum_{t=1}^n\widetilde\psi_{t,\tau}(\bs\theta,\xi)$, with $\widetilde\psi_{t,\tau}(\bs\theta,\xi)=\phi_\tau\Big(\frac{y_t}{\widetilde\sigma_t(\bs\theta)}-\xi\Big)$. For every $\bs\theta\in\Theta$, let $\xpb{\bs\theta}$ denote the unique zero of $\widetilde Q_n(\cdot,\bs\theta)$, which we estimate by $\widehat\xi_n:=\xpb{\widehat{\bs\theta}_n}$ using the standardized residuals.

\begin{theorem}[Consistency] \label{thm:consistency_expectiles}
Let Assumptions~\ref{ass:innovation_process} to~\ref{ass:differentiability_instrumental} hold with $r=2$ in Assumption~\ref{ass:moments}. It then follows that $\widehat{\bs\theta}_n\asconv\bs\theta_0$ and $\widehat\xi_n\asconv\xi_0=\xp^\eta$.
\end{theorem}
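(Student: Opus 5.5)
The plan splits the argument into two stages: strong consistency of $\widehat{\bs\theta}_n$, then strong consistency of the expectile estimator $\widehat\xi_n$.

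\textbf{First step.} For $\widehat{\bs\theta}_n\asconv\bs\theta_0$ I would not reprove anything. The Assumptions are designed to match the standard QML framework for conditional scale models used by \cn{francq2015risk}: identifiability, compactness of $\Theta$, stationarity and ergodicity, and the asymptotic irrelevance of the initial values in $\widetilde\sigma_t$. I would therefore invoke their consistency result for the first-step estimator directly.

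\textbf{Second step, reduction.} For $\widehat\xi_n$, the key observation is that $\xi\mapsto\widetilde Q_n(\xi,\bs\theta)$ is continuous and strictly decreasing, because $\phi_\tau(u)=u\,w_\tau(u)$ is strictly increasing with slope at least $\min(\tau,1-\tau)>0$. Fix $\varepsilon>0$. It then suffices to show that, almost surely for all large $n$,
\begin{displaymath}
\widetilde Q_n(\xi_0-\varepsilon,\widehat{\bs\theta}_n)>0>\widetilde Q_n(\xi_0+\varepsilon,\widehat{\bs\theta}_n),
\end{displaymath}
since the unique zero $\widehat\xi_n$ must then lie in $(\xi_0-\varepsilon,\xi_0+\varepsilon)$. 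By the definition of the expectile and strict monotonicity, $\E\phi_\tau(\eta_t-\xi_0\mp\varepsilon)$ is strictly positive (respectively, negative). The ergodic theorem, with $\E|\eta_t|<\infty$, gives $n^{-1}\sum_t\phi_\tau(\eta_t-\xi_0\mp\varepsilon)\to\E\phi_\tau(\eta_t-\xi_0\mp\varepsilon)$ almost surely. It therefore remains to show that
\begin{displaymath}
\sup_{|\xi-\xi_0|\le\varepsilon}\bigl|\widetilde Q_n(\xi,\widehat{\bs\theta}_n)-Q_n^0(\xi)\bigr|\asconv0,\qquad Q_n^0(\xi)=n^{-1}\textstyle\sum_t\phi_\tau(\eta_t-\xi).
\end{displaymath}

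\textbf{Second step, key bound.} The function $\phi_\tau$ is Lipschitz with constant $\max(\tau,1-\tau)\le1$. Hence the difference above is bounded by
\begin{displaymath}
n^{-1}\sum_t|\widehat\eta_t-\eta_t|\le n^{-1}\sum_t|\eta_t|\,\frac{|\sigma_{t-1}(\bs\theta_0)-\widetilde\sigma_{t-1}(\widehat{\bs\theta}_n)|}{\widetilde\sigma_{t-1}(\widehat{\bs\theta}_n)},
\end{displaymath}
uniformly in $\xi$. I would split this bound into two pieces.
\begin{itemize}
\item[(a)] Replacing $\widetilde\sigma$ by $\sigma$ at $\widehat{\bs\theta}_n$. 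Here I would use the assumption that $\sup_{\bs\theta}|\sigma_t-\widetilde\sigma_t|\le K\rho^t$ with $K$ having suitable moments, together with $\widetilde\sigma_t\ge\underline\omega>0$. A Borel--Cantelli/Cesàro argument then makes $n^{-1}\sum_t|\eta_t|K\rho^t\to0$ almost surely.
\item[(b)] The term $n^{-1}\sum_t|\eta_t|\,|\sigma_{t-1}(\bs\theta_0)/\sigma_{t-1}(\widehat{\bs\theta}_n)-1|$. For a neighbourhood $V$ of $\bs\theta_0$, define $D_t(V)=\sup_{\bs\theta\in V}|\sigma_{t-1}(\bs\theta_0)/\sigma_{t-1}(\bs\theta)-1|$. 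The variable $\eta_t$ is independent of $D_t(V)$, and $D_t(V)$ is integrable under the moment assumption with $r=2$. The ergodic theorem gives $\limsup_n$ of the term $\le\E|\eta_1|\,\E D_t(V)$ almost surely once $\widehat{\bs\theta}_n\in V$. Finally, dominated convergence with continuity of $\bs\theta\mapsto\sigma_t(\bs\theta)$ gives $\E D_t(V)\to0$ as $V$ shrinks to $\{\bs\theta_0\}$.
\end{itemize}

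\textbf{Main obstacle.} I expect step (b) to be the main obstacle. It requires a uniform-in-neighbourhood integrable envelope for $\sigma_t(\bs\theta_0)/\sigma_t(\bs\theta)$, which is exactly where the moment condition with $r=2$ and the lower bound on $\sigma$ must be used. I would first try to obtain this envelope from the assumed bound $\E\sup_{\bs\theta\in V}(\sigma_t(\bs\theta_0)/\sigma_t(\bs\theta))^{r}<\infty$, as in \cn{francq2015risk}. If that assumption is not available in this form, I would derive the envelope via the mean-value theorem using $\sup_{\bs\theta}\|\partial\sigma_t/\partial\bs\theta\|/\sigma_t$.
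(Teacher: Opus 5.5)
Your proposal is correct and follows essentially the same route as the paper: you cite Francq and Zakoian (2015) for the QMLE, bracket the unique zero of the strictly decreasing map $\xi\mapsto\widetilde Q_n(\xi,\widehat{\bs\theta}_n)$ at $\xi_0\pm\varepsilon$, and use the Lipschitz bound on $\phi_\tau$ to split the error into an initialization term (the paper handles it via Lemma~\ref{lem:approx}) and a parameter-estimation term. The only difference is in your step (b). You use shrinking neighbourhoods and dominated convergence, whereas the paper uses the mean-value bound $C\|\widehat{\bs\theta}_n-\bs\theta_0\|\,n^{-1}\sum_t|\eta_t|B_t$; both arguments work, and the envelope you were unsure about is supplied directly by Assumption~\ref{ass:moments}, since $\sup_{\bs\theta\in V}|A_t(\bs\theta)|\le B_t$ with $\E B_t^2<\infty$.
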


Consistency follows under standard conditions. Assumption~\ref{ass:innovation_process} requires that innovations are independent and identically distributed (iid) with mean zero and unit variance, and that the innovation distribution is continuous at the expectile $\xi_0$. Assumptions~\ref{ass:stationarity_mixing} and~\ref{ass:compactness} dictate respectively that $y_t$ is a stationary and ergodic process, whereas $\bs\theta_0$ lies in the interior of a compact parameter space. Assumption~\ref{ass:volatility_process} constrains the volatility process to secure identification and differentiability. Assumption~\ref{ass:approximation} ensures that we can approximate the volatility process using a finite history. Assumption~\ref{ass:moments} requires moment conditions of order $r\ge2$. Finally, the identification and smoothness conditions in Assumptions~\ref{ass:identification_instrumental} and~\ref{ass:differentiability_instrumental} automatically hold for the Gaussian quasi-likelihood. See Appendix~\ref{sec:proof} for more details.

Next, we quantify how replacing the innovations by standardized residuals affects the estimation of the unconditional expectile in the second step. At the true parameter values, let $\bs{D}_t=\partial_{\bs\theta}\log\sigma_t(\bs\theta_0)$, $\bs{J}_0=\E(\bs{D}_t)$, $\psi_t=\phi_\tau(\eta_t-\xi_0)$, and $\Psi_0=\tau\{1-F_\eta(\xi_0)\}+(1-\tau)F_\eta(\xi_0)$. Define also $a(u)=\left.\frac{\partial}{\partial s}\,g(u,s)\right|_{s=1}$, $\bs{s}_t=a(\eta_t)\bs{D}_t$, $\bs{H}_0=-\E\left[\partial_{\bs\theta\bs\theta'}^2g\big(y_t,\sigma_t(\bs\theta_0)\big)\right]$, $\bs\Sigma_s=\E(\bs{s}_t\bs{s}_t')$, $\bs\sigma_{s\psi}=\E(\bs{s}_t\psi_t)$, $\bs\Sigma_\theta=\bs{H}_0^{-1}\bs\Sigma_s\bs{H}_0^{-1}$, $\bs\sigma_{\psi\theta}=\bs{H}_0^{-1}\bs\sigma_{s\psi}$, and $\sigma_\psi^2=\E(\psi_t^2)$. Asymptotic normality ensues if we strengthen the moment conditions: namely, by setting $r=4$ in Assumption~\ref{ass:moments} and by imposing Assumption~\ref{ass:qml_moments} on the moments of the QML score and curvature. 

\begin{theorem}[Asymptotic normality] \label{thm:asymptotic_expectiles}
Let Assumptions~\ref{ass:innovation_process} to~\ref{ass:qml_moments} hold with $r=4$ in Assumption~\ref{ass:moments}. It then follows that $\sqrt{n}(\widehat\xi_n-\xi_0)\dconv\mathcal N(0,V_\xi)$, where
\begin{displaymath}
 V_\xi=\frac{\sigma_\psi^2}{\Psi_0^2}+\xi_0^2\bs{J}_0'\bs\Sigma_\theta\bs{J}_0-2\frac{\xi_0}{\Psi_0}\,\bs{J}_0'\bs\sigma_{\psi\theta}.
\end{displaymath}
\end{theorem}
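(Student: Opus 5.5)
The plan is to derive a joint Bahadur representation for $(\widehat{\bs\theta}_n,\widehat\xi_n)$ and then apply a martingale central limit theorem to the stacked vector $(\bs s_t',\psi_t)'$.

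\textbf{First step.} From standard QML theory under Assumptions~\ref{ass:innovation_process} to~\ref{ass:qml_moments}, I will use the expansion
\begin{displaymath}
\sqrt n(\widehat{\bs\theta}_n-\bs\theta_0)=\bs H_0^{-1}\frac{1}{\sqrt n}\sum_{t=1}^n\bs s_t+o_p(1).
\end{displaymath}
This follows from a Taylor expansion of the score $\partial_{\bs\theta}\widetilde G_n$ around $\bs\theta_0$. The approximation condition, Assumption~\ref{ass:approximation}, makes the initial-value effect $\widetilde\sigma_t$ versus $\sigma_t$ negligible at the $\sqrt n$ scale. Consistency comes from Theorem~\ref{thm:consistency_expectiles}.

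\textbf{Second step.} I will work with $Q_n(\xi,\bs\theta)=n^{-1}\sum_t\phi_\tau(y_t/\sigma_t(\bs\theta)-\xi)$ and first show that $\sup_{\xi,\bs\theta}|\widetilde Q_n-Q_n|=o_p(n^{-1/2})$, again via Assumption~\ref{ass:approximation} and the Lipschitz property of $\phi_\tau$. Since $\widehat\xi_n$ solves $Q_n(\widehat\xi_n,\widehat{\bs\theta}_n)=o_p(n^{-1/2})$, I decompose
\begin{displaymath}
0=Q_n(\xi_0,\bs\theta_0)+\big[\bar Q(\widehat\xi_n,\widehat{\bs\theta}_n)-\bar Q(\xi_0,\bs\theta_0)\big]+R_n+o_p(n^{-1/2}),
\end{displaymath}
where $\bar Q(\xi,\bs\theta)=\E\,\phi_\tau(\eta_t\sigma_{t-1}(\bs\theta_0)/\sigma_{t-1}(\bs\theta)-\xi)$ and $R_n$ is a centered empirical-process remainder. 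The function $\phi_\tau$ is Lipschitz and piecewise linear with derivative $w_\tau$, and $F_\eta$ is continuous at $\xi_0$. Hence $\bar Q$ is differentiable at $(\xi_0,\bs\theta_0)$, with $\partial_\xi\bar Q=-\E\,w_\tau(\eta_t-\xi_0)=-\Psi_0$ and
\begin{displaymath}
\partial_{\bs\theta}\bar Q=-\E\big[w_\tau(\eta_t-\xi_0)\,\eta_t\,\bs D_t\big].
\end{displaymath}
By independence of $\eta_t$ and $\bs D_t$, this factorizes as $-\E[w_\tau(\eta_t-\xi_0)\eta_t]\,\bs J_0$. The first-order condition $\E\,\phi_\tau(\eta_t-\xi_0)=0$ gives $\E[w_\tau(\eta_t-\xi_0)\eta_t]=\xi_0\Psi_0$, so $\partial_{\bs\theta}\bar Q=-\xi_0\Psi_0\bs J_0$. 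Solving the decomposition then yields
\begin{displaymath}
\sqrt n(\widehat\xi_n-\xi_0)=\frac{1}{\Psi_0}\frac{1}{\sqrt n}\sum_t\psi_t-\xi_0\bs J_0'\sqrt n(\widehat{\bs\theta}_n-\bs\theta_0)+o_p(1).
\end{displaymath}

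\textbf{Main obstacle: stochastic equicontinuity.} The hardest step is showing $R_n=o_p(n^{-1/2})$, that is, stochastic equicontinuity of the non-smooth map
\begin{displaymath}
(\xi,\bs\theta)\mapsto n^{-1/2}\sum_t\big\{\phi_\tau(\eta_t\sigma_{t-1}(\bs\theta_0)/\sigma_{t-1}(\bs\theta)-\xi)-\bar Q\big\}
\end{displaymath}
over $n^{-1/2}$-shrinking neighbourhoods, for dependent data. My first attempt will exploit the specific structure of $\phi_\tau$: write $\phi_\tau(u)=\tau u-(1-2\tau)\,u\mathbf 1(u<0)$ up to sign conventions. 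The linear part is handled by a mean-value expansion using moments of $\sup_{\bs\theta}\|\partial_{\bs\theta}\log\sigma_t\|$ from Assumption~\ref{ass:moments} with $r=4$. For the kink part, the summands minus their conditional expectations given $\mathcal F_{t-1}$ form martingale differences. Their conditional variance is bounded by $\E[(\text{increment})^2\mathbf 1(\text{sign change})]$, which is $O(\|\delta\|^2)$, so the sums vanish. Uniformity will come from a chaining or bracketing argument over a finite grid, using the monotonicity of the indicator in $\xi$ and the Lipschitz continuity of $\phi_\tau$. The conditional-expectation terms are then smooth by continuity of $F_\eta$ near $\xi_0$ and the moment bounds.

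\textbf{Conclusion.} Substituting the QML expansion gives the representation
\begin{displaymath}
\sqrt n(\widehat\xi_n-\xi_0)=n^{-1/2}\sum_t\big(\Psi_0^{-1}\psi_t-\xi_0\bs J_0'\bs H_0^{-1}\bs s_t\big)+o_p(1).
\end{displaymath}
The summands are stationary ergodic martingale differences with respect to $\mathcal F_t$, since $\E(\psi_t\mid\mathcal F_{t-1})=0$ and $\E(a(\eta_t))=0$, and they have finite second moments under $r=4$ and Assumption~\ref{ass:qml_moments}. Billingsley's martingale CLT then gives asymptotic normality, and the variance of the summand expands to $\sigma_\psi^2/\Psi_0^2+\xi_0^2\bs J_0'\bs\Sigma_\theta\bs J_0-2(\xi_0/\Psi_0)\bs J_0'\bs\sigma_{\psi\theta}=V_\xi$.
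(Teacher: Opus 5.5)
Your architecture (QML linearization, a Bahadur representation for $\widehat\xi_n$, a martingale CLT for the combined influence function) is the right one, and your variance computation is correct. There is, however, one missing step, and your handling of the non-smooth part is much heavier than it needs to be.

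\emph{Missing step.} You prove stochastic equicontinuity only over $n^{-1/2}$-shrinking neighbourhoods. Yet you evaluate the remainder $R_n$ at $(\widehat\xi_n,\widehat\theta_n)$, knowing only that $\widehat\xi_n$ is consistent. Nothing in your outline delivers $\sqrt n(\widehat\xi_n-\xi_0)=O_p(1)$, so ``solving the decomposition'' is circular as written. You need one of two fixes. The first is the stronger Pakes--Pollard form of equicontinuity, with remainder $o_p(n^{-1/2}+\|\theta-\theta_0\|+|\xi-\xi_0|)$ over $o(1)$-neighbourhoods; the rate then follows from $\Psi_0>0$. The second is a separate rate argument, which is what the paper does. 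It evaluates the local expansion at the two points $(\widehat\theta_n,\xi_0\pm M_\xi/\sqrt n)$, which do lie in the $n^{-1/2}$-neighbourhood, and obtains $\sqrt n\,\widetilde Q_n(\xi_0+v/\sqrt n,\widehat\theta_n)=n^{-1/2}\sum_t\psi_t-\xi_0\Psi_0J_0'\sqrt n(\widehat\theta_n-\theta_0)-v\Psi_0+o_p(1)$. For large $M_\xi$ this is positive at $v=-M_\xi$ and negative at $v=M_\xi$ with probability at least $1-2\varepsilon$. Since $\xi\mapsto\widetilde Q_n(\xi,\widehat\theta_n)$ is continuous and strictly decreasing, its zero is trapped between the two points. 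You already have every ingredient for this.

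\emph{Different route.} You pass through a population criterion $\bar Q$ and treat the remainder as an empirical process that needs centering, martingale variance bounds and a grid argument. The paper never centers anything. It uses the identity $|\phi_\tau(u+h)-\phi_\tau(u)-w_\tau(u)h|\le c_\tau|h|\mathbf 1(|u|\le|h|)$ with $u=\eta_t-\xi_0$ and $h_t=[A_t(\theta)-1]\eta_t-(\xi-\xi_0)$. Combined with $|h_t|\le C_M n^{-1/2}(1+B_t|\eta_t|)$ uniformly over the neighbourhood, this dominates the kink remainder in absolute value by a single parameter-free average, $C_M n^{-1}\sum_t(1+B_t|\eta_t|)\mathbf 1\{|\eta_t-\xi_0|\le C_M n^{-1/2}(1+B_t|\eta_t|)\}$. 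Its mean vanishes by dominated convergence, using only $\Pr(\eta_t=\xi_0)=0$. The linear part keeps the sample coefficients $n^{-1}\sum_t w_t\eta_t D_t$ and $n^{-1}\sum_t w_t$, which the ergodic theorem sends to $\xi_0\Psi_0 J_0$ and $\Psi_0$.

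So your ``main obstacle'' disappears: the kink term is negligible in $L^1$, uniformity comes for free, and no separate differentiability of $\bar Q$ is needed. Your route can still be completed, since pointwise $L^2$ bounds plus the fact that the remainder is Lipschitz in the local parameter with an $O_p(1)$ constant give uniformity over a finite grid. Note, though, that only continuity of $F_\eta$ \emph{at} $\xi_0$ is assumed, so the conditional-expectation terms are small rather than ``smooth''.

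In the other direction, applying Billingsley's CLT directly to the scalar influence function needs only second moments. That is leaner than the paper's joint Cram\'er--Wold/Lindeberg argument, which the paper keeps because it reuses the joint limit for the conditional expectile.
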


The first term in $V_\xi$ is the variance that would arise if we could observe the true innovations. The remaining terms capture the effects of estimating the parameters in the conditional variance. The latter involves not only the sampling error in the estimation of $\bs\theta_0$, but also how it correlates with the sampling error in the estimation of the unconditional expectile $\xi_0$.

To simplify notation, denote by $c_{n+1}=\sigma_{n+1}(\bs\theta_0)\xi_0$ the conditional expectile $\xp(y_{t+1}|\mathcal{F}_t)$ at time $n+1$ given the current state at time $n$. To conduct inference on $\widehat{c}_{n+1}=\widetilde\sigma_{n+1}(\widehat{\bs\theta}_n)\widehat\xi_n$, let $D_{n+1}=\partial_{\bs\theta}\log\sigma_{n+1}(\bs\theta_0)$ and, for the finite-memory approximation in Assumption~\ref{ass:forecast}, let $D_{n+1}^{[m]}=\partial_{\bs\theta}\log\sigma_{n+1}^{[m]}(\bs\theta_0)$. Choose a deterministic sequence $m_n$ such that $m_n\to\infty$, $\frac{m_n}{n}\to0$, and define $\mathcal H_n=\sigma(\eta_{n-m_n+1}, \ldots,\eta_n)$. Let $d_{BL}$ denote the bounded-Lipschitz distance between probability laws on $\mathbb R$.

\begin{theorem}[Gaussian prediction intervals]\label{thm:conditional_expectiles}
Let Assumptions~\ref{ass:innovation_process} to~\ref{ass:qml_moments} and~\ref{ass:forecast} hold with $r=4$ in Assumption~\ref{ass:moments}, and define
\begin{eqnarray*}
    V_{n+1}&=&\sigma_{n+1}^2(\bs\theta_0)\left[\Psi_0^{-2}\sigma_\psi^2+\xi_0^2(\bs{D}_{n+1}-\bs{J}_0)'\bs\Sigma_\theta (\bs{D}_{n+1}-\bs{J}_0)+2\xi_0\Psi_0^{-1}(\bs{D}_{n+1}-\bs{J}_0)'\bs\Sigma_{\psi\theta}\right]\\
    V_{n+1}^{[m_n]}&=&\big(\sigma_{n+1}^{[m_n]}(\bs\theta_0)\big)^2\Big[\Psi_0^{-2}\sigma_\psi^2 +\xi_0^2(\bs{D}_{n+1}^{[m_n]}-\bs{J}_0)'\bs\Sigma_\theta(\bs{D}_{n+1}^{[m_n]}-\bs{J}_0)+2\xi_0\Psi_0^{-1} (\bs{D}_{n+1}^{[m_n]}-\bs{J}_0)'\bs\Sigma_{\psi\theta}\Big].
\end{eqnarray*}
It then follows that $d_{BL}\Big(\mathcal{L}\left\{\sqrt{n}(\widehat{c}_{n+1}-c_{n+1})\,\big|\,\mathcal{H}_n\right\},\mathcal{N}(0,V_{n+1}^{[m_n]})\Big)\pconv 0$. Moreover, it also holds that $V_{n+1}^{[m_n]}-V_{n+1}\pconv 0$, so that $d_{BL}\Big(\mathcal{L}\left\{\sqrt{n}(\widehat{c}_{n+1}-c_{n+1})\,\big|\, \mathcal{H}_n\right\},\mathcal{N}(0,V_{n+1})\Big)\pconv 0$.
\end{theorem}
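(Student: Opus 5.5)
The plan is to decompose the prediction error by a first-order expansion around $(\bs\theta_0,\xi_0)$:
\begin{displaymath}
\widehat c_{n+1}-c_{n+1}=\sigma_{n+1}(\bs\theta_0)(\widehat\xi_n-\xi_0)+\xi_0\big\{\widetilde\sigma_{n+1}(\widehat{\bs\theta}_n)-\sigma_{n+1}(\bs\theta_0)\big\}+R_n ,
\end{displaymath}
where $R_n$ collects the cross term $(\widetilde\sigma_{n+1}(\widehat{\bs\theta}_n)-\sigma_{n+1}(\bs\theta_0))(\widehat\xi_n-\xi_0)$. We then write $\widetilde\sigma_{n+1}(\widehat{\bs\theta}_n)-\sigma_{n+1}(\bs\theta_0)=\sigma_{n+1}(\bs\theta_0)\bs D_{n+1}'(\widehat{\bs\theta}_n-\bs\theta_0)+o_p(n^{-1/2})$. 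This uses the mean value theorem together with the approximation condition in Assumption~\ref{ass:approximation}, which makes the effect of the initial values negligible. Next we invoke the Bahadur representations obtained in the proof of Theorem~\ref{thm:asymptotic_expectiles}:
\begin{displaymath}
\sqrt n(\widehat{\bs\theta}_n-\bs\theta_0)=\bs H_0^{-1}n^{-1/2}\sum_t\bs s_t+o_p(1),\qquad
\sqrt n(\widehat\xi_n-\xi_0)=\Psi_0^{-1}n^{-1/2}\sum_t\psi_t-\xi_0\bs J_0'\sqrt n(\widehat{\bs\theta}_n-\bs\theta_0)+o_p(1).
\end{displaymath}
Substituting these gives
\begin{displaymath}
\sqrt n(\widehat c_{n+1}-c_{n+1})=\sigma_{n+1}(\bs\theta_0)\,n^{-1/2}\sum_{t=1}^n\Big[\Psi_0^{-1}\psi_t+\xi_0(\bs D_{n+1}-\bs J_0)'\bs H_0^{-1}\bs s_t\Big]+o_p(1).
\end{displaymath}
This is a random linear functional, with $\mathcal F_n$-measurable weights $\sigma_{n+1}(\bs\theta_0)$ and $\bs D_{n+1}$, of a martingale-difference sum. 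Its variance is exactly $V_{n+1}$, with $\bs\Sigma_{\psi\theta}=\bs\sigma_{\psi\theta}$.

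The difficulty is that the weights depend on the same innovations that drive the sum, so we cannot condition on them directly. We handle this with the finite-memory truncation. First, replace $\sigma_{n+1}$ and $\bs D_{n+1}$ by $\sigma^{[m_n]}_{n+1}$ and $\bs D^{[m_n]}_{n+1}$. These are $\mathcal H_n$-measurable, and by Assumption~\ref{ass:forecast} the replacement error is $o_p(1)$; combined with the $O_p(1)$ size of the sums, this also yields $V^{[m_n]}_{n+1}-V_{n+1}\pconv0$ by continuity. Second, split the sum as $\sum_{t\le n-m_n}+\sum_{t>n-m_n}$. The last $m_n$ terms contribute $O_p(\sqrt{m_n/n})=o_p(1)$ because they have bounded second moments under $r=4$ and Assumption~\ref{ass:qml_moments}. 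The first block is not independent of $\mathcal H_n$ either, because $\bs s_t$ depends on the infinite past through $\bs D_t$ and $\{\eta_s\}_{s\le n-m_n}$ is independent of $\mathcal H_n$, but $\bs D_t$ for $t\le n-m_n$ involves only innovations up to time $t$. So the terms with $t\le n-m_n$ are measurable with respect to $\sigma(\eta_s:s\le n-m_n)$ and hence are exactly independent of $\mathcal H_n$.

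Given $\mathcal H_n$, the first block is therefore distributed as its unconditional law, so the conditional law of the main term equals the law of $a_n'S_n$ evaluated at the $\mathcal H_n$-measurable vector $a_n=\sigma^{[m_n]}_{n+1}\big(\Psi_0^{-1},\xi_0(\bs D^{[m_n]}_{n+1}-\bs J_0)'\bs H_0^{-1}\big)'$. Here $S_n=n^{-1/2}\sum_{t\le n-m_n}(\psi_t,\bs s_t')'$ satisfies a martingale CLT with covariance built from $\sigma_\psi^2$, $\bs\sigma_{s\psi}$ and $\bs\Sigma_s$. Because bounded-Lipschitz convergence of $S_n$ is uniform over $a$ in compacts, and $a_n$ is tight, we get $d_{BL}(\mathcal L\{a_n'S_n\mid\mathcal H_n\},\mathcal N(0,V^{[m_n]}_{n+1}))\pconv0$. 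The main obstacle is passing the $o_p(1)$ remainders (Bahadur remainders, truncation errors, the last block) through the conditioning. For this we use that $\E[\min(1,|X_n|)\mid\mathcal H_n]\pconv0$ whenever $X_n\pconv0$, by Markov's inequality applied to the unconditional expectation, and that $d_{BL}$ is controlled by $\E[\min(1,|X_n|)\mid\mathcal H_n]$. The final claim follows from the triangle inequality, because $d_{BL}(\mathcal N(0,v_1),\mathcal N(0,v_2))\le|\sqrt{v_1}-\sqrt{v_2}|$ and $V^{[m_n]}_{n+1}-V_{n+1}\pconv0$.
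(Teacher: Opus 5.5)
Your proposal is correct and follows essentially the same route as the paper. You expand $\widehat c_{n+1}-c_{n+1}$ around $(\bs\theta_0,\xi_0)$, substitute the QMLE and Bahadur representations to get $\bs A_{n+1}\bs Y_n+o_p(1)$, and swap $\bs A_{n+1}$ for its $\mathcal H_n$-measurable truncation $\bs A^{[m_n]}_{n+1}$; you then drop the last $m_n$ martingale-difference terms, use independence of the first block from $\mathcal H_n$, and pass the $o_p(1)$ remainders through the conditioning via $\E[\min(2,|r_n|)\mid\mathcal H_n]$. The one detail to make explicit is that your mean-value step for $\widetilde\sigma_{n+1}(\widehat{\bs\theta}_n)$ relies on the $O_p(1)$ second-derivative envelope at time $n+1$ from Assumption~\ref{ass:forecast}, which is exactly what the paper uses to control $\mathcal R_{\sigma,n}$.
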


Theorem~\ref{thm:conditional_expectiles} strengthens the finite-history approximation in Assumption~\ref{ass:approximation} by imposing the uniform contraction in Assumption~\ref{ass:forecast} to obtain asymptotically-valid Wald confidence intervals for one-step-ahead forecasts of the conditional expectile. Although they adequately account for the first-step estimation of the scale parameters, they are infeasible. Proposition~\ref{prop:variance_estimation} in Appendix~\ref{sec:proof} provides consistent estimators for each component of the asymptotic variance, which we denote by adding a hat. We next show that the resulting Wald statistic weakly converges to a standard Gaussian distribution, enabling us to compute feasible confidence intervals for the conditional expectiles. To do so, we introduce only one additional regularity condition, which targets the plug-in estimators of the asymptotic variance components (see Assumption~\ref{ass:plugin} in Appendix~\ref{sec:proof}).

\begin{corollary}[Feasible prediction interval for conditional expectiles]\label{cor:conditional_expectile_interval}
Suppose that Assumptions~\ref{ass:innovation_process} to~\ref{ass:forecast} hold with $r=4$ in Assumption~\ref{ass:moments}, and assume that $\bs\Sigma_Y$ is positive definite. Define $\widehat {V}_{n+1}=\widetilde\sigma_{n+1}^2(\widehat{\bs\theta}_n)\Big[ \widehat\Psi_n^{-2}\widehat\sigma_\psi^2+\widehat\xi_n^2(\widehat{\bs D}_{n+1}-\widehat{\bs J}_n)'\widehat{\bs\Sigma}_\theta(\widehat{\bs D}_{n+1}-\widehat{\bs J}_n)+2\widehat\xi_n\widehat\Psi_n^{-1}(\widehat{\bs D}_{n+1}-\widehat{\bs J}_n)'\widehat{\bs\sigma}_{\psi\theta}\Big]$, where $\widehat{\bs D}_{n+1}=\partial_{\bs\theta}\log\widetilde\sigma_{n+1}(\widehat{\bs\theta}_n)$. It then follows that $\widehat {V}_{n+1}-{V}_{n+1}\pconv0$ and that
\[
 d_{BL}\left(\mathcal{L}\left\{\frac{\sqrt{n}(\widehat{c}_{n+1}-c_{n+1})}{\widehat{V}_{n+1}^{1/2}} \middle|\mathcal{\bs H}_n\right\},~\mathcal{N}(0,1)\right)\pconv0.
\]
This means that $\Pr\left\{ c_{n+1}\in\mathcal{I}_{n+1}(1-\alpha)\,\middle|\,\mathcal{H}_n\right\}\pconv 1-\alpha$, where 
\[
 \mathcal I_{n+1}(1-\alpha)=\left[\widehat{c}_{n+1}-z_{1-\alpha/2}\sqrt{\widehat{V}_{n+1}/n},~\widehat{c}_{n+1}+z_{1-\alpha/2} \sqrt{\widehat{V}_{n+1}/n}\right]
\]
with $z_{1-\alpha/2}$ denoting the $(1-\alpha/2)$-quantile of the standard Gaussian distribution.
\end{corollary}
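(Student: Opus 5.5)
The proof has two parts. First we show that $\widehat V_{n+1}-V_{n+1}\pconv0$. Then we combine this with Theorem~\ref{thm:conditional_expectiles} through a conditional Slutsky argument in the bounded-Lipschitz metric.

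For the variance, write $\widehat V_{n+1}$ as a continuous map $\Gamma$ of the vector
\[
\big(\widetilde\sigma_{n+1}(\widehat{\bs\theta}_n),\,\widehat\Psi_n,\,\widehat\sigma_\psi^2,\,\widehat\xi_n,\,\widehat{\bs D}_{n+1},\,\widehat{\bs J}_n,\,\widehat{\bs\Sigma}_\theta,\,\widehat{\bs\sigma}_{\psi\theta}\big).
\]
This map is continuous wherever $\Psi>0$, and $\Psi_0\ge\min(\tau,1-\tau)>0$. Likewise, $V_{n+1}$ is the same map evaluated at the population counterparts. The argument then proceeds component by component.

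\begin{itemize}
\item Proposition~\ref{prop:variance_estimation} gives consistency of $\widehat\Psi_n$, $\widehat\sigma_\psi^2$, $\widehat{\bs J}_n$, $\widehat{\bs\Sigma}_\theta$ and $\widehat{\bs\sigma}_{\psi\theta}$. This is where positive definiteness of $\bs\Sigma_Y$ enters: it secures invertibility of $\bs H_0$ and a nondegenerate joint covariance.
\item Theorem~\ref{thm:consistency_expectiles} gives $\widehat\xi_n\to\xi_0$.
\item The random, non-converging pieces $\widetilde\sigma_{n+1}(\widehat{\bs\theta}_n)$ and $\widehat{\bs D}_{n+1}$ need separate treatment, since their targets change with $n$. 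For these we bound
\[
|\widetilde\sigma_{n+1}(\widehat{\bs\theta}_n)-\sigma_{n+1}(\bs\theta_0)|\le|\widetilde\sigma_{n+1}-\sigma_{n+1}|(\widehat{\bs\theta}_n)+\sup_{\bs\theta\in\mathcal V}\|\partial_{\bs\theta}\sigma_{n+1}(\bs\theta)\|\,\|\widehat{\bs\theta}_n-\bs\theta_0\|.
\]
The first term vanishes by the initial-value approximation in Assumption~\ref{ass:approximation}. For the second, $\|\widehat{\bs\theta}_n-\bs\theta_0\|\to0$, and by stationarity the supremum is $O_p(1)$ under the moment bounds in Assumptions~\ref{ass:moments} and~\ref{ass:plugin}. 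The same argument, one derivative higher, handles $\widehat{\bs D}_{n+1}-\bs D_{n+1}$.
\end{itemize}

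Since the arguments of $\Gamma$ are tight and the differences vanish in probability, uniform continuity of $\Gamma$ on compacts gives $\widehat V_{n+1}-V_{n+1}\pconv0$.

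For the distributional statement, write $T_n=\sqrt n(\widehat c_{n+1}-c_{n+1})$. Theorem~\ref{thm:conditional_expectiles} gives $d_{BL}(\mathcal L\{T_n\mid\mathcal H_n\},\mathcal N(0,V_{n+1}))\pconv0$. We also need $V_{n+1}$ bounded away from zero in probability. Its first term is $\sigma_{n+1}^2\Psi_0^{-2}\sigma_\psi^2>0$, but the full bracket must remain positive. Here positive definiteness of $\bs\Sigma_Y$, the joint covariance of $(\psi_t,\bs s_t)$, ensures this, because the bracket equals $\mathrm{Var}\big(\Psi_0^{-1}\psi_t+\xi_0(\bs D_{n+1}-\bs J_0)'\bs H_0^{-1}\bs s_t\big)$ for the fixed vector $\bs D_{n+1}$. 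Also, $\sigma_{n+1}\ge\underline\omega>0$.

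Hence, given $\mathcal H_n$, $V_{n+1}^{-1/2}T_n$ is approximately $\mathcal N(0,1)$ in $d_{BL}$. This holds because Lipschitz test functions composed with scaling by a factor bounded in probability remain Lipschitz, with a controlled constant on the event that $V_{n+1}$ lies above a fixed $\delta$.

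To replace $V_{n+1}^{-1/2}$ by $\widehat V_{n+1}^{-1/2}$, note that for any $f$ with $\|f\|_{BL}\le1$,
\[
\big|E[f(T_n/\widehat V_{n+1}^{1/2})-f(T_n/V_{n+1}^{1/2})\mid\mathcal H_n]\big|\le E\big[\min\{2,|T_n|\,|\widehat V_{n+1}^{-1/2}-V_{n+1}^{-1/2}|\}\mid\mathcal H_n\big].
\]
This is the main obstacle, because $\widehat V_{n+1}$ is not $\mathcal H_n$-measurable, so conditional Slutsky is not immediate. I would handle it by taking expectations: the unconditional expectation of the right-hand side tends to zero. This follows from dominated convergence, since $T_n$ is tight (as shown in the proof of Theorem~\ref{thm:conditional_expectiles}) and $\widehat V_{n+1}^{-1/2}-V_{n+1}^{-1/2}\pconv0$. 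A nonnegative sequence with vanishing mean tends to zero in probability, and taking the supremum over a countable dense class of BL functions yields convergence of the conditional $d_{BL}$.

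The coverage statement follows because the normal distribution function is continuous. Indicators of intervals can be sandwiched between BL functions, so
\[
\Pr\{|T_n|/\widehat V_{n+1}^{1/2}\le z_{1-\alpha/2}\mid\mathcal H_n\}\pconv1-\alpha,
\]
which is exactly the event $c_{n+1}\in\mathcal I_{n+1}(1-\alpha)$.
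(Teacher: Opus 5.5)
Your proposal follows the paper's proof essentially step for step. The forecast-date pieces $\widetilde\sigma_{n+1}(\widehat{\bs\theta}_n)$ and $\widehat{\bs D}_{n+1}$ come from the finite-history bound, consistency of $\widehat{\bs\theta}_n$ and a derivative envelope. Proposition~\ref{prop:variance_estimation} supplies the remaining components. Positive definiteness of $\bs\Sigma_Y$ and the scale floor keep $V_{n+1}$ away from zero. The conditional Slutsky step is justified by taking the unconditional expectation of the conditional discrepancy, which is the paper's remark that the conditional probability has the unconditional one as its expectation.

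One step needs repair. You normalize by $V_{n+1}$ and claim that, given $\mathcal H_n$, $T_n/V_{n+1}^{1/2}$ is approximately $\mathcal N(0,1)$ because a BL function composed with $x\mapsto x/V_{n+1}^{1/2}$ stays Lipschitz. That rescaling argument moves the bound on $d_{BL}(\mathcal L\{T_n\mid\mathcal H_n\},\mathcal N(0,v))$ over to the standardized variable only when $v$ is fixed under the conditional law, i.e.\ $\mathcal H_n$-measurable. $V_{n+1}$ is not: it depends on $\sigma_{n+1}(\bs\theta_0)$ and $\bs D_{n+1}$, hence on the whole past. You caught exactly this issue for $\widehat V_{n+1}$ but missed it for $V_{n+1}$.

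The fix uses tools you already have. Standardize first by $V_{n+1}^{[m_n]}$, which is $\mathcal H_n$-measurable and is the variance in the first conclusion of Theorem~\ref{thm:conditional_expectiles}; that is why it was introduced. Since $V_{n+1}\ge\lambda_{\min}(\bs\Sigma_Y)\underline\sigma^2\Psi_0^{-2}$ and $V_{n+1}^{[m_n]}-V_{n+1}\pconv0$, $V_{n+1}^{[m_n]}$ is bounded away from zero with probability approaching one. Your rescaling bound then gives $d_{BL}(\mathcal L\{T_n/(V_{n+1}^{[m_n]})^{1/2}\mid\mathcal H_n\},\mathcal N(0,1))\pconv0$. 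Your expectation-plus-Markov argument, applied with $\widehat V_{n+1}-V_{n+1}^{[m_n]}\pconv0$, then swaps in $\widehat V_{n+1}$ directly. The bound $\min\{2,|x-y|\}$ is already uniform over $\mathrm{BL}_1$, so you do not need a countable dense class.

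A minor attribution point: positive definiteness of $\bs\Sigma_Y$ presupposes invertibility of $\bs H_0$ rather than securing it. Invertibility comes from $\kappa_h>0$ and $\bs I_0$ being positive definite. Its only real role here is the lower bound on $V_{n+1}$ above.
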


In the next section, we evaluate the finite-sample distribution of the two-step estimator, paying close attention to whether the asymptotic variance estimator accurately quantifies the uncertainty in the expectile estimates.

\section{Monte Carlo study}\label{sec:mc}

We next assess how informative our asymptotic results are for the finite-sample behavior of the two-step estimators of both conditional and unconditional expectiles. In particular, we study the effects of sample size, volatility persistence, tail thickness, and leverage effects. The experiments distinguish inference for the unconditional expectile of the innovations from that for the more economically relevant one-step-ahead conditional expectile.

We generate returns according to
\begin{equation}
    y_t=\sigma_t(\bs\theta_0)\eta_t,
\end{equation}
where $\{\eta_t\}$ is an independent and identically distributed sequence with zero mean and unit variance. The volatility process follows a GJR-GARCH(1,1) specification:
\begin{equation}
    \sigma_t^2(\bs\theta_0)=\omega_0+\alpha_0 y_{t-1}^2+\gamma_0 y_{t-1}^2\mathbf{1}(y_{t-1}<0)+\beta_0\sigma_{t-1}^2(\bs\theta_0),
    \label{eq:mc-gjr}
\end{equation}
where we calibrate $\omega_0$ to match an unconditional volatility of 20\% per year, assuming 252 trading days; $\alpha_0=0.05$; and $p_0=\alpha_0+\beta_0+\gamma_0/2$ dictates the amount of persistence under symmetric innovations. We entertain low and high levels of persistence by setting $p_0\in\{0.90,0.98\}$. In particular, we consider both symmetric and asymmetric responses to past returns: $\gamma_0=0$ and $\beta_0\in\{0.85,0.93\}$ for the standard GARCH, whereas $\gamma_0=0.08$ and $\beta_0\in\{0.81,0.89\}$ for the asymmetric GJR-GARCH specication. The latter is convenient because it isolates the incremental effect of leverage while fixing the overall persistence in volatility.

Finally, we draw innovations from a standard Gaussian distribution or from $t_\nu$ distributions with $\nu\in\{4,8\}$ degrees of freedom. We standardize the latter by $\sqrt{(\nu-2)/\nu}$ to ensure unit variance. The $t_8$ distribution provides a heavy-tailed benchmark design, whereas the $t_4$ distribution has finite variance but no finite fourth moment. It serves as a stress test that violates the sufficient conditions for QML inference.

For each volatility model, innovation distribution, and persistence level, we consider sample sizes of $n\in\{500,1000,2500,5000\}$ time-series data after a burn-in period of 500 observations. In each of the 10,000 replications, we estimate the correctly specified volatility model by Gaussian QML, imposing not only nonnegativity, but also $\alpha+\beta+\frac{\gamma}{2}\le0.999$, with $\gamma=0$ in the symmetric model. We then calculate the standardized residuals $\widehat\eta_t=y_t/\widetilde\sigma_t(\widehat{\bs\theta}_n)$ and their sample expectile $\widehat\xi_n\equiv\xpb{\widehat{\bs\theta}_n}$ at $\tau=0.05$. In addition, we also estimate the conditional expectile $c_{n+1}=\sigma_{n+1}(\bs\theta_0)\xi_0$ at time $n+1$ by $\widehat{c}_{n+1}=\widetilde\sigma_{n+1}(\widehat{\bs\theta}_n)\widehat\xi_n$. Finally, we compute the population expectiles of the innovations from the truncated-moment equations of the Gaussian and $t$ distributions.

We assess the bias and root mean squared error (RMSE) of the conditional and unconditional expectile estimators, as well as the average length and coverage rate of their nominal 95\% confidence intervals. Table~\ref{tab:mc-xi-wald-portrait} reveals that bias and RMSE drop significantly as the sample size increases from 500 to 1,000 time-series observations, with a bias becoming negligible for samples of at least 2,500 observations. This pattern holds for both volatility specifications, as well as for every persistence level and distribution, even for the $t_4$ innovations that do not meet the regularity conditions in the asymptotic theory. In addition, the coverage rates of the 95\% Wald confidence intervals range from 0.907 to 0.951, improving monotonically with sample size. Interestingly, they are slightly better for the asymmetric GARCH specification. In contrast, increasing persistence or tail thickness slightly worsens coverage, especially for smaller sample sizes. Finally, the standard deviation of the Wald statistic $Z_{\xi,n}=\sqrt{n}(\widehat\xi_n-\xi_0)/\widehat{V}_\xi^{1/2}$ is very close to one across replications.

\begin{table}[!ht]\vspace*{1.5em}
\caption{Performance of the two-step estimator of the unconditional expectile at $\tau=0.05$\\ {\footnotesize The target is the innovation expectile $\xi_0\equiv\xp^\eta$ at $\tau=0.05$, which we estimate by $\widehat\xi_n\equiv\xpb{\widehat{\bs\theta}_n}$. For each of the 10,000 Monte Carlo replications, we report the bias and root mean squared error (RMSE) of the expectile estimator, as well as the average length and empirical coverage of the 95\% confidence interval, for each experimental design. The latter considers both standard GARCH and GJR-GARCH specifications, with different sample sizes and persistent levels (low and high). The innovations come either from a Gaussian distribution or from a $t$-student distribution (with 4 or 8 degrees of freedom).}}
\label{tab:mc-xi-wald-portrait}
\begin{adjustbox}{width=1\textwidth}
\renewcommand{\arraystretch}{1.25}
\begin{tabular}{llr*{9}{c}}
\toprule
&&&\multicolumn{4}{c}{GARCH}&&\multicolumn{4}{c}{GJR-GARCH}\\
\cline{4-7}\cline{9-12}
distribution & persistence & sample size~~ & bias & RMSE & SD($Z_{\xi,n}$) & coverage && bias & RMSE & SD($Z_{\xi,n}$) & coverage\\
\midrule
normal &  low &   500~~ & -0.0057 & 0.0800 & 1.067 & 0.936 && -0.0048 & 0.0682 & 1.034 & 0.941\\
       &      & 1,000~~ & -0.0016 & 0.0378 & 1.017 & 0.945 && -0.0015 & 0.0373 & 1.010 & 0.945\\
       &      & 2,500~~ & -0.0005 & 0.0234 & 0.995 & 0.949 && -0.0006 & 0.0234 & 0.995 & 0.949\\
       &      & 5,000~~ & -0.0002 & 0.0166 & 0.998 & 0.949 && -0.0003 & 0.0166 & 0.998 & 0.949\\
\addlinespace[1pt]
       & high &   500~~ & -0.0242 & 0.2802 & 1.089 & 0.921 && -0.0183 & 0.1211 & 1.053 & 0.927\\
       &      & 1,000~~ & -0.0092 & 0.0398 & 1.031 & 0.935 && -0.0086 & 0.0393 & 1.018 & 0.937\\
       &      & 2,500~~ & -0.0040 & 0.0240 & 1.000 & 0.944 && -0.0036 & 0.0240 & 0.999 & 0.945\\
       &      & 5,000~~ & -0.0020 & 0.0168 & 1.000 & 0.947 && -0.0018 & 0.0168 & 1.000 & 0.948\\
\addlinespace[3pt]
$t_8$  &  low &   500~~ & -0.0075 & 0.1051 & 1.076 & 0.933 && -0.0088 & 0.1371 & 1.045 & 0.937\\
       &      & 1,000~~ & -0.0032 & 0.0468 & 1.027 & 0.942 && -0.0032 & 0.0459 & 1.017 & 0.944\\
       &      & 2,500~~ & -0.0011 & 0.0292 & 1.011 & 0.943 && -0.0013 & 0.0292 & 1.011 & 0.944\\
       &      & 5,000~~ & -0.0004 & 0.0204 & 0.994 & 0.951 && -0.0005 & 0.0204 & 0.994 & 0.950\\
\addlinespace[1pt]
       & high &   500~~ & -0.0350 & 0.4175 & 1.092 & 0.917 && -0.0230 & 0.1471 & 1.064 & 0.922\\
       &      & 1,000~~ & -0.0111 & 0.0489 & 1.035 & 0.934 && -0.0105 & 0.0484 & 1.026 & 0.939\\
       &      & 2,500~~ & -0.0046 & 0.0298 & 1.016 & 0.940 && -0.0043 & 0.0298 & 1.015 & 0.940\\
       &      & 5,000~~ & -0.0022 & 0.0207 & 0.997 & 0.950 && -0.0020 & 0.0207 & 0.997 & 0.949\\
\addlinespace[3pt]
$t_4$  &  low &   500~~ & -0.0432 & 1.1820 & 1.125 & 0.918 && -0.0317 & 1.0495 & 1.064 & 0.924\\
       &      & 1,000~~ & -0.0117 & 0.0671 & 1.065 & 0.929 && -0.0135 & 0.0653 & 1.051 & 0.930\\
       &      & 2,500~~ & -0.0064 & 0.0445 & 1.045 & 0.928 && -0.0076 & 0.0438 & 1.043 & 0.927\\
       &      & 5,000~~ & -0.0034 & 0.0325 & 1.037 & 0.935 && -0.0042 & 0.0319 & 1.035 & 0.936\\
\addlinespace[1pt]
       & high &   500~~ & -0.0820 & 2.0999 & 1.118 & 0.907 && -0.0512 & 1.4201 & 1.082 & 0.908\\
       &      & 1,000~~ & -0.0204 & 0.0701 & 1.076 & 0.917 && -0.0218 & 0.0690 & 1.061 & 0.918\\
       &      & 2,500~~ & -0.0101 & 0.0450 & 1.047 & 0.922 && -0.0111 & 0.0444 & 1.042 & 0.922\\
       &      & 5,000~~ & -0.0053 & 0.0328 & 1.041 & 0.928 && -0.0060 & 0.0320 & 1.035 & 0.929\\
\bottomrule\\
\end{tabular}\end{adjustbox}\end{table}

Table~\ref{tab:mc-ce-wald-portrait} documents the performance of the conditional expectile estimator (bias and RMSE) across Monte Carlo replications for both GARCH-type specifications. It also reports the average width and empirical coverage of the 95\% Wald confidence interval. As before, bias and RMSE shrink substantially as the sample size increases from 500 to 1,000 time-series observations. This does not depend on the experimental design, remaining true for every persistence level, distribution, and volatility specification. A similar pattern arises for the 95\% Wald confidence intervals, which show remarkably little variation in coverage (between 0.927 and 0.952). Lastly, the standard deviation of $Z_{c,n+1}$ is slightly closer to unit than the standard deviation of the Wald statistic for the unconditional quantile.

\begin{table}[!ht]\vspace*{2em}
\caption{Performance of the two-step estimator of the conditional expectile at $\tau=0.05$\\ {\footnotesize The target is the conditional expectile $c_{n+1}\equiv\sigma_{n+1}\xi_0$, which we estimate by $\widehat{c}_{n+1}\equiv\widetilde\sigma_{n+1}(\widehat{\bs\theta}_n)\widehat{\xi}_n$. For each of the 10,000 Monte Carlo replications, we report the bias and root mean squared error (RMSE) of the conditional expectile estimator, as well as the average length and empirical coverage of the 95\% prediction interval, for each experimental design. The latter considers both standard GARCH and GJR-GARCH specifications, with different sample sizes and persistent levels (low and high). The innovations come either from a Gaussian distribution or from a $t$-student distribution (with 4 or 8 degrees of freedom).}}
\label{tab:mc-ce-wald-portrait}
\begin{adjustbox}{width=1\textwidth}
\renewcommand{\arraystretch}{1.25}
\begin{tabular}{llr*{9}{c}}
\toprule
&&&\multicolumn{4}{c}{GARCH}&&\multicolumn{4}{c}{GJR-GARCH}\\
\cline{4-7}\cline{9-12}
distribution & persistence & sample size & bias & RMSE & SD($Z_{c,n+1}$) & coverage && bias & RMSE & SD($Z_{c,n+1}$) & coverage\\
\midrule
normal &  low &   500 & -0.0042 & 0.1161 & 1.046 & 0.938 && -0.0037 & 0.1273 & 1.045 & 0.940\\
       &      & 1,000 &  0.0003 & 0.0748 & 1.043 & 0.942 && -0.0001 & 0.0848 & 1.048 & 0.941\\
       &      & 2,500 & -0.0008 & 0.0477 & 1.018 & 0.951 && -0.0009 & 0.0537 & 1.008 & 0.949\\
       &      & 5,000 & -0.0002 & 0.0329 & 0.999 & 0.952 && -0.0004 & 0.0373 & 1.000 & 0.948\\
       & high &   500 & -0.0248 & 0.2199 & 1.103 & 0.928 && -0.0234 & 0.1557 & 1.064 & 0.938\\
       &      & 1,000 & -0.0105 & 0.0779 & 1.043 & 0.936 && -0.0114 & 0.0881 & 1.032 & 0.941\\
       &      & 2,500 & -0.0052 & 0.0478 & 1.007 & 0.949 && -0.0055 & 0.0551 & 1.013 & 0.947\\
       &      & 5,000 & -0.0026 & 0.0335 & 1.002 & 0.950 && -0.0026 & 0.0374 & 1.000 & 0.951\\
$t_8$  &  low &   500 & -0.0056 & 0.1560 & 1.071 & 0.935 && -0.0075 & 0.1929 & 1.069 & 0.938\\
       &      & 1,000 & -0.0024 & 0.1008 & 1.059 & 0.938 && -0.0019 & 0.1174 & 1.066 & 0.938\\
       &      & 2,500 &  0.0003 & 0.0625 & 1.031 & 0.940 && -0.0002 & 0.0722 & 1.028 & 0.943\\
       &      & 5,000 &  0.0002 & 0.0440 & 1.014 & 0.947 && -0.0001 & 0.0497 & 1.000 & 0.949\\
       & high &   500 & -0.0352 & 0.3557 & 1.137 & 0.928 && -0.0284 & 0.1904 & 1.077 & 0.939\\
       &      & 1,000 & -0.0119 & 0.1066 & 1.058 & 0.938 && -0.0142 & 0.1292 & 1.049 & 0.942\\
       &      & 2,500 & -0.0048 & 0.0625 & 1.023 & 0.946 && -0.0053 & 0.0731 & 1.018 & 0.947\\
       &      & 5,000 & -0.0022 & 0.0448 & 1.016 & 0.946 && -0.0025 & 0.0494 & 1.002 & 0.949\\
$t_4$  &  low &   500 & -0.0150 & 0.2437 & 1.107 & 0.927 && -0.0153 & 0.2566 & 1.102 & 0.931\\
       &      & 1,000 & -0.0058 & 0.1620 & 1.095 & 0.932 && -0.0075 & 0.1895 & 1.094 & 0.936\\
       &      & 2,500 & -0.0044 & 0.1104 & 1.051 & 0.942 && -0.0059 & 0.1283 & 1.038 & 0.945\\
       &      & 5,000 & -0.0015 & 0.0831 & 1.018 & 0.945 && -0.0033 & 0.1024 & 1.019 & 0.944\\
       & high &   500 & -0.0440 & 0.6450 & 1.147 & 0.927 && -0.0293 & 0.2531 & 1.097 & 0.933\\
       &      & 1,000 & -0.0168 & 0.1764 & 1.094 & 0.932 && -0.0196 & 0.2046 & 1.074 & 0.938\\
       &      & 2,500 & -0.0065 & 0.1196 & 1.020 & 0.949 && -0.0073 & 0.1258 & 1.012 & 0.949\\
       &      & 5,000 & -0.0039 & 0.1045 & 1.013 & 0.946 && -0.0057 & 0.1309 & 1.008 & 0.948\\
\bottomrule
\end{tabular}\end{adjustbox}\end{table}

This improvement relative to the unconditional expectile estimator is consistent with the large-sample theory for two-step estimators of conditional risk measures. The asymptotic variance accounts for the uncertainty both in the estimation of the innovation expectile and in the one-step-ahead volatility forecast. The influence function of the conditional expectile attenuates part of the first-step estimation error that remains visible in the unconditional expectile estimator, bringing the dispersion of the Wald statistic closer to one and coverage closer to nominal levels.

We complement the analysis in two ways. First, we compute the skewness and kurtosis of the Wald statistics $Z_{\xi,n}=\sqrt{n}(\widehat\xi_n-\xi_0)/\widehat{V}_\xi^{1/2}$ and $Z_{c,n+1}=\sqrt{n}(\widehat{c}_{n+1}-c_{n+1})/\widehat{V}_{n+1}^{1/2}$ across replications. We find that their skewness and excess kurtosis are close to zero, especially for samples of at least 1,000 observations, confirming that the Gaussian approximation performs very well in finite samples. Second, we also look at the widths of the 95\% prediction intervals of the conditional expectiles at $\tau=0.05$ relative to those of the value at risk and expected shortfall at $\alpha=0.05$, which we compute using \pc{gao2008estimation} standard errors. Figure~\ref{fig:mc_interval_widths_level} reveals that they are relatively tighter for the conditional expectile at $\tau=0.05$ than for the value at risk and, especially, expected shortfall at $\alpha=0.01$. This is across the board,  regardless of the sample size, innovation distribution, persistence level, and volatility specification. Moreover, further simulations show that fixing both quantile and expectile levels to 1\% increases the expectile advantage over the expected shortfall, especially for $t$ innovations.

Although imposing $\alpha=\tau$ is appropriate for statistical comparisons, it implies completely different capital requirements. To make them comparable, we can fix the value at risk and then compute $\tau(\alpha)$ such that the expectile and value at risk coincide, as in~\ref{eq:tau equivalence}. The regulatory guidelines of the Bank for International Settlements suggest that the expected shortfall at the 2.5\% quantile level imposes a similar capital requirement to the value at risk at 1\% (\cnm{BIS2013}, 2013, \cy{BIS2014}). We thus compare their 95\% prediction intervals to those of the conditional expectile of $\tau(0.01)$, which typically takes a more extreme value between 0.33 and 0.66 across Monte Carlo experiments. Figure~\ref{fig:mc_interval_widths_capital} shows that the average lengths of the prediction intervals are not so different for sample sizes of 500 observations, despite the large differences in expectile and quantile levels (e.g., almost tenfold between XP and ES). For larger samples, the average length of the XP prediction intervals become between 20\% and 25\% larger than those of the value at risk and expected shortfall.

\begin{figure}[!ht]
\centering\caption{Relative lengths of the prediction intervals of the tail risk measures with $\alpha=\tau=0.05$\\ 
{\footnotesize The plots summarize the replication-level ratio between the lengths of the nominal 95\% prediction intervals of the conditional expectile, value at risk and expected shortfall with $\alpha=\tau=0.05$. Top row refers to the ratios with respect to the value-at-risk interval length, whereas bottom rows to the expected shortfall.}}
\includegraphics[height=.375\textheight,trim= 0 .25cm 0 .5cm,clip]{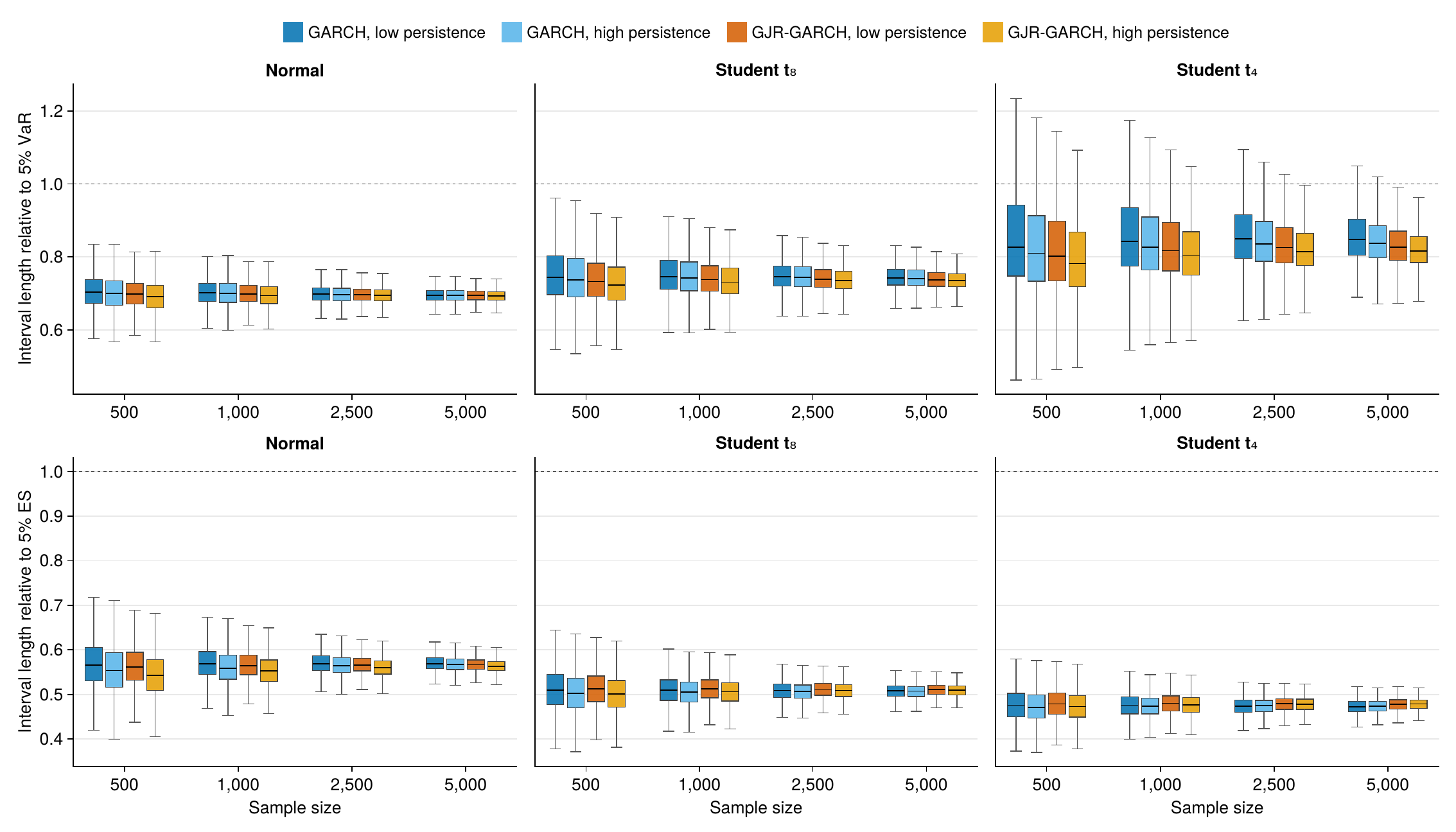}
\label{fig:mc_interval_widths_level}
\end{figure}

\begin{figure}[!ht]\vspace*{.375cm}
\centering\caption{Relative lengths of the prediction intervals given comparable capital requirements\\ 
{\footnotesize The plots summarize the replication-level ratio between the lengths of the nominal 95\% prediction intervals of the value at risk at $\alpha=0.01$, expected shortfall at $\alpha=0.025$, and conditional expectile at $\tau(0.01)$. Top row refers to the ratios with respect to the value-at-risk interval length, whereas bottom rows to the expected shortfall.}}
\includegraphics[height=.375\textheight,trim = 1.7cm 3.25cm 1.8cm 3cm,clip]{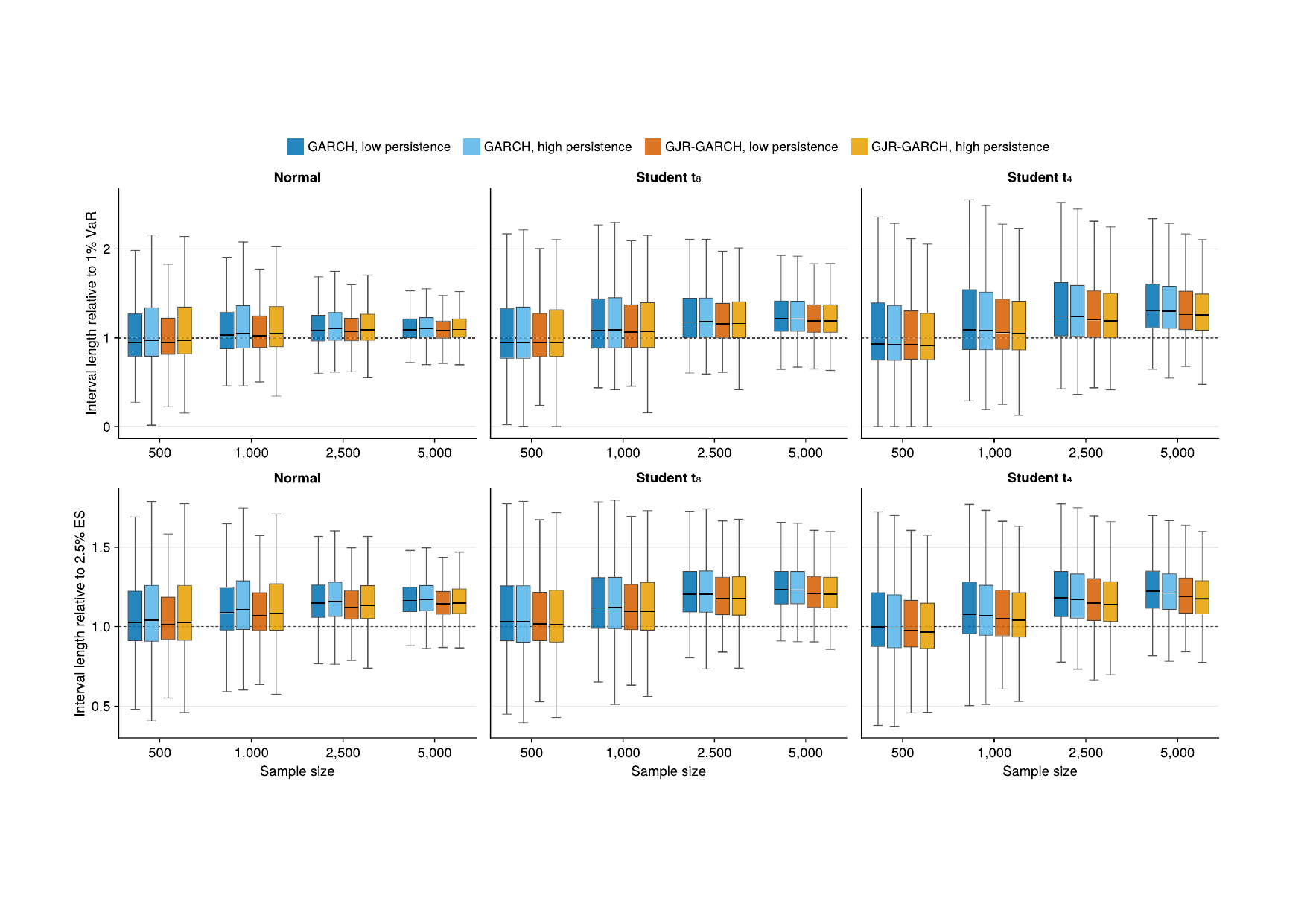}
\label{fig:mc_interval_widths_capital}
\end{figure}

\section{Tail risk in cryptomarkets}\label{sec:crypto_application_clt}

We collect data from Yahoo Finance on the daily values of the cryptocurrencies with the highest market values at the end of 2023: namely, Bitcoin (BTC), Ethereum (ETH), and Binance Coin (BNB).\footnote{~~Yahoo Finance sources their crypto data from CoinMarketCap. Quantitative results remain virtually the same using Coin Codex data, alleviating concerns with crypto data quality \cite{SSY26}.} For the sake of comparison, we also gather the daily exchange rate between the Euro and US dollar (EUR). The sample ranges from January 2016 to December 2023 for EUR and BTC, but only from November 2017 to December 2023 for BNB and ETH.

Figure~\ref{fig:drawdown} documents this difference in a more direct manner by plotting the distribution of their historical drawdowns. We calculate the drawdown at time $t$ as the difference between the maximum cumulative return up to time $t$ and the cumulative return at time $t$, so that it measures the percentage drop from the all-time high. Crypto drawdowns are much larger (and more persistent) than EUR drawdowns, which remain comparatively modest throughout the sample period.

To deal with nonstationarity in the exchange rates, we compute their daily percentage changes (in logs): i.e., $y_t=100\ln(P_t/P_{t-1})$, where $P_t$ is the value of the currency in US dollars at time $t$. Table~\ref{tab:summary} documents that sample sizes are larger for crypto assets because they trade every day of the week, even on bank holidays. However, the most striking feature is the range in which daily cryptocurrency returns dwell. Their minimum values reflect drops in value of about 50\%, with highest returns varying from 22.5\% to more than 53\%, in just one day! This variation leads to very high levels of daily volatility, from 3.75\% to 5.46\%.

In line with the risk-return tradeoff, typical daily returns are also very high, with average and median values that vary between 9\% and 23\% and between 8\% and 15\%, respectively. In comparison, log-changes in the EUR range from -2.81\% to 1.82\%, with zero mean and a daily volatility of 0.47\%. The excess kurtosis in every exchange rate is consistent with heavy tails and conditional heteroskedasticity. In turn, skewness is particularly strong for cryptocurrencies, which might indicate that volatility responds asymmetrically to positive and negative returns. These features obviously invalidate any risk assessment based exclusively on volatility, instead calling for tail risk measures.\newpage

\begin{figure}[!ht]
\centering\caption{Daily exchange rates vis-à-vis the US dollar from January 2016 to December 2023\\ 
{\footnotesize We plot the value of Binance Coin (BNB), Bitcoin (BTC), Ethereum (ETH), and Euro (EUR) in US dollar.}}
\includegraphics[width=\linewidth, trim = 1cm 4cm 1cm 3cm, clip]{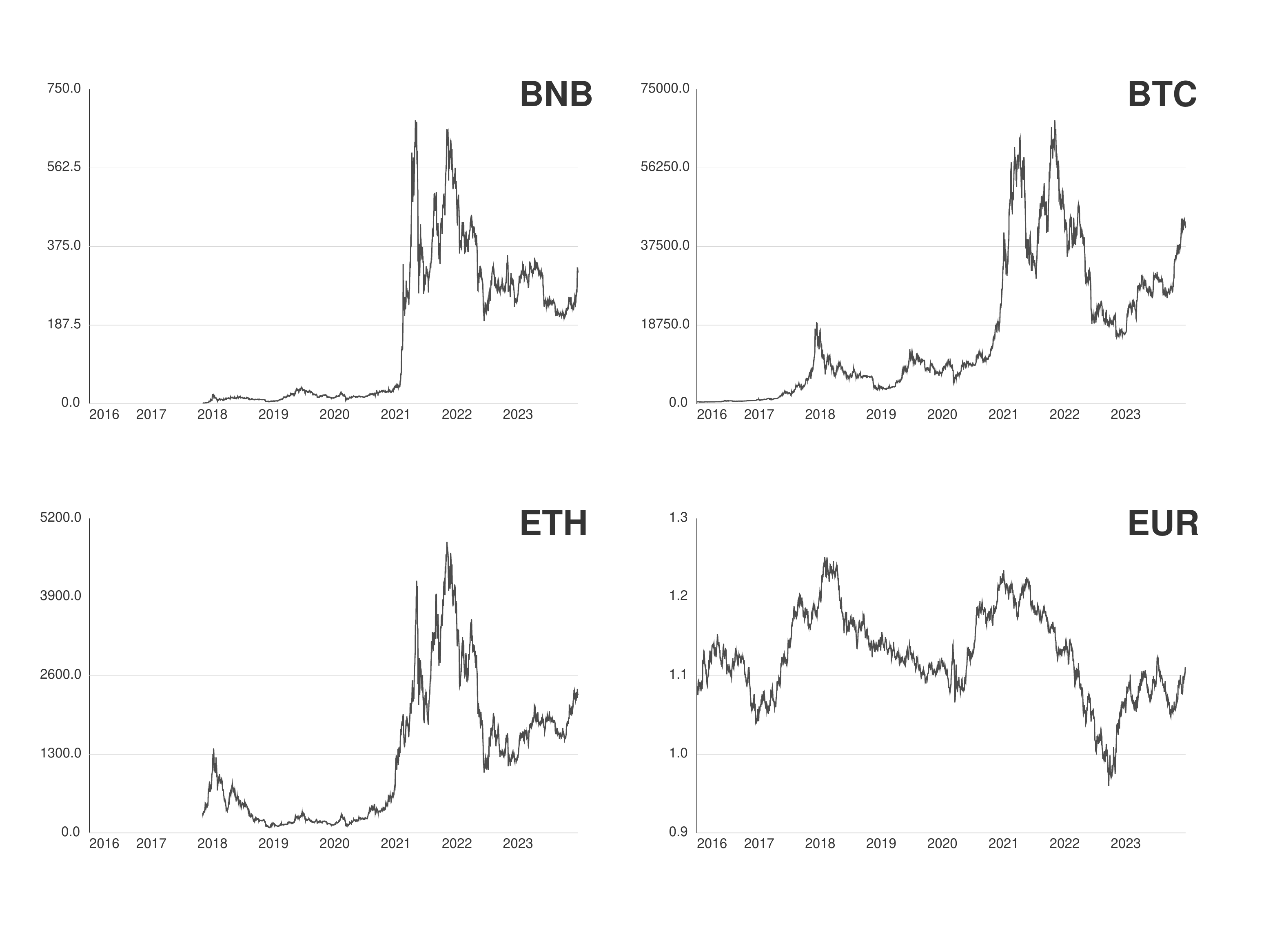}
\label{fig:prices}
\end{figure}

\begin{figure}[!ht]
\centering\caption{Box plots of the historical drawdowns from 2018 and 2023\\
{\footnotesize We plot the distributions of the historical drawdowns of the BNB, BTC, ETH and EUR exchange rates. We start as from January 2018 to have at least one year of observations for each currency in the initial sample.}}
\includegraphics[height=8.75cm,trim = 1cm 2.75cm 1cm 3cm,clip]{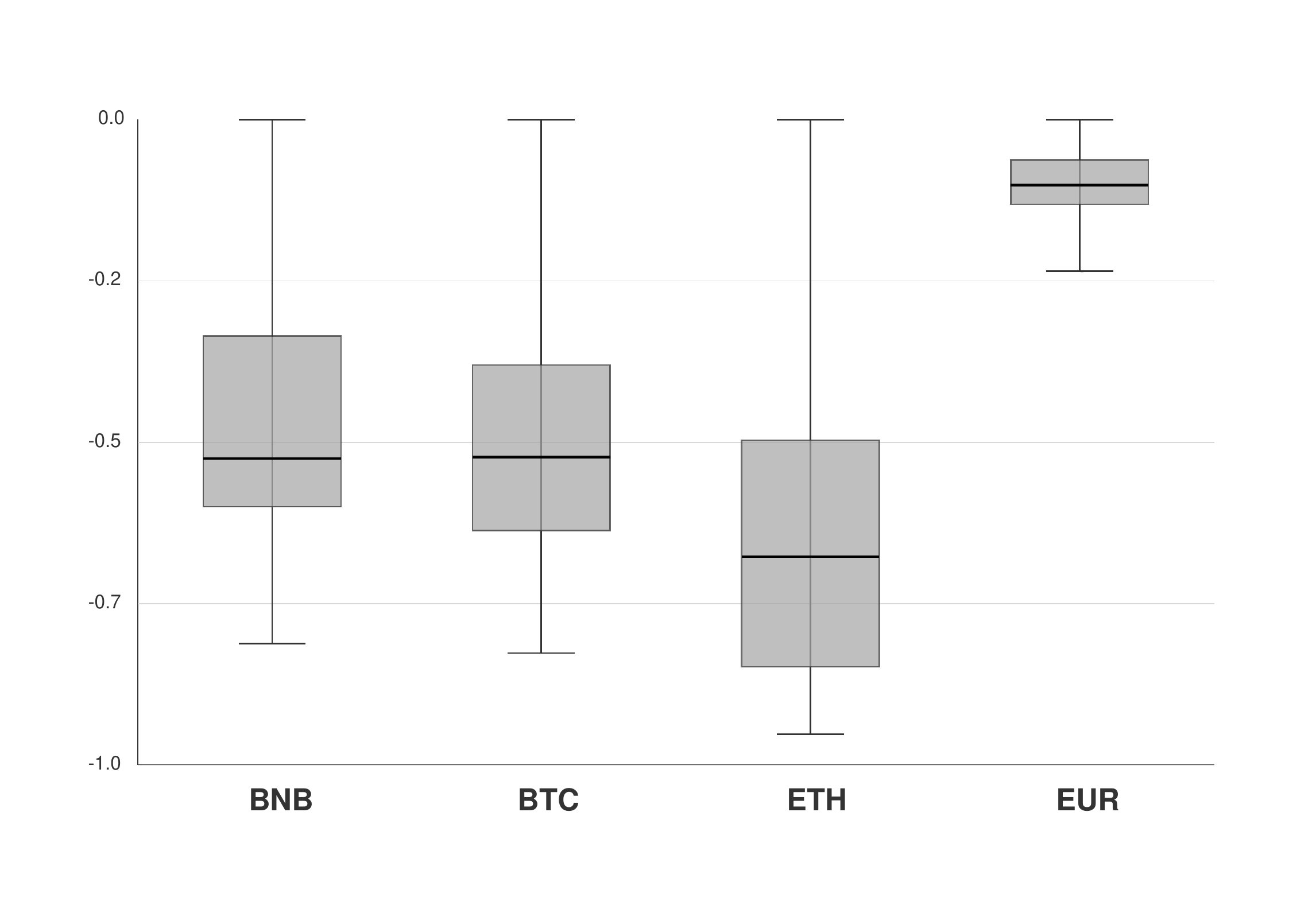}
\label{fig:drawdown}
\end{figure}

\begin{table}[!ht]
\vspace*{1em}\caption{Descriptive statistics for the daily log-changes in exchange rates\\ {\footnotesize We report the number of observations (\#obs) for each time series, as well as their main summary statistics. In particular, we display not only their minimum (min), average (mean) and maximum (max) values, but also their standard deviation (std dev), skewness (skew), kurtosis (kurt), and quartiles ($q_{0.25}$, median and $q_{0.75}$). The sample ranges from January 2016 to December 2023 for EUR and BTC, and from November 2017 to December 2023 for BNB and ETH.}}
\begin{adjustbox}{width=1\textwidth}
\begin{tabular}{p{1.85cm}crrrrrrrrr}
\hline
currency & \#obs &    min & $q_{0.25}$ & median & $q_{0.75}$ &   max & mean & std dev &   skew & kurt\\
\hline
BNB      & 2,243 & -54.31 &      -1.88 &   0.10 &       2.35 & 52.92 & 0.23 &    5.46 &   0.40 & 20.03\\
BTC      & 2,921 & -46.47 &      -1.27 &   0.15 &       1.69 & 22.51 & 0.16 &    3.75 &  -0.72 & 14.82\\
ETH      & 2,243 & -55.07 &      -1.90 &   0.08 &       2.38 & 23.47 & 0.09 &    4.80 &  -0.93 & 13.84\\
EUR      & 2,082 &  -2.81 &      -0.28 &   0.00 &       0.28 &  1.82 & 0.00 &    0.47 &  -0.05 & 4.90\\
\hline
\end{tabular}\end{adjustbox}
\label{tab:summary}
\end{table}

Further analysis shows little autocorrelation, but strong persistence in conditional volatility. Given the skewness in the crypto assets, we estimate GJR-GARCH(1,1) models for each currency using Gaussian QML over rolling windows of 1,000 observations. The asymmetric volatility specification matters because ignoring leverage effects could well distort tail forecasts and their prediction intervals. For each window, we first compute standardized residuals and then estimate not only their unconditional expectile, but also the corresponding conditional expectiles by plugging in one-step-ahead volatility forecasts. Finally, it is straightforward to compute their 90\% prediction intervals using the plug-in variance estimators.

Figure~\ref{fig:empirical confidence interval} depicts the one-step-ahead forecasts of the value at risk, expected shortfall and expectile at level $\alpha=\tau=0.01$ from January 2, 2020 to December 30, 2023. It is apparent that the expected shortfall spikes more sharply than the other tail risk measures, whereas the conditional expectile has the smoothest variation. Moreover, the latter prediction intervals are much tighter than those of the quantile-based risk measures: namely, between 66.3\% and 89.8\% of the length for the value at risk and from 35.2\% to 45.0\% of that for the expected shortfall. We assess the one-step ahead forecasts of the value at risk and expectile at $\alpha=\tau=0.01$ using statistical tests and backtest procedures. In particular, we check the value-at-risk performance using coverage and duration tests \cite{kupiec1995techniques,christoffersen1998evaluating,christoffersen2004backtesting}, whereas we conduct \pc{mcneil2000estimation} bootstrap-based residual tests to evaluate conditional expectiles.

\begin{figure}[!ht]
\caption{Conditional tail risk measures at level $\alpha=\tau=0.01$\\ 
{\footnotesize The plots in the first column display the daily percentage changes in the value of Binance Coin (BNB), Bitcoin (BTC), Ethereum (ETH), and Euro (EUR) in US dollars from January 2020 to December 2023, as well as the one-day-ahead forecasts of the value at risk, expected shortfall and expectile at level $\alpha=\tau=0.01$. In the second column, we depict how the expectile level $\tau(\alpha)$ should vary over time to make expectile and value at risk coincide.}}
\includegraphics[width=\textwidth,trim = 1cm 0 0 0,clip]{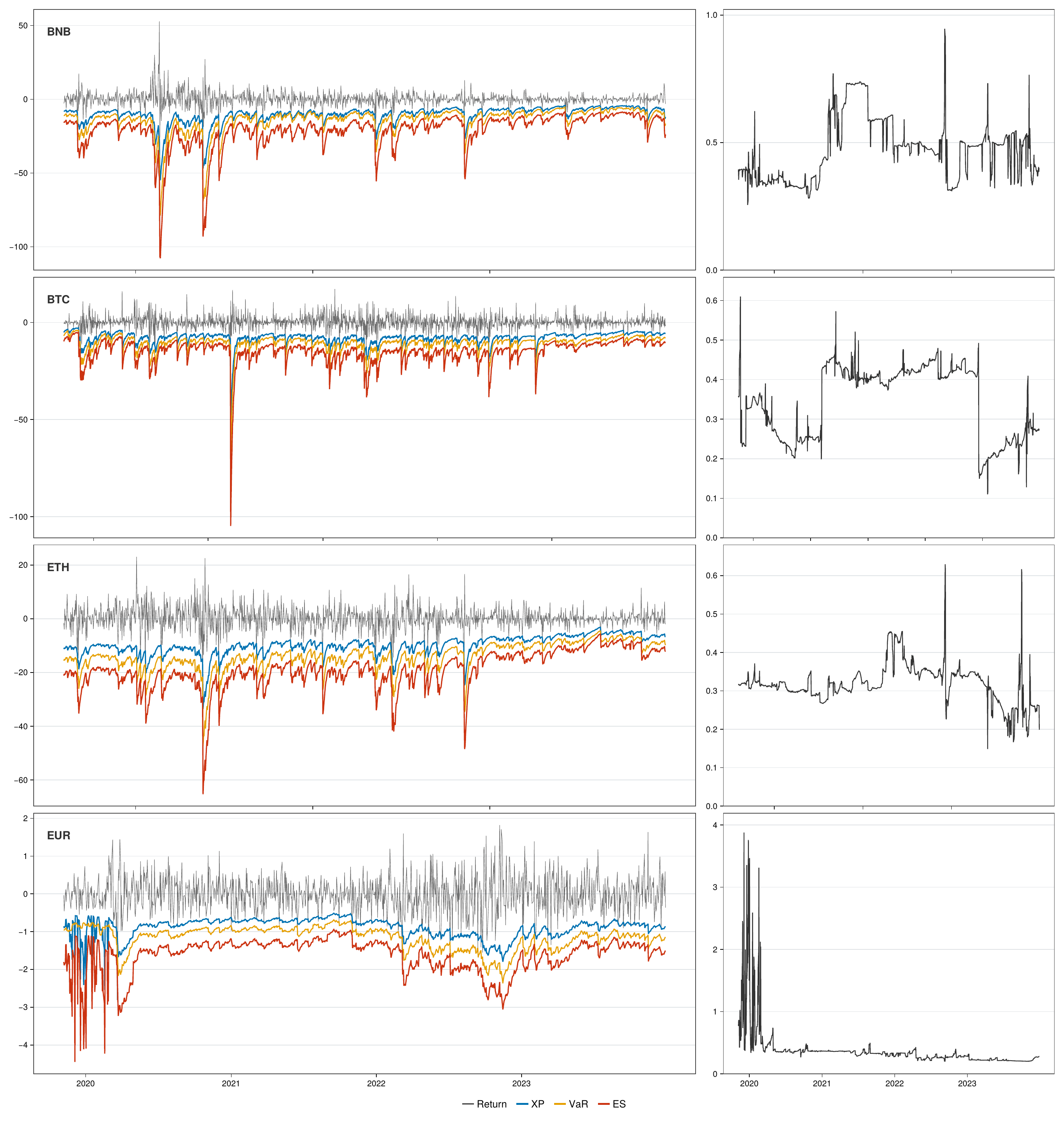}\vspace*{-1.5em}
\label{fig:empirical confidence interval}
\end{figure}\clearpage

\begin{table}[!th]\renewcommand{\arraystretch}{1.25}
\centering\caption{Frequency of value at risk and expectile forecast exceedances at $\alpha=\tau=0.01$\\ {\footnotesize We report the expected and realized number of exceedances for the one-step-ahead forecasts of the value at risk and expectile based on rolling windows of 1,000 observations. Exceedances occur if realized loss exceeds the tail risk measure. Apart from the corresponding relative frequency (i.e., number of realized exceedances over number of out-of-sample forecasts), we also document the p-values of the coverage tests for the value at risk and the bootstrap-based p-value of the residual test for the expectile.}}
\label{tab:app_backtest}
\begin{tabular}{lccrccrccrcc}
\toprule
 & \multicolumn{2}{c}{BNB} && \multicolumn{2}{c}{BTC} && \multicolumn{2}{c}{ETH} && \multicolumn{2}{c}{EUR}\\
\cline{2-3}\cline{5-6}\cline{8-9}\cline{11-12}
daily exceedances  &    VaR &     XP &&    VaR &     XP &&    VaR &     XP &&    VaR & XP\\
\midrule
expected           & 12.43  & 12.43  && 19.21  & 19.21  && 12.43  & 12.43  && 10.82  & 10.82\\
realized           & 10     & 16     && 20     & 36     && 8      & 19     && 14     & 36\\
relative frequency & 0.80\% & 1.29\% && 1.04\% & 1.87\% && 0.64\% & 1.53\% && 1.29\% & 3.33\%\\
coverage tests\\
~~~~unconditional  &   0.47 &        &&   0.86 &        &&   0.18 &        &&   0.35 &\\
~~~~conditional    &   0.71 &        &&   0.80 &        &&   0.38 &        &&   0.54 &\\
~~~~duration       &   0.03 &        &&   0.30 &        &&   0.02 &        &&   0.81 &\\
residual test      &        &   0.48 &&        &   0.79 &&        &   0.06 &&        & 0.98\\
\bottomrule
\end{tabular}\end{table}

Table~\ref{tab:app_backtest} shows that number of VaR exceedances is close to the nominal one-percent target: 1.05\% for BNB, 1.04\% for BTC, 0.72\% for ETH, and 1.20\% for EUR. As expected, there are more exceedances for the one-step-ahead forecasts of the expectiles at level $\tau=1\%$ in view that setting $\tau=\alpha$ does not impose equivalent capital requirements. Finally, the coverage and residual tests cannot reject the congruency of the one-step-ahead forecasts of the tail risk measures at the 5\% significance level.

\begin{table}[!htb]\renewcommand{\arraystretch}{1.25}
\vspace*{1em}\centering\caption{Prediction intervals of the tail risk measures with comparable capital requirements\\ {\footnotesize We consider the value at risk at the 1\% quantile level, the expected shortfall at the 2.5\% quantile level, and the expectile at the level $\tau(\alpha)=\tau(0.01)$ such that it coincides with the value at risk at level $\alpha=0.01$. We report the number of rolling windows (\#windows) we deploy to produce out-of-sample forecasts of the tail risk measures, as well as the average value of $\tau(\alpha)$. We also display not only the average difference between capital requirements and realized losses (capital buffer), but also the average width of the 95\% prediction intervals of the one-step-ahead forecasts of each tail risk measure (interval width).}}
\label{tab:comparison_risk_measures}
\begin{tabular}{lccrcccrccc}
\toprule
         &&&&     \multicolumn{3}{c}{capital buffer} && \multicolumn{3}{c}{interval width}\\
\cline{5-7}\cline{9-11}
currency~~~~ & \#windows & $\bar\tau(\alpha)=\bar\tau(0.01)$ &&   VaR &    ES &    XP &&  VaR &   ES & XP\\
\midrule
BNB          &     1,243 &                            0.47\% && 13.97 & 14.70 & 13.97 && 7.86 & 6.33 & 7.21\\
BTC          &     1,921 &                            0.34\% && 10.69 & 11.24 & 10.69 && 4.50 & 5.79 & 7.64\\
ETH          &     1,243 &                            0.32\% && 14.02 & 14.04 & 14.02 && 6.86 & 6.88 & 8.38\\
EUR          &     1,082 &                            0.38\% && 1.17  &  1.26 & 1.17  && 0.35 & 0.41 & 0.54\\
\bottomrule
\end{tabular}
\end{table}

Although useful for calibration, the above exceedance analysis does not consider comparable capital requirements given that we use the same quantile/expectile levels for each tail risk measure \cite{bellini2017risk}. Figure~\ref{fig:empirical confidence interval} indeed shows how the expectile levels of each currency must change over time to match the VaR capital requirements. We then fix the same capital requirement for the value at risk and expectile by setting $\alpha=0.01$ and $\tau(\alpha)=\tau(0.01)$ in \eqref{eq:tau equivalence}. In addition, we also compute the expected shortfall at the 2.5\% quantile level following the regulatory guidelines in \cnm{BIS2013} (2013, \cy{BIS2014}). Table~\ref{tab:comparison_risk_measures} shows not only that the average values of $\tau(\alpha)=\tau(0.01)$ vary from 0.32\% to 0.47\% across exchange rates, but also that the average differences between the ES$_{0.025}$ and VaR$_{0.01}$ capital requirements are quite small in percentage losses: 73 bps for BNB, 55 bps for BTC, 2 bps for ETH, and 9 bps for EUR. Lastly, as expected from the Monte Carlo results, the prediction intervals of the conditional expectiles are relatively wider on average than those of the value at risk and expected shortfall because $\tau(\alpha)$ is much smaller than $\alpha=0.01$.

\section{Conclusion}\label{sec:conclusion}

In this paper, we consider a two-step approach for the estimation of conditional expectiles. In particular, we first estimate the parameters of the conditional variance by quasi-maximum likelihood and then employ standardized residuals to compute the unconditional expectiles of the innovations. We quantify the impact of first-step estimation error on the asymptotic distribution of the expectile estimator, showing how to compute consistent standard errors for the unconditional expectile and prediction intervals for the conditional expectiles. Empirically, we assess how the conditional expectile performs in comparison with traditional quantile-based risk measures in the analysis of daily cryptocurrency returns.

There are at least two directions worth exploring in future research. The first is to extend our asymptotic theory to deal with nonzero expected returns \cite{francq2018estimation}. The second is to study the high-dimensional case of large portfolios \cite{francq2020virtual}. As it happens, expectiles are particularly suitable for handling several assets because they are coherent measures even under nonelliptical distributions \cite{artzner1999coherent}.

\lineskip=.5ex \baselineskip 3.75ex
\bibliography{ALS}

\newpage\appendix
\section{Technical Appendix}\label{sec:proof}
\setcounter{table}{0}
\renewcommand{\thetable}{A\arabic{table}}
\setcounter{equation}{0}
\renewcommand{\theequation}{A.\arabic{equation}}
In this appendix, we establish the large-sample properties of the two-step estimator of the conditional and unconditional expectile at a fixed level $\tau\in(0,1)$. The first step in the procedure estimates the parameters of the conditional scale model, whereas the second step estimates the unconditional expectile of the innovations using the standardized residuals. To obtain estimates for the conditional expectile, it suffices to take scale the volatility forecast by the unconditional expectile estimate.

\subsection{Conditions and notation}

Let $\mathcal{F}_t=\sigma(y_s:s\le t)$ and consider 
\begin{equation}\label{eq:model_appendix}
 y_t=\sigma_t(\bs\theta_0)\eta_t, \qquad \sigma_t(\bs\theta)=\sigma(y_{t-1},y_{t-2},\ldots;\bs\theta),  \qquad t\in\mathbb Z,
\end{equation}
where $\bs\theta_0\in\Theta\subset\mathbb{R}^m$. Let $\bs{D}_t(\bs\theta)=\partial_{\bs\theta}\log\sigma_t(\bs\theta)$, $\bs{D}_t=\bs{D}_t(\bs\theta_0)$, and $\bs{J}_0=\E[\bs{D}_t]$. For an arbitrary initialization of the pre-sample observations, denote by $\widetilde\sigma_t(\bs\theta)$ and $\widetilde{\bs{D}}_t(\bs\theta)$ the corresponding finite-history quantities. Define the stationary and finite-history standardized residuals respectively as $\eta_t(\bs\theta)=y_t/\sigma_t(\bs\theta)$ and $\widetilde\eta_t(\bs\theta)=y_t/\widetilde\sigma_t(\bs\theta)$. For $u\in\mathbb{R}$, let $\phi_\tau(u)=u w_\tau(u)$ and
\[
 w_\tau(u)=\big|\bs{1}(u<0)-\tau\big|=(1-\tau)\bs{1}(u<0)+\tau\bs{1}(u\ge0),
\]
so that the innovation expectile $\xi_0:=\xp^\eta$ is the unique solution of $\E\big[\phi_\tau(\eta_t-\xi_0)\big]=0$. Define then $\psi_{t,\tau}(\bs\theta,\xi)=\phi_\tau\{\eta_t(\bs\theta)-\xi\}$, $\widetilde\psi_{t,\tau}(\bs\theta,\xi)=\phi_\tau\{\widetilde\eta_t(\bs\theta)-\xi\}$, $Q_n(\xi,\bs\theta)=\frac{1}{n}\sum_{t=1}^n\psi_{t,\tau}(\bs\theta,\xi)$, and  $\widetilde{Q}_n(\xi,\bs\theta)=\frac{1}{n}\sum_{t=1}^n\widetilde\psi_{t,\tau}(\bs\theta,\xi)$ for any generic $(\bs\theta,\xi)$. At the truth, set $\psi_t=\psi_{t,\tau}(\bs\theta_0,\xi_0)=\phi_\tau(\eta_t-\xi_0)$, $w_t=w_\tau(\eta_t-\xi_0)$, and $\Psi_0=\E[w_t]=\tau\{1-F_\eta(\xi_0)\}+(1-\tau)F_\eta(\xi_0)\ge\min\{\tau,1-\tau\}>0$. As such, $\phi_\tau$ denotes the deterministic piecewise-linear map, $\psi_{t,\tau}(\bs\theta,\xi)$ denotes the observation-specific estimating function, and $\psi_t$ denotes the estimating function evaluated at $(\bs\theta_0,\xi_0)$. Finally, we will henceforth denote by $\varphi_t$ the influence function.

The QML criterion is $\widetilde G_n(\bs\theta)=\frac{1}{n}\sum_{t=1}^n g\big(y_t,\widetilde\sigma_t(\bs\theta)\big)$, where $g(y,s)=\log\big(s^{-1}h(y/s)\big)$. To avoid ambiguity in the derivatives of the scale likelihood, define
\[
 a(u)=\left.\frac{\partial}{\partial s}g(u,s)\right|_{s=1}, \qquad b(u)=\left.\frac{\partial^2}{\partial s^2}g(u,s)\right|_{s=1}.
\]
By scale equivariance, $\partial_{\bs\theta}g\big(y_t,\sigma_t(\bs\theta_0)\big)=a(\eta_t)\bs{D}_t=:\bs{s}_t$, whereas $\bs{H}_0=-\E\left[\partial_{\bs\theta\bs\theta'}^2g\big(y_t,\sigma_t(\bs\theta_0)\big)\right]$ is the negative population Hessian, which is positive definite under the following conditions.

\renewcommand{\theassumption}{A\arabic{assumption}}
\begin{assumption}[Innovations and target expectile]\label{ass:innovation_process}
The innovations $\{\eta_t\}$ are iid and independent of $\mathcal{F}_{t-1}$, with $\E\eta_t=0$ and $\E\eta_t^2=1$.  For the fixed level $\tau\in(0,1)$, $\E\{\phi_\tau(\eta_t-\xi)\}=0$ has a unique solution $\xi_0$, and $F_\eta$ is continuous at $\xi_0$.
\end{assumption}

We require continuity at $\xi_0$ such that $\Pr(\eta_t=\xi_0)=0$ for the local piecewise-linear expansion. It is also a standard regularity condition for the Gaussian limit of sample expectiles \cite{holzmann2016expectile}. In particular, we do not need to impose absolute continuity or to bound the density away from zero.

\begin{assumption}[Stationary causal scale model]\label{ass:stationarity_mixing}
The process $\{y_t\}$ is a strictly stationary and ergodic causal solution of \eqref{eq:model_appendix}.  For every $\bs\theta$ in a neighborhood of $\bs\theta_0$, $\sigma_t(\bs\theta)$ and its first two parameter derivatives are $\mathcal{F}_{t-1}$-measurable.
\end{assumption}

\begin{assumption}[Parameter space]\label{ass:compactness}
The parameter space $\Theta\subset\mathbb{R}^m$ is compact and $\bs\theta_0\in\Theta$. For any result using the QML first-order condition, $\bs\theta_0$ is in the interior of $\Theta$.
\end{assumption}

\begin{assumption}[Conditional scale and identification]\label{ass:volatility_process}
For every real sequence $(x_j)_{j\ge1}$, the map $\bs\theta\mapsto\sigma(x_1,x_2,\ldots;\bs\theta)$ is continuous on $\Theta$ and twice continuously differentiable on an open neighborhood $V_0$ of $\bs\theta_0$. There is a constant $\underline\sigma>0$ such that $\inf_{\bs\theta\in\Theta}\sigma_t(\bs\theta)\ge\underline\sigma$ and $\inf_{\bs\theta\in\Theta}\widetilde\sigma_t(\bs\theta)\ge\underline\sigma$ almost surely. It also holds not only that $\sigma_t(\bs\theta_0)=\sigma_t(\bs\theta)$ almost surely if and only if $\bs\theta=\bs\theta_0$, but also that $\bs{I}_0=\E(\bs{D}_t\bs{D}_t')$ is finite and positive definite.
\end{assumption}

\begin{assumption}[Finite-history approximation]\label{ass:approximation}
There exist $\rho\in(0,1)$ and a nonnegative random vector $(C_0,C_1,C_2)$ such that $\sup_{\bs\theta\in\Theta}|\widetilde\sigma_t(\bs\theta)-\sigma_t(\bs\theta)|\le C_0\rho^t$, $\sup_{\bs\theta\in V_0} \left\|\partial_{\bs\theta}\widetilde\sigma_t(\bs\theta)-\partial_{\bs\theta}\sigma_t(\bs\theta)\right\|\le C_1\rho^t$, and $\sup_{\bs\theta\in V_0}\left\|\partial^2_{\bs\theta\bs\theta'}\widetilde\sigma_t(\bs\theta)-\partial^2_{\bs\theta\bs\theta'} \sigma_t(\bs\theta)\right\|\le C_2\rho^t$ almost surely. The initialization envelopes are such that $\E[C_0^4+C_1^4+C_2^4]<\infty$.
\end{assumption}

We necessitate the fourth-moment condition in Assumption~\ref{ass:approximation} only for the QML and plug-in covariance estimation. The expectile consistency and finite-history score approximation require only $\E[C_0^2]<\infty$.

\begin{assumption}[Moment and derivative envelopes of order $r$]
\label{ass:moments}
For $r\ge2$, $\E|\eta_t|^r+\E|y_t|^r<\infty$. Letting $A_t(\bs\theta)=\sigma_t(\bs\theta_0)/\sigma_t(\bs\theta)$ for some neighborhood $V\subset V_0$ of $\bs\theta_0$, there exists a stationary, ergodic, predictable process $B_t\ge1$ such that
\begin{align*}
 \sup_{\bs\theta\in V}
 \big\{ |A_t(\bs\theta)| +\|\partial_{\bs\theta}A_t(\bs\theta)\| +\|\partial^2_{\bs\theta\bs\theta'}A_t(\bs\theta)\| \big\} &\le B_t,\\
 \sup_{\bs\theta\in V} \big\{ \|\bs{D}_t(\bs\theta)\| +\|\partial_{\bs\theta}\bs{D}_t(\bs\theta)\| \big\} &\le B_t,
\end{align*}
and $\E B_t^r<\infty$.
\end{assumption}

Letting Assumption~\ref{ass:moments} hold with $r=4$ essentially requires the fourth moments of $(\eta_t,y_t)$ and of the local derivative envelopes. Similarly, stating that Assumption~\ref{ass:moments} holds with $r=2$ only calls for second moments.

\begin{assumption}[Instrumental-likelihood identification]
\label{ass:identification_instrumental}
$\E[g(\eta_t,s)]<\E[g(\eta_t,1)]$ for every nonnegative $s\ne1$.
\end{assumption}

\begin{assumption}[Instrumental-likelihood smoothness]
\label{ass:differentiability_instrumental}
The instrumental density $h$ is strictly positive and three times continuously differentiable.  There exist $\nu_h\ge0$ and $C_h<\infty$ such that the first three scale derivatives of $g(u,s)=\log\big(s^{-1}h(u/s)\big)$ have polynomial growth of order $\nu_h$ on compact subsets of $(0,\infty)$.  In particular, $\E|\eta_t|^{2\nu_h}<\infty$ and
\[
 \left|u\,\frac{h'(u)}{h(u)}\right|+\left|u^2\left\{\frac{h'(u)}{h(u)}\right\}'\right|\le C_h(1+|u|^{\nu_h}).
\]
Moreover, $\E[a(\eta_t)]=0$, $\kappa_h:=-\E[b(\eta_t)]>0$, and $\E|b(\eta_t)|<\infty$.
\end{assumption}

The first two growth restrictions correspond to the smooth version of Conditions A4 and A9 in \cn{francq2015risk}, whereas we use the third scale derivative to derive the equivalence of the finite-history QML Hessian.

\begin{assumption}[Fourth-order QML moments]\label{ass:qml_moments}
The QML score and curvature are such that $\E|a(\eta_t)|^4<\infty$ and $\E|b(\eta_t)|^2<\infty$. For the neighborhood $V$ in Assumption~\ref{ass:moments}, let $\mathcal{I}_t(\bs\theta)$ denote the closed interval with endpoints $\sigma_t(\bs\theta)$ and $\widetilde\sigma_t(\bs\theta)$. There exist stationary nonnegative envelopes $(G_{1,t},G_{2,t},G_{3,t})$ satisfying $\E[G_{1,t}^4+G_{2,t}^4+G_{3,t}^2]<\infty$ and $\sup_{\bs\theta\in V}\sup_{s\in\mathcal{I}_t(\bs\theta)}\left|\partial_s^j g(y_t,s)\right|\le G_{j,t}$ almost surely, for $j=1,2,3$.
\end{assumption}

Assumptions~\ref{ass:qml_moments} and~\ref{ass:moments} with $r=4$ supply the local fourth-order derivative conditions for the QMLE linear representation and for the joint martingale central limit theorem. For Gaussian QML inference, these requirements reduce to polynomial eighth-moment conditions on the innovations and fourth-moment on $y_t$.

\begin{assumption}[Plug-in variance estimation]
\label{ass:plugin}
There exist stationary and ergodic nonnegative envelopes $L_{a,t}$ and $L_{b,t}$, with $\E[L_{a,t}^4]<\infty$ and $\E[L_{b,t}^2]<\infty$, such that, for all $\bs\theta_1,\bs\theta_2\in V$, $|a(A_t(\bs\theta_1)\eta_t)-a(A_t(\bs\theta_2)\eta_t)|\le L_{a,t}\|\bs\theta_1-\bs\theta_2\|$ and $|b(A_t(\bs\theta_1)\eta_t)-b(A_t(\bs\theta_2)\eta_t)|\le L_{b,t}\|\bs\theta_1-\bs\theta_2\|$.
\end{assumption}

Assumption~\ref{ass:plugin} is not necessary for the consistency and asymptotic normality of the two-step expectile estimators. However, it is key for the consistency of the plug-in estimator of the asymptotic variance and for the construction of asymptotically-valid confidence intervals. A more primitive alternative is to impose polynomial-growth bounds on the derivatives of $a$ and $b$. Assumption~\ref{ass:moments} with $r=8$ actually ensures that such bounds hold in the context of Gaussian QML estimation given that $a(u)=u^2-1$ and $b(u)=1-3u^2$.

\begin{assumption}[Observed forecast state]\label{ass:forecast}
For $\bs\theta\in V$, the volatility recursion and the first two derivative recursions admit the augmented Markov representation $\mathcal{U}_t(\bs\theta)=\Phi_{\bs\theta}\{\mathcal{U}_{t-1}(\bs\theta),\eta_t\}$. For any two initial states $u$ and $v$, let $\mathcal{U}_t^u(\bs\theta)$ and $\mathcal{U}_t^v(\bs\theta)$ denote the solutions driven by the same sequence of innovations. There exist $\rho\in(0,1)$ and a stationary nonnegative random sequence $C_t$ with $\E[C_t^q]<\infty$ and $q>2$ such that
\begin{equation}\label{eq:forecast_augmented_contraction}
 \sup_{\bs\theta\in V}\left\|\mathcal{U}_t^u(\bs\theta)-\mathcal{U}_t^v(\bs\theta)\right\|\le C_t\rho^t\big(1+\|u\|+\|v\|\big)
\end{equation}
almost surely. The maps from $\mathcal{U}_t(\bs\theta)$ to $\sigma_{t+1}(\bs\theta)$ and its first two parameter derivatives are locally Lipschitz, uniformly on $V$, with envelopes having finite $q$-th moments.
\end{assumption}

The uniform contraction in \eqref{eq:forecast_augmented_contraction} implies the finite-history bounds in Assumption~\ref{ass:approximation}. Imposing a contraction only at $\bs\theta_0$ would not suffice since we must also control the parameter derivatives uniformly on $V$. Moreover, if we initialize the feasible recursion at a state with finite $q$-th moment, restarting the augmented recursion at time $n-m$ for any $m\ge1$ from a fixed state would produce $\sigma_{n+1}^{[m]}$ and $D_{n+1}^{[m]}$ that are measurable with respect to $\mathcal{H}_{n,m}=\sigma(\eta_{n-m+1},\ldots,\eta_n)$ and such that
\begin{equation}\label{eq:forecast_terminal_contraction}
 \left|\sigma_{n+1}(\bs\theta_0)-\sigma_{n+1}^{[m]}\right|+\left\|\bs{D}_{n+1}-\bs{D}_{n+1}^{[m]}\right\|\le C_{[m],t}\,\rho^m,
 \quad\textrm{with}~~\sup_n\E(C_{[m],t}^2)<\infty
\end{equation}
for some stationary nonnegative random sequence $C_{[m],t}$. Finally, it also follows that
\[
 \sigma_{n+1}(\bs\theta_0)\left\{1+\|\bs{D}_{n+1}\|\right\}+\sup_{\bs\theta\in V}\left\|\partial_{\bs\theta\bs\theta'}^2\sigma_{n+1}(\bs\theta)\right\|=O_p(1).
\]

\begin{remark}
Although we could consider weaker short-memory conditions, the geometric contraction implied by Assumption~\ref{ass:forecast} is standard in the literature \ca{francq2015risk}{see, for instance}, apart from being convenient for the forward initialization and terminal-state approximation bounds in Theorem~\ref{thm:conditional_expectiles}. The blocking argument remains valid if, for every $m\ge1$, there exist $\sigma_{n+1}^{[m]}$ and $\bs{D}_{n+1}^{[m]}$ measurable with respect to $\sigma(\eta_{n-m+1},\ldots,\eta_n)$ such that
\[
 \delta_m:=\sup_{n\ge1}\left\|\sigma_{n+1}(\bs\theta_0)-\sigma_{n+1}^{[m]}\right\|_2+\sup_{n\ge1}\left\|\bs{D}_{n+1}-\bs{D}_{n+1}^{[m]}\right\|_2\longrightarrow0.
\]
It would then suffice to choose a deterministic sequence $m_n$ such that $m_n\to\infty$, $m_n/n\to0$ and $\delta_{m_n}\to0$. The feasible forecast recursion must also satisfy $\sqrt{n}\,\sup_{\bs\theta\in V}\left|\widetilde\sigma_{n+1}(\bs\theta)-\sigma_{n+1}(\bs\theta)\right|=o_p(1)$. For consistency of the feasible estimator of the conditional variance, it is also sufficient to impose $\sup_{\bs\theta\in V}\left\|\partial_{\bs\theta} \widetilde\sigma_{n+1}(\bs\theta)-\partial_{\bs\theta}\sigma_{n+1}(\bs\theta)\right\|=o_p(1)$. Under these conditions, the last $m_n$ score contributions are $o_p(1)$ given that $m_n/n\to0$, while the terminal state of the finite history does not depend on the sum of earlier scores given that innovations are independent over time. In addition, the $L^2$ approximation and tightness of the normalized score vector ensure that replacing the exact terminal coefficient by its finite-memory counterpart has a contribution of order $o_p(1)$. Geometric contraction not only yields these conditions with $\delta_m=O(\rho^m)$, but also holds for the stationary GARCH and GJR-GARCH processes we consider in Proposition~\ref{prop:gjr_verification}. The same argument may also apply to more general causal nonlinear volatility processes that admit an innovation-based finite-memory approximation, even at subgeometric rates.
\end{remark}

\begin{remark}
$\bs{H}_0=\kappa_h\bs{I}_0$ under Assumptions~\ref{ass:volatility_process}, \ref{ass:moments} with $r=2$ and \ref{ass:differentiability_instrumental}. Indeed, differentiating $\partial_{\bs\theta}g\big(y_t,\sigma_t(\bs\theta_0)\big)=a(\eta_t)D_t$ yields $\partial^2_{\bs\theta\bs\theta'} g\big(y_t,\sigma_t(\bs\theta_0)\big)=\big[a(\eta_t)+b(\eta_t)\big]\bs{D}_t\bs{D}_t'+a(\eta_t)\partial_{\bs\theta}\bs{D}_t'$.
However, it follows from $\eta_t$ independent of $\mathcal{F}_{t-1}$ and $\E[a(\eta_t)]=0$ that
\[
 \bs{H}_0=-\E\left[\partial^2_{\bs\theta\bs\theta'}g\big(y_t,\sigma_t(\bs\theta_0)\big)\right]=-\E[b(\eta_t)]\E[\bs{D}_t\bs{D}_t']=\kappa_h\bs{I}_0.
\]
\end{remark}

\begin{proposition}[Stationary GJR-GARCH processes]
\label{prop:gjr_verification}
Let $y_t=\sigma_t(\bs\theta_0)\eta_t$, with
$$
\sigma_t^2(\bs\theta)=\omega+\sum_{i=1}^q\big[\alpha_i+\gamma_i\bs{1}(y_{t-i}<0)\big]\,y_{t-i}^2 +\sum_{j=1}^p\beta_j\sigma_{t-j}^2(\bs\theta).
$$
Suppose that $\bs\theta_0$ in the interior of the compact parameter space $\Theta$ and that $\omega\ge\underline\omega>0$, $\alpha_i\ge0$, $\alpha_i+\gamma_i\ge0$, $\beta_j\ge0$, and $\sum_{j=1}^p\beta_j\le\bar\beta<1$ uniformly on $\Theta$. Let $\mathcal{A}_t(\bs\theta_0)$ denote the nonnegative companion matrix of the variance recurrence. Assume that it has a negative top Lyapunov exponent and that $\E\|\mathcal A_t(\bs\theta_0)\|^{r/2}<1$. Finally, suppose that the positive- and negative-shock ARCH and GARCH polynomials satisfy the usual no-common-root and nonredundancy conditions, and that the innovation law does not concentrate on two points. It then follows that Assumptions~\ref{ass:stationarity_mixing} to~\ref{ass:moments} hold, and that the QML estimator is consistent if the instrumental likelihood meets the conditions in Assumptions~\ref{ass:identification_instrumental} and~\ref{ass:differentiability_instrumental}. Assumptions~\ref{ass:qml_moments} and~\ref{ass:plugin} automatically hold if we set $r=8$ in the Gaussian QML case, whereas Assumption~\ref{ass:forecast} holds for the observed stationary forecast state.
\end{proposition}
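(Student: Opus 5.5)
We verify each assumption for the GJR-GARCH$(p,q)$ recursion by writing it as a stochastic recurrence equation, following the route of Francq and Zakoïan for standard and asymmetric GARCH. Write $\bs{z}_t=(\sigma_t^2,\ldots,\sigma_{t-p+1}^2,y_{t-1}^{2+},\ldots,y_{t-q}^{2-},\ldots)'$, separating positive and negative squared returns, so that $\bs{z}_t=\bs{b}_t+\mathcal{A}_t(\bs\theta_0)\bs{z}_{t-1}$. Here $\mathcal{A}_t$ is nonnegative and iid in $t$ as a function of $\eta_{t-1}$.

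\textbf{Stationarity, ergodicity and moments (Assumptions~\ref{ass:stationarity_mixing} and~\ref{ass:moments}).} The negative top Lyapunov exponent gives, by Bougerol--Picard, a unique strictly stationary, ergodic, causal solution $\bs{z}_t=\bs{b}_t+\sum_{k\ge1}\mathcal{A}_t\cdots\mathcal{A}_{t-k+1}\bs{b}_{t-k}$. This solution is a measurable function of past innovations, which yields Assumption~\ref{ass:stationarity_mixing}.

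We then use $\E\|\mathcal{A}_t\|^{r/2}<1$ together with Minkowski's inequality (when $r/2\ge1$) or subadditivity of $x\mapsto x^{r/2}$ to bound the series in $L^{r/2}$. This gives $\E\sigma_t^r<\infty$, and hence $\E|y_t|^r<\infty$.

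Compactness (Assumption~\ref{ass:compactness}) is assumed. The lower bound $\sigma_t^2\ge\underline\omega$ gives $\underline\sigma=\underline\omega^{1/2}$. Continuity and $C^2$-smoothness in $\bs\theta$ follow because $\sigma_t^2(\bs\theta)=\sum_{k}\bs{B}(\bs\theta)^k\bs{c}_{t-k}(\bs\theta)$, where $\bs{B}$ is the $\beta$-companion matrix with spectral radius at most $\bar\beta<1$ uniformly on $\Theta$. This series and its term-by-term derivatives converge uniformly at a geometric rate.

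\textbf{Derivative envelopes.} For Assumption~\ref{ass:moments}, we use the standard ratio trick. Derivatives of $\log\sigma_t^2$ are ratios such as $\sum_k k\beta^{k-1}\gamma y_{t-k}^2/(\omega+\sum_k\beta^k\gamma y_{t-k}^2)$. These are bounded by $\sum_k k\rho^{k}(\omega+ y_{t-k}^2)^{s}$ for any small $s\in(0,1)$, using $x/(1+x)\le x^s$ and a slightly enlarged $\beta$ on a small neighbourhood $V$. Likewise $A_t(\bs\theta)=\sigma_t(\bs\theta_0)/\sigma_t(\bs\theta)$ is bounded by $C\sum_k(\beta_0/\beta)^k\cdots$, which stays finite on a small enough $V$.

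Choosing $s$ small makes $B_t$ have all moments needed given $\E|y_t|^{2s r}<\infty$. This is the step I expect to be most delicate: one must get the neighbourhood $V$ small enough that the ratio $(\beta_0/\beta)^k$ is dominated by the geometric decay, uniformly for the second derivatives as well.

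\textbf{Finite-history approximation (Assumption~\ref{ass:approximation}).} The truncated and stationary recursions share the linear filter $\bs{B}(\bs\theta)$. Their difference is therefore $\bs{B}(\bs\theta)^t$ times an initial-condition term, bounded by $C\bar\rho^{\,t}(1+\sigma_0^2+\|\text{init}\|)$. The same holds for derivatives, with polynomial-in-$t$ factors absorbed into a slightly larger $\rho$. The moments of $C_0,C_1,C_2$ follow from those of $\sigma_0^2$, and the bound on $\widetilde\sigma_t$ uses $\underline\omega$.

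\textbf{Identification (Assumption~\ref{ass:volatility_process}).} Suppose $\sigma_t(\bs\theta)=\sigma_t(\bs\theta_0)$ almost surely. Writing the recursion as polynomial identities in the lag operator gives a relation that is linear in $y_{t-1}^{2+},y_{t-1}^{2-}$ given the past. Because $\eta_{t-1}$ is not concentrated on two points, the coefficients must match. The no-common-root conditions then force $\bs\theta=\bs\theta_0$.

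Positive definiteness of $\bs{I}_0$ follows in the same way. If $\bs{c}'\bs{D}_t=0$ almost surely, differentiating the recursion gives a linear recursion in the parameters whose coefficients vanish by the same argument.

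\textbf{QML consistency and the remaining assumptions.} QML consistency then follows from the standard Wald/ergodic argument. Under Assumptions~\ref{ass:identification_instrumental} and~\ref{ass:differentiability_instrumental}, we use the uniform ergodic theorem on the compact $\Theta$, negligibility of initial values from Assumption~\ref{ass:approximation}, and identification.

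In the Gaussian case, $a(u)=u^2-1$ and $b(u)=1-3u^2$. The conditions in Assumptions~\ref{ass:qml_moments} and~\ref{ass:plugin} then reduce to polynomial moments of $\eta_t$, $A_t$ and $B_t$ of order at most eight, which hold with $r=8$. For instance, $L_{a,t}=2\eta_t^2B_t^2$ by the mean-value theorem.

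Finally, for Assumption~\ref{ass:forecast}, we take $\mathcal{U}_t$ to be the state $\bs{z}_t$ together with its first and second derivative recursions. These are affine in the state with the common contraction $\bs{B}(\bs\theta)$. Hence \eqref{eq:forecast_augmented_contraction} holds with $C_t$ a polynomial in $t$ times $\bar\rho^{\,t}$ absorbed into the constant. The required moments come from the same envelope bounds.
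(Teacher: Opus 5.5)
Your proposal is correct and follows essentially the same route as the paper's proof. The shared steps are Bougerol--Picard plus the matrix-moment contraction $\E\|\mathcal{A}_t(\bs\theta_0)\|^{r/2}<1$ for stationarity and $r$-th moments; the bound $\sigma_t^2\ge\underline\omega$; uniform stability of the $\beta$-polynomial for geometric forgetting (Assumption~\ref{ass:approximation}) and the derivative envelopes (Assumption~\ref{ass:moments}); nonredundancy with the non-two-point innovation law for identification and positive definiteness of $\bs{I}_0$; and the polynomial forms $a(u)=u^2-1$, $b(u)=1-3u^2$ for Assumptions~\ref{ass:qml_moments} and~\ref{ass:plugin}. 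You simply spell out more of the Francq and Zakoian details.

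Two points need tightening. First, the $\beta$-companion matrix has spectral radius at most $\bar\beta^{1/p}$, not $\bar\beta$; this is still uniformly below one, so nothing breaks. Second, in Assumption~\ref{ass:forecast}, the block of $\mathcal{U}_t$ carrying the true state $\bs{z}_t$ does not contract through $\bs{B}(\bs\theta)$ but through the random products $\mathcal{A}_t(\bs\theta_0)\cdots\mathcal{A}_1(\bs\theta_0)$. The random constant $C_t$ there must therefore come from $\E\|\mathcal{A}_t(\bs\theta_0)\|^{r/2}<1$, not from a deterministic polynomial-in-$t$ factor.
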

\begin{proof}
The negative top Lyapunov exponent ensures there is a unique strictly stationary causal solution of the variance recurrence, with the matrix-moment contraction yielding the corresponding $r$-th moments \ca{bougerol1992stationarity,ling2002necessary}{see, among others,}. For the GJR-GARCH$(1,1)$ process, the latter condition reduces to
\[
 \E\left\{\left[\beta_0+\big(\alpha_0+\gamma_0\bs{1}(\eta_t<0)\big)\eta_t^2\right]^{r/2}\right\}<1.
\]
The lower bound follows from $\sigma_t^2(\bs\theta)\ge\underline\omega$, whereas the standard polynomial nonredundancy conditions imply global identification and positive definiteness of $I_0$. Because the GARCH polynomial is uniformly stable, the infinite-ARCH coefficients and their first two derivatives decay at geometric rates. As such, replacing the unobserved pre-sample values by arbitrary fixed values produces the bounds in Assumption~\ref{ass:approximation}. Given the matrix-moment contraction, the same stable recursions for the (normalized) first and second parameter derivatives yield the envelope in Assumption~\ref{ass:moments}. These are the GJR-GARCH counterparts of Conditions A2, A5, A7, A8, and A10 in \cn{francq2015risk}. In turn, the plug-in Lipschitz bounds are polynomial for the Gaussian instrumenatl likelihood because $a(u)=u^2-1$ and $b(u)=1-3u^2$. As a result, it suffices to impose the existence of fourth moments in Assumption~\ref{ass:moments}.  Finally, stationarity ensures forecast-state tightness, while the above geometric forgetting argument gives the pointwise initialization bound in Assumption~\ref{ass:forecast}.
\end{proof}

\subsection{Auxiliary results}

We enunciate two intermediate results. The first documents two elementary properties that replace ordinary differentiability of the expectile score, whereas the second gives a uniform approximation result that we apply to the expectile score, to the QML score and Hessian, and to some components of the asymptotic variance estimator.

\begin{lemma}[Piecewise-linear expectile identity]\label{lem:expectile_identity}
Let $\bar\tau=\max\{\tau,1-\tau\}$ and $c_\tau=|1-2\tau|$. For every $u,v,h\in\mathbb{R}$, $|\phi_\tau(u)-\phi_\tau(v)|\le\bar\tau|u-v|$ and $\left|\phi_\tau(u+h)-\phi_\tau(u)-w_\tau(u)h\right|\le c_\tau|h|\bs{1}(|u|\le|h|)$.
\end{lemma}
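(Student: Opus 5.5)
Both inequalities follow from writing $\phi_\tau(u)=\tau u^+ + (1-\tau)u^-$, where $u^+=\max\{u,0\}$ and $u^-=\min\{u,0\}$. With this decomposition, $\phi_\tau$ is continuous, piecewise linear, and has a single kink at the origin. Its slope is $1-\tau$ on $(-\infty,0)$ and $\tau$ on $(0,\infty)$, so $w_\tau(u)$ is exactly the one-sided derivative of $\phi_\tau$ at $u$: the right derivative at $u=0$ and the ordinary derivative elsewhere. The plan is to use only this structure, with no smoothness beyond it.

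\textbf{Lipschitz bound.} Since $\phi_\tau$ is absolutely continuous with a.e. derivative $w_\tau$, we have $\phi_\tau(u)-\phi_\tau(v)=\int_v^u w_\tau(s)\,\textrm{d}s$. Because $0\le w_\tau\le\max\{\tau,1-\tau\}=\bar\tau$, this gives $|\phi_\tau(u)-\phi_\tau(v)|\le\bar\tau|u-v|$ immediately.

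\textbf{Linearization bound.} Write the remainder as
\[
R=\phi_\tau(u+h)-\phi_\tau(u)-w_\tau(u)h=\int_u^{u+h}\{w_\tau(s)-w_\tau(u)\}\,\textrm{d}s.
\]
There are two cases, depending on whether the kink lies between $u$ and $u+h$.

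\emph{Case 1: $u$ and $u+h$ lie on the same side.} Suppose either both are $\ge0$, or both are $<0$, or $h=0$. Then $w_\tau(s)=w_\tau(u)$ for almost every $s$ between $u$ and $u+h$. The integrand vanishes a.e., so $R=0$ and the bound holds trivially.

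\emph{Case 2: the interval crosses zero.} This means $u\ge0>u+h$ or $u<0\le u+h$. In either subcase $|u|\le|h|$, so the indicator $\mathbf 1(|u|\le|h|)$ equals one. The integrand satisfies $|w_\tau(s)-w_\tau(u)|\le|(1-\tau)-\tau|=c_\tau$, and the integration range has length $|h|$. Hence $|R|\le c_\tau|h|=c_\tau|h|\mathbf 1(|u|\le|h|)$.

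The only delicate point is the convention at $u=0$, where $w_\tau(0)=\tau$. This is handled by the case split: $u=0$ with $h\ge0$ falls in Case 1, and $u=0$ with $h<0$ falls in Case 2. The sharper bound $c_\tau|u|$ is available in Case 2, but we only need $c_\tau|h|$. The lemma is elementary and has no real obstacle; the care is purely bookkeeping at the kink.
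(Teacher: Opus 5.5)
Your proof is correct and essentially matches the paper's: both use that $\phi_\tau$ is piecewise linear with slopes $1-\tau$ and $\tau$, and both split on whether $u$ and $u+h$ lie on the same side of the kink, with your integral $\int_u^{u+h}\{w_\tau(s)-w_\tau(u)\}\,\mathrm{d}s$ replacing the paper's explicit evaluation of the remainder as $\pm(2\tau-1)(u+h)$ in the crossing cases. One correction to your closing aside: that explicit value shows the sharper crossing-case bound is $c_\tau|u+h|=c_\tau(|h|-|u|)$, not $c_\tau|u|$ (for example, $u=-1$, $h=10$ gives $|R|=9c_\tau$), though this is harmless because your argument only uses $c_\tau|h|$.
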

\begin{proof}
Because
\[
 \phi_\tau(u)= \begin{cases}
  (1-\tau)u & u<0\\
  \tau u    & u\ge0,
 \end{cases}
\]
the slopes of $\phi_\tau$ are bounded in absolute value by $\bar\tau$. As such, the first inequality result follows from the mean-value argument on each half-line given the continuity at zero. To establish the second inequality, we have to consider four cases. If $u<0$ and $u+h<0$, then
\[
 \phi_\tau(u+h)-\phi_\tau(u)-w_\tau(u)h=(1-\tau)(u+h)-(1-\tau)u-(1-\tau)h=0.
\]
In fact, the same applies if $u\ge0$ and $u+h\ge0$. Suppose next that $u<0\le u+h$. Since $w_\tau(u)=1-\tau$,
\begin{align*}
 \phi_\tau(u+h)-\phi_\tau(u)-w_\tau(u)h 
    &=\tau(u+h)-(1-\tau)u-(1-\tau)h\\
    &=(2\tau-1)(u+h).
\end{align*}
The crossing event is $\{u<0\le u+h\}\cup\{u+h<0\le u\}$, which implies $0\le u+h\le h=|h|$ and $|u|\le|h|$. Altogether, we obtain $|(2\tau-1)(u+h)|\le c_\tau|h|\bs{1}(|u|\le|h|)$. Finally, if $u\ge0>u+h$, then $h<0$, $w_\tau(u)=\tau$, and $\phi_\tau(u+h)-\phi_\tau(u)-w_\tau(u)h=(1-2\tau)(u+h)$. However, since $h\le u+h<0$, it follows that $|u+h|\le|h|$ just as in the crossing event, completing the proof.
\end{proof}

\begin{lemma}[Uniform finite-history approximation]\label{lem:approx}
Assume $V\subset\Theta$ and $\mathcal{B}$ are compact. Let $\bs{Z}_t(\bs\theta)$ collect any subset of $\sigma_t(\bs\theta)$ and its first two derivatives, with $\widetilde{\bs{Z}}_t(\bs\theta)$ denoting its finite-history counterpart. Suppose that $\sup_{\bs\theta\in V}\|\bs{Z}_t(\bs\theta)-\widetilde{\bs{Z}}_t(\bs\theta)\|\le C_Z\rho^t$ for some nonnegative random variable $C_Z$, and that, for a measurable map $\bs{q}_t(z,b)$, there is a nonnegative random variable $M_t$ such that
\begin{equation}\label{eq:q_lipschitz_initialization}
 \sup_{\bs\theta\in V,\,b\in\mathcal{B}}\left\|\bs{q}_t\big(\bs{Z}_t(\bs\theta),b\big)-\bs{q}_t\big(\widetilde{\bs{Z}}_t(\bs\theta),b\big)\right\|\le C_ZM_t\rho^t,
\end{equation}
with $\sup_{t\ge1}\E(C_ZM_t)<\infty$. It then follows almost surely that
\begin{equation}
 \sup_{\bs\theta\in V,\,b\in\mathcal{B}}\left\|\frac{1}{n}\sum_{t=1}^n\left[\bs{q}_t\big(\bs{Z}_t(\bs\theta),b\big)-\bs{q}_t\big(\widetilde{\bs{Z}}_t(\bs\theta),b\big)\right]\right\|=O(1/n),\label{eq:uniform_approx_general}
\end{equation}
In particular, for every compact $K\subset\mathbb{R}$,
\begin{equation}\label{eq:uniform_approx_expectile}
 \sup_{\bs\theta\in V,\,\xi\in K} \sqrt{n}\,|Q_n(\xi,\bs\theta) - \widetilde Q_n(\xi,\bs\theta)| \longrightarrow0\qquad a.s.
\end{equation}
\end{lemma}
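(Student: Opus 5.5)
The plan is to prove the general bound \eqref{eq:uniform_approx_general} by a pathwise summation argument, and then obtain \eqref{eq:uniform_approx_expectile} as a special case. The special case uses the Lipschitz property in Lemma~\ref{lem:expectile_identity} together with the lower bound on the scale in Assumption~\ref{ass:volatility_process}.

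For \eqref{eq:uniform_approx_general}, first bring the supremum inside the sum:
\[
\sup_{\bs\theta,b}\Big\|\frac1n\sum_{t=1}^n[\cdots]\Big\|\le\frac1n\sum_{t=1}^n C_ZM_t\rho^t .
\]
It then suffices to show that $S:=\sum_{t\ge1}C_ZM_t\rho^t<\infty$ almost surely, since the right-hand side is bounded by $S/n$. By monotone convergence,
\[
\E S=\sum_t\rho^t\E(C_ZM_t)\le\sup_t\E(C_ZM_t)/(1-\rho)<\infty .
\]
Hence $S$ is finite almost surely, which gives the $O(1/n)$ rate with a random constant. No ergodicity or stationarity is needed in this step, only the uniform first-moment bound.

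For \eqref{eq:uniform_approx_expectile}, take $\bs{Z}_t=\sigma_t$, $b=\xi\in K$, and $q_t(z,\xi)=\phi_\tau(y_t/z-\xi)$, so that $Q_n-\widetilde Q_n=\frac1n\sum_t[q_t(\sigma_t(\bs\theta),\xi)-q_t(\widetilde\sigma_t(\bs\theta),\xi)]$. By the first inequality of Lemma~\ref{lem:expectile_identity},
\[
|q_t(\sigma_t,\xi)-q_t(\widetilde\sigma_t,\xi)|\le\bar\tau|y_t|\,\frac{|\widetilde\sigma_t-\sigma_t|}{\sigma_t\widetilde\sigma_t}\le\bar\tau\,\underline\sigma^{-2}|y_t|\,C_0\rho^t ,
\]
uniformly in $\bs\theta\in\Theta$ and in $\xi$, using Assumption~\ref{ass:volatility_process} and Assumption~\ref{ass:approximation}. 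So the Lipschitz condition \eqref{eq:q_lipschitz_initialization} holds with $C_Z=C_0$ and $M_t=\bar\tau\underline\sigma^{-2}|y_t|$. By Cauchy--Schwarz, stationarity, $\E C_0^2<\infty$ and $\E y_t^2<\infty$ (Assumption~\ref{ass:moments} with $r\ge2$), we get $\sup_t\E(C_0|y_t|)\le(\E C_0^2)^{1/2}(\E y_1^2)^{1/2}<\infty$. The general part therefore gives $\sup|Q_n-\widetilde Q_n|\le S/n$ almost surely, and multiplying by $\sqrt n$ yields $\sqrt n\,S/n=S/\sqrt n\to0$ almost surely.

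The only mildly delicate point is that $M_t$ depends on $t$ and is not independent of $C_Z$, which is why I control the product $C_ZM_t$ through Cauchy--Schwarz rather than factorizing. Stationarity of $y_t$ is what makes the resulting bound uniform in $t$. Everything else is the summability argument above.
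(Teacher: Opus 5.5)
Your proposal is correct and follows the paper's proof essentially step for step. The general bound comes from the almost-sure finiteness of $\sum_t C_Z M_t\rho^t$ via Tonelli/monotone convergence. The expectile case uses $q_t(z,\xi)=\phi_\tau(y_t/z-\xi)$ with $C_Z=C_0$ and $M_t=\bar\tau\underline\sigma^{-2}|y_t|$, plus Cauchy--Schwarz, and you also spell out the final step $S/\sqrt{n}\to0$ that the paper leaves implicit.
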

\begin{proof}
Let $\Delta_t=\sup_{\bs\theta\in V,\,b\in\mathcal{B}} \left\|\bs{q}_t\big(\bs{Z}_t(\bs\theta),b\big)-\bs{q}_t\big(\widetilde{\bs{Z}}_t(\bs\theta),b\big)\right\|$ and $K_\Delta:=\sum_{t=1}^{\infty}\Delta_t$. It follows from \eqref{eq:q_lipschitz_initialization} that
\[
 \E(K_\Delta)\le\sum_{t=1}^{\infty}\rho^t\E(C_ZM_t)\le\sup_{t\ge1}\E(C_ZM_t)\,\frac{\rho}{1-\rho}<\infty
\]
by Tonelli's theorem. In turn, this implies that $K_\Delta$ is almost surely finite (since it is a nonnegative random variable with finite mean). Accordingly, for every $n$,
\[
 \sup_{\bs\theta,b}\left\|\frac{1}{n}\sum_{t=1}^n[\bs{q}_t\big(\bs{Z}_t(\bs\theta),b\big)-\bs{q}_t\big(\widetilde{\bs{Z}}_t(\bs\theta),b\big)]\right\|\le \frac{K_\Delta}{n},
\]
establishing \eqref{eq:uniform_approx_general}. In the expectile application, we consider $\bs{q}_t(z,\xi)=\phi_\tau(y_t/z-\xi)$. The lower bound on the scales and Lemma~\ref{lem:expectile_identity} imply
\begin{align*}
 \sup_{\bs\theta\in V,\,\xi\in K} |\widetilde\psi_{t,\tau}(\bs\theta,\xi) -\psi_{t,\tau}(\bs\theta,\xi)|&\quad\le \bar\tau |y_t| \sup_{\bs\theta\in V} \left|\frac{1}{\widetilde\sigma_t(\bs\theta)} -\frac{1}{\sigma_t(\bs\theta)}\right|\\
 &\quad= \bar\tau |y_t| \sup_{\bs\theta\in V} \frac{|\widetilde\sigma_t(\bs\theta) -\sigma_t(\bs\theta)|}  {\widetilde\sigma_t(\bs\theta)\sigma_t(\bs\theta)}\le\bar\tau\underline\sigma^{-2}|y_t|C_0\rho^t.
\end{align*}
Moreover, $\E(C_0|y_t|)\le(\E[C_0^2])^{1/2}(\E[y_t^2])^{1/2}<\infty$ by Cauchy-Schwarz under Assumptions~\ref{ass:stationarity_mixing} and \ref{ass:approximation}. This ensures that \eqref{eq:q_lipschitz_initialization} holds with $C_Z=C_0$ and $M_t=\bar\tau\underline\sigma^{-2}|y_t|$, thus proving \eqref{eq:uniform_approx_expectile}. For the QML criterion and score, the chain rule and the mean-value theorem imply not only that $|g\{y_t,\widetilde\sigma_t(\bs\theta)\}-g\{y_t,\sigma_t(\bs\theta)\}|\le G_{1,t}C_0\rho^t$, but also that $\left\|\partial_{\bs\theta}g\{y_t,\widetilde\sigma_t(\bs\theta)\}-\partial_{\bs\theta}g\{y_t,\sigma_t(\bs\theta)\}\right\|\le M_{1,t}\rho^t$ uniformly on $V$, where $M_{1,t}$ is a finite sum of products of $G_{1,t}$, $G_{2,t}$, $B_t$, $C_0$ and $C_1$. The analogous Hessian decomposition uses $G_{1,t}$, $G_{2,t}$, $G_{3,t}$, $B_t$, $C_0$, $C_1$ and $C_2$. The moment requirements in Assumptions~\ref{ass:approximation}, \ref{ass:moments}, and \ref{ass:qml_moments} imply $\sup_t\E M_{j,t}<\infty$ by Hölder's inequality. Accordingly, not only the exact and finite-history QML scores are equivalent at the root-$n$ order, but also their average Hessians are asymptotically equivalent.
\end{proof}

The next intermediate result controls the nonsmooth remainder on a neighborhood of order $n^{-1/2}$. It is the main technical device we use to establish the Bahadur representation.

\begin{lemma}[Uniform local expansion of the expectile score]
\label{lem:local_expectile_expansion}
For $M<\infty$, let
\[
 \mathcal{N}_n(M)=\left\{(\bs\theta,\xi):~\sqrt{n}\|\bs\theta-\bs\theta_0\|\le M,~\sqrt{n}|\xi-\xi_0|\le M\right\}.
\]
Under Assumptions~\ref{ass:innovation_process}, \ref{ass:stationarity_mixing}, \ref{ass:volatility_process}, and \ref{ass:moments} with $r=2$,
\begin{align}
 &\sup_{(\bs\theta,\xi)\in\mathcal{N}_n(M)}\Bigg|\sqrt{n}\big[Q_n(\xi,\bs\theta)-Q_n(\xi_0,\bs\theta_0)\big]\notag\\[-0.2cm]
 &\hspace{2.4cm}+\left(\frac{1}{n}\sum_{t=1}^n w_t\eta_t\bs{D}_t'\right)\sqrt{n}(\bs\theta-\bs\theta_0)+\left(\frac{1}{n}\sum_{t=1}^n w_t\right)\sqrt{n}(\xi-\xi_0)\Bigg|=o_p(1)
 \label{eq:uniform_local_expansion}
\end{align}
and $\sqrt{n}\big[Q_n(\xi,\bs\theta)-Q_n(\xi_0,\bs\theta_0)\big]=-\xi_0\Psi_0\bs{J}_0'\sqrt{n}(\bs\theta-\bs\theta_0)-\Psi_0\sqrt{n}(\xi-\xi_0)+o_p(1)$ uniformly on $\mathcal{N}_n(M)$.
\end{lemma}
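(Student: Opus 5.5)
We linearize each summand of $Q_n(\xi,\bs\theta)-Q_n(\xi_0,\bs\theta_0)$ with the second inequality of Lemma~\ref{lem:expectile_identity}, and then control the nonsmooth remainder uniformly over the shrinking neighborhood $\mathcal N_n(M)$. Write $\eta_t(\bs\theta)-\xi=(\eta_t-\xi_0)+h_t(\bs\theta,\xi)$, where
\[
 h_t(\bs\theta,\xi)=\eta_t\{A_t(\bs\theta)-1\}-(\xi-\xi_0),
\]
since $\eta_t(\bs\theta)=A_t(\bs\theta)\eta_t$. Applying Lemma~\ref{lem:expectile_identity} with $u=\eta_t-\xi_0$ and $h=h_t$ gives
\[
 \psi_{t,\tau}(\bs\theta,\xi)-\psi_t=w_t\,h_t(\bs\theta,\xi)+R_t(\bs\theta,\xi),\qquad |R_t|\le c_\tau|h_t|\,\bs 1(|\eta_t-\xi_0|\le|h_t|).
\]
Next expand $A_t(\bs\theta)-1$ to second order around $\bs\theta_0$. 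Because $\partial_{\bs\theta}A_t(\bs\theta_0)=-\bs D_t$ and the Hessian of $A_t$ is dominated by $B_t$ on $V$ (Assumption~\ref{ass:moments}), we get
\[
 A_t(\bs\theta)-1=-\bs D_t'(\bs\theta-\bs\theta_0)+r_t(\bs\theta),\qquad |r_t(\bs\theta)|\le B_t\|\bs\theta-\bs\theta_0\|^2 .
\]
Substituting, the linear part of $\sqrt n\,n^{-1}\sum_t w_th_t$ equals the two terms displayed in \eqref{eq:uniform_local_expansion}, with the opposite sign. What is left are two remainders, and we bound each below.

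The first remainder is the curvature term $\sqrt n\,n^{-1}\sum_t w_t\eta_t r_t(\bs\theta)$. On $\mathcal N_n(M)$ it is bounded by
\[
 \bar\tau M^2 n^{-1/2}\cdot n^{-1}\sum_t|\eta_t|B_t .
\]
By independence of $\eta_t$ and the predictable $B_t$, Cauchy--Schwarz, and the ergodic theorem, this is $O_p(n^{-1/2})$, uniformly over the neighborhood.

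The second remainder is the nonsmooth term $n^{-1/2}\sum_t R_t$, which I expect to be the main obstacle. Uniformly on $\mathcal N_n(M)$ we have
\[
 |h_t|\le n^{-1/2}M\big(|\eta_t|B_t+1\big)+M^2n^{-1}|\eta_t|B_t=:n^{-1/2}\bar h_t ,
\]
where $\bar h_t$ no longer depends on $(\bs\theta,\xi)$. Since the indicator bound is monotone in $|h_t|$, this yields
\[
 \sup_{\mathcal N_n(M)}\Big|n^{-1/2}\sum_t R_t\Big|\le c_\tau\,\frac1n\sum_{t=1}^n \bar h_t\,\bs 1\big(|\eta_t-\xi_0|\le n^{-1/2}\bar h_t\big).
\]
The supremum is thus removed by a single envelope, so no empirical-process chaining is needed. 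To show the right side is $o_p(1)$, truncate $\bar h_t$ at a level $K$. On $\{\bar h_t\le K\}$ the term is at most $K\,\bs 1(|\eta_t-\xi_0|\le K/\sqrt n)$, whose expectation is $K\,\Pr(|\eta_t-\xi_0|\le K/\sqrt n)$. This tends to $K\Pr(\eta_t=\xi_0)=0$, because $F_\eta$ is continuous at $\xi_0$ by Assumption~\ref{ass:innovation_process}; this is exactly where that assumption enters. On $\{\bar h_t>K\}$ the term is at most $\bar h_t\bs 1(\bar h_t>K)$. Its expectation is small uniformly in $n$ as $K\to\infty$, since $\E\bar h_t<\infty$: $|\eta_t|B_t$ is integrable by independence and $r=2$. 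Markov's inequality then completes \eqref{eq:uniform_local_expansion}. The delicate point is that $\bar h_t$ involves $\eta_t$ itself, so we cannot condition on $\mathcal F_{t-1}$ directly. The truncation avoids this by using only stationarity and the atomless condition at $\xi_0$.

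For the second display, apply the ergodic theorem (Assumption~\ref{ass:stationarity_mixing}) to the random coefficients. First, $n^{-1}\sum_t w_t\to\Psi_0$ a.s. Second, $n^{-1}\sum_t w_t\eta_t\bs D_t\to\E[w_t\eta_t]\,\bs J_0$ a.s., by independence of $\eta_t$ and $\bs D_t$. The population first-order condition $\E[w_t(\eta_t-\xi_0)]=0$ gives $\E[w_t\eta_t]=\xi_0\Psi_0$. Multiplying these limits by the $O(1)$ quantities $\sqrt n(\bs\theta-\bs\theta_0)$ and $\sqrt n(\xi-\xi_0)$ gives errors that are $o_p(1)$ uniformly on $\mathcal N_n(M)$, which yields the stated representation.
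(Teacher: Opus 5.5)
Your proof is correct and follows essentially the same route as the paper. Both arguments decompose $\eta_t(\theta)-\xi=(\eta_t-\xi_0)+h_t$ and Taylor-expand $A_t$. Both then dominate the kink remainder by a single parameter-free envelope average, which removes the supremum. Both finish with the ergodic theorem and the first-order condition $\E[w_t\eta_t]=\xi_0\Psi_0$.

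The one difference is cosmetic: you show that the expected envelope average vanishes by truncating at level $K$, whereas the paper applies dominated convergence directly. Since your $\bar h_t$ depends on $n$, your claim that the bound holds ``uniformly in $n$'' should explicitly use the $n$-free dominating variable $(M+M^2)(1+B_t|\eta_t|)$.
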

\begin{proof}
We proceed in four steps. The first step provides a uniform expansion for the scale ratio. The second step derives an exact decomposition of the nonsmooth score, whereas the third step establishes uniform control of both smooth and nonsmooth remainders. Finally, the last step shows that we can replace sample coefficients by their limits.\vskip 1em

\noindent\emph{Step 1}\\
We let $A_t(\bs\theta)=\sigma_t(\bs\theta_0)/\sigma_t(\bs\theta)$, $r_t=\eta_t-\xi_0$, and $h_t(\bs\theta,\xi)=\big[A_t(\bs\theta)-1\big]\eta_t-(\xi-\xi_0)$. Assumption~\ref{ass:volatility_process} dictates that $A_t(\bs\theta)$ is twice continuously differentiable on $V$, with first derivative at the true parameter amounting to $\partial_{\bs\theta}A_t(\bs\theta_0)=-\partial_{\bs\theta}\log\sigma_t(\bs\theta_0)=-\bs{D}_t$. It then follows from Taylor's theorem and the Hessian envelope in Assumption~\ref{ass:moments} that
\begin{align}
 A_t(\bs\theta)-1&=-\bs{D}_t'(\bs\theta-\bs\theta_0)+R_{A,t}(\bs\theta), \label{eq:A_local_taylor}\\
 |R_{A,t}(\bs\theta)|&\le C_MB_t\|\bs\theta-\bs\theta_0\|^2\le C_MB_t/n. \label{eq:A_remainder_bound}
\end{align}
uniformly for $\sqrt{n}\|\bs\theta-\bs\theta_0\|\le M$. In turn, the first-derivative envelope in Assumption~\ref{ass:moments} implies
\begin{equation}\label{eq:h_uniform_bound}
 \sup_{(\bs\theta,\xi)\in\mathcal{N}_n(M)} |h_t(\bs\theta,\xi)| \le \frac{C_M}{\sqrt{n}}(1+B_t|\eta_t|)
\end{equation}
for $n$ large enough.\vskip 1em

\noindent\emph{Step 2}\\
Since $\psi_{t,\tau}(\bs\theta,\xi)=\phi_\tau\big(r_t+h_t(\bs\theta,\xi)\big)$, Lemma~\ref{lem:expectile_identity} yields $\phi_\tau(r_t+h_t)-\phi_\tau(r_t)=w_t h_t+\rho_{t,n}(\bs\theta,\xi)$ with
\begin{equation}\label{eq:kink_remainder_bound}
 |\rho_{t,n}(\bs\theta,\xi)| \le c_\tau |h_t(\bs\theta,\xi)| \bs{1}\{|r_t|\le |h_t(\bs\theta,\xi)|\}.
\end{equation}
Plugging \eqref{eq:A_local_taylor} into $w_t h_t$ gives the algebraic identity
\begin{equation}
 \psi_{t,\tau}(\bs\theta,\xi)-\psi_t =-w_t\eta_tD_t'(\bs\theta-\bs\theta_0)-w_t(\xi-\xi_0)   +w_t\eta_tR_{A,t}(\bs\theta)+\rho_{t,n}(\bs\theta,\xi).\label{eq:score_local_decomp}
\end{equation}\vskip 1em

\noindent\emph{Step 3}\\
For the smooth Taylor remainder, \eqref{eq:A_remainder_bound} implies that
\begin{equation}\label{eq:inequality}
 \sup_{\mathcal{N}_n(M)}\sqrt{n}\left|\frac{1}{n}\sum_{t=1}^n w_t\eta_tR_{A,t}(\bs\theta)\right|\le\frac{C_M}{\sqrt{n}}\frac{1}{n}\sum_{t=1}^n B_t|\eta_t|,
\end{equation}
where the sequence $B_t|\eta_t|$ is integrable by Hölder's inequality under Assumptions~\ref{ass:innovation_process} and \ref{ass:moments}. It is also stationary and ergodic because it is a measurable function of the stationary ergodic process and its past. It then follows from the ergodic theorem that $\frac{1}{n}\sum_tB_t|\eta_t|=O_p(1)$, thus ensuring that the right-hand side of the inequality \eqref{eq:inequality} is $o_p(1)$. As for the kink remainder, combining \eqref{eq:h_uniform_bound} and \eqref{eq:kink_remainder_bound} gives way to
\begin{equation}
 \sup_{\mathcal{N}_n(M)}\sqrt{n}\left|\frac{1}{n}\sum_{t=1}^n\rho_{t,n}(\bs\theta,\xi)\right|\le\frac{C_M}{n}\sum_{t=1}^n(1+B_t|\eta_t|)\bs{1}\left[|\eta_t-\xi_0|\le\frac{C_M}{\sqrt{n}}(1+B_t|\eta_t|)\right].
 \label{eq:kink_average_bound}
\end{equation}
For each fixed realization, the indicator on the right-hand side of \eqref{eq:kink_average_bound} converges to zero as long as $\eta_t\ne\xi_0$. However, the latter occurs with probability one under the continuity condition in Assumption~\ref{ass:innovation_process}. In turn, $B_t|\eta_t|$ is integrable by Hölder's inequality and independence of $\eta_t$ from $\mathcal{F}_{t-1}$. Dominated convergence therefore yields
\[
 \E\left\{\left(1+B_t|\eta_t|\right)\bs{1}\left[|\eta_t-\xi_0|\le\frac{C_M}{\sqrt{n}}(1+B_t|\eta_t|)\right]\right\}\longrightarrow0.
\]
Taking expectations in \eqref{eq:kink_average_bound} and applying Markov's inequality establish that the kink term is $o_p(1)$ uniformly on $\mathcal{N}_n(M)$.  The uniform local expansion in \eqref{eq:uniform_local_expansion} then follows by summing \eqref{eq:score_local_decomp}.\vskip 1em

\noindent\emph{Step 4}\\
The variables $w_t$ and $w_t\eta_t\bs{D}_t$ are measurable functions of the stationary ergodic process. They are also integrable because $w_t$ is bounded and $\E\|w_t\eta_t\bs{D}_t\|\le\bar\tau\,\E(|\eta_t|B_t)<\infty$. By the ergodic theorem, $\frac{1}{n}\sum_{t=1}^n w_t\to\Psi_0$ and $\frac{1}{n}\sum_{t=1}^n w_t\eta_t\bs{D}_t\to\E(w_t\eta_t\bs{D}_t)$ almost surely. Since $\bs{D}_t$ is $\mathcal{F}_{t-1}$-measurable and $\eta_t$ is independent of $\mathcal{F}_{t-1}$, $\E(w_t\eta_t\bs{D}_t)=\E(w_t\eta_t)\E(\bs{D}_t)$. However, the expectile first-order condition implies that $\E\{w_t(\eta_t-\xi_0)\}=\E(w_t\eta_t)-\xi_0\E(w_t)$ must equal zero, so that $\E(w_t\eta_t)=\xi_0\Psi_0$ and $\E(w_t\eta_t\bs{D}_t)=\xi_0\Psi_0\bs{J}_0$. Finally, replacing sample coefficients by their limits does not affect uniformity in that
\[
 \sup_{\mathcal{N}_n(M)}\left|\left(\frac{1}{n}\sum_{t=1}^n w_t\eta_t\bs{D}_t'-\xi_0\Psi_0\bs{J}_0'\right)\sqrt{n}(\bs\theta-\bs\theta_0)\right|
 \le M\left\|\frac{1}{n}\sum_{t=1}^n w_t\eta_t\bs{D}_t-\xi_0\Psi_0\bs{J}_0\right\|=o_p(1)
\]
and, similarly,
\[
 \sup_{\mathcal{N}_n(M)}\left|\left(\frac{1}{n}\sum_{t=1}^n w_t-\Psi_0\right)\sqrt{n}(\xi-\xi_0)\right|\le M\left|\frac{1}{n}\sum_{t=1}^n w_t-\Psi_0\right|=o_p(1),
\]
thereby completing the proof.
\end{proof}

\subsection{First-step QML estimation}

\begin{proposition}[QMLE consistency and linear representation]
\label{prop:consistency_qmle}
If Assumptions~\ref{ass:innovation_process} to~\ref{ass:differentiability_instrumental} hold with $r=2$ in Assumption~\ref{ass:moments}, then $\widehat{\bs\theta}_n\asconv\bs\theta_0$. If we further assume that $\bs\theta_0$ is in the interior of $\Theta$ and Assumptions~\ref{ass:moments} with $r=4$ and~\ref{ass:qml_moments} hold, then
\begin{equation}\label{eq:qMLE_linear_rep}
 \sqrt{n}(\widehat{\bs\theta}_n-\bs\theta_0)=\bs{H}_0^{-1}\frac{1}{\sqrt{n}}\sum_{t=1}^n\bs{s}_t+o_p(1),
\end{equation}
where $\bs{s}_t=a(\eta_t)\bs{D}_t$.
\end{proposition}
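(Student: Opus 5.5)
The plan follows the classical Wald/Francq--Zakoïan route, with the finite-history issues handled through Lemma~\ref{lem:approx}.

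\emph{Consistency.} Let $G_n(\bs\theta)=n^{-1}\sum_t g\{y_t,\sigma_t(\bs\theta)\}$ be the stationary criterion. Lemma~\ref{lem:approx}, with the bound $|g\{y_t,\widetilde\sigma_t\}-g\{y_t,\sigma_t\}|\le G_{1,t}C_0\rho^t$, gives $\sup_{\bs\theta}|\widetilde G_n-G_n|\to0$ almost surely. I would extend this from $V$ to all of $\Theta$ using the uniform lower bound $\underline\sigma$ and the polynomial growth in Assumption~\ref{ass:differentiability_instrumental}.

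Next, identification. By scale equivariance, $g\{y_t,\sigma_t(\bs\theta)\}=g\{\eta_t,\sigma_t(\bs\theta)/\sigma_t(\bs\theta_0)\}-\log\sigma_t(\bs\theta_0)$. Condition on $\mathcal F_{t-1}$ and apply Assumption~\ref{ass:identification_instrumental} with $s=\sigma_t(\bs\theta)/\sigma_t(\bs\theta_0)$. This gives $\E g\{y_t,\sigma_t(\bs\theta)\}<\E g\{y_t,\sigma_t(\bs\theta_0)\}$ unless $\sigma_t(\bs\theta)=\sigma_t(\bs\theta_0)$ a.s. Assumption~\ref{ass:volatility_process} then forces $\bs\theta=\bs\theta_0$. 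The expectation of $g$ may equal $-\infty$ away from $\bs\theta_0$, but it is bounded above, which is all the argument needs.

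Finally, the standard compactness argument: cover $\Theta\setminus\{\bs\theta_0\}$ by small balls and apply the ergodic theorem to $\sup_{\text{ball}} g$. By monotone convergence, the expected supremum over a shrinking ball tends to the pointwise value. Together with the uniform approximation, this yields $\widehat{\bs\theta}_n\asconv\bs\theta_0$.

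\emph{Linear representation.} Since $\bs\theta_0$ is interior and $\widehat{\bs\theta}_n$ is consistent, eventually $\partial_{\bs\theta}\widetilde G_n(\widehat{\bs\theta}_n)=0$. A mean-value expansion row by row gives
\[
0=\sqrt n\,\partial_{\bs\theta}\widetilde G_n(\bs\theta_0)+\partial^2_{\bs\theta\bs\theta'}\widetilde G_n(\bar{\bs\theta}_n)\sqrt n(\widehat{\bs\theta}_n-\bs\theta_0).
\]
I would handle the two terms as follows.
\begin{enumerate}
\item By Lemma~\ref{lem:approx}, the score difference is $O(n^{-1/2})$ almost surely, so $\sqrt n\,\partial_{\bs\theta}\widetilde G_n(\bs\theta_0)=n^{-1/2}\sum_t\bs s_t+o(1)$.
\item The sequence $\bs s_t=a(\eta_t)\bs D_t$ is a stationary ergodic martingale difference, because $\E a(\eta_t)=0$ and $\eta_t$ is independent of $\mathcal F_{t-1}$. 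Its second moment is finite by Cauchy--Schwarz, since $\E a^4<\infty$ and $\E B_t^4<\infty$. The martingale CLT (Billingsley) therefore gives $n^{-1/2}\sum_t\bs s_t=O_p(1)$.
\item For the Hessian, Lemma~\ref{lem:approx} again replaces the finite-history version by the stationary one.
\end{enumerate}

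\emph{Main obstacle: Hessian convergence.} The key step is to show that $\sup_{\bs\theta\in V_n}\|\partial^2 G_n(\bs\theta)+\bs H_0\|\to0$ on shrinking neighborhoods. I would write $\partial^2 g\{y_t,\sigma_t(\bs\theta)\}$ via the chain rule as products of $\partial_s g$, $\partial_s^2 g$ and derivatives of $\sigma_t$. These are dominated on $V$ by the envelopes $G_{1,t},G_{2,t}B_t^2$, which are integrable by Hölder. Continuity in $\bs\theta$ plus dominated convergence makes $\E\sup_{\|\bs\theta-\bs\theta_0\|<\delta}\|\partial^2g(\bs\theta)-\partial^2g(\bs\theta_0)\|\to0$ as $\delta\to0$. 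The ergodic theorem then gives convergence at $\bs\theta_0$, namely $n^{-1}\sum\partial^2 g\to-\bs H_0$. The Remark on $\bs H_0=\kappa_h\bs I_0$ ensures that $\bs H_0$ is invertible.

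Inverting the Hessian then yields \eqref{eq:qMLE_linear_rep}. The delicate part is verifying that the envelope products have enough moments uniformly on $V$, including on the intervals $\mathcal I_t(\bs\theta)$ needed for the finite-history step.
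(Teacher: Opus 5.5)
Your proposal is correct and follows essentially the same route as the paper. The linear representation is obtained exactly as there: a mean-value expansion of the interior first-order condition, Lemma~\ref{lem:approx} for the finite-history score and Hessian, the martingale central limit theorem for $n^{-1/2}\sum_{t=1}^n\bs{s}_t$, local uniform convergence of the Hessian, and invertibility of $\bs{H}_0=\kappa_h\bs{I}_0$. The only difference is that the paper gets $\widehat{\bs\theta}_n\asconv\bs\theta_0$ by simply invoking Theorem~1 of \cn{francq2015risk}, whereas you unpack the Wald-type identification and compactness argument behind that theorem. One technical point in that consistency step: at $r=2$ you do not have the envelopes $G_{j,t}$ of Assumption~\ref{ass:qml_moments}, and the moment hypothesis of Lemma~\ref{lem:approx} may fail. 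The initialization error is then most cleanly handled by bounding $|\partial_s g(y_t,s)|$ for $s\ge\underline\sigma$ via Assumption~\ref{ass:differentiability_instrumental}, and using almost-sure summability of $C_0\rho^t(1+|y_t|^{\nu_h})$.
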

\begin{proof}
Consistency follows readily from Theorem 1 of \cn{francq2015risk}, with stationarity and ergodicity stemming from Assumption~\ref{ass:stationarity_mixing}; compactness, continuity, positivity and scale identification from Assumptions~\ref{ass:compactness} and~\ref{ass:volatility_process}; population identification from Assumption~\ref{ass:identification_instrumental}; and finite-history forgetting from Assumption~\ref{ass:approximation}. The smoothness and necessary moment conditions hold by Assumptions~\ref{ass:moments} with $r=2$ and~\ref{ass:differentiability_instrumental}, whereas Assumptions~\ref{ass:moments} with $r=4$ and~\ref{ass:qml_moments} meet the additional fourth-order requirements. We next establish the asymptotic linear representation. Almost surely, the interior QML estimator satisfies
\[
 0=\frac{1}{\sqrt{n}}\sum_{t=1}^n\partial_{\bs\theta} g\big(y_t,\widetilde\sigma_t(\bs\theta_0)\big)+\left[\frac{1}{n}\sum_{t=1}^n \partial_{\bs\theta\bs\theta'}^2 g\big(y_t,\widetilde\sigma_t(\bs\theta_n^\ast)\big)\right]\sqrt{n}(\widehat{\bs\theta}_n-\bs\theta_0),
\]
where $\bs\theta_n^\ast$ lies between $\widehat{\bs\theta}_n$ and $\bs\theta_0$. It then follows from Lemma~\ref{lem:approx} that
\[
 \frac{1}{\sqrt{n}}\sum_{t=1}^n\partial_{\bs\theta}g\big(y_t,\widetilde\sigma_t(\bs\theta_0)\big)=\frac{1}{\sqrt{n}}\sum_{t=1}^n\bs{s}_t+o_p(1).
\]
By the ergodic theorem, the conditions for QML consistency, the local Hessian envelope and the finite-history Hessian equivalence ensure that
\[
 -\frac{1}{n}\sum_{t=1}^n \partial^2_{\bs\theta\bs\theta'} g\{y_t,\widetilde\sigma_t(\bs\theta_n^\ast)\}=\bs{H}_0+o_p(1).
\]
Because $\bs{H}_0=\kappa_h\bs{I}_0$ is positive definite and $n^{-1/2}\sum_{t=1}^n \bs{s}_t=O_p(1)$ by the martingale central limit theorem, inversion of the preceding first-order expansion yields \eqref{eq:qMLE_linear_rep}.
\end{proof}

\subsection{Consistency of the unconditional expectile}

\begin{proof}[Proof of Theorem~\ref{thm:consistency_expectiles}]
Let $\widehat\xi_n=\xpb{\widehat{\bs\theta}_n}$ be the unique zero of $\widetilde Q_n(\cdot,\widehat{\bs\theta}_n)$.  For fixed $\xi$, it follows from the global Lipschitz property in Lemma~\ref{lem:expectile_identity}, the mean-value theorem, and the ergodic theorem that $|Q_n(\xi,\widehat{\bs\theta}_n)-Q_n(\xi,\bs\theta_0)|\le C\|\widehat{\bs\theta}_n-\bs\theta_0\|\frac{1}{n}\sum_{t=1}^n|\eta_t|B_t\asconv 0$. By Lemma~\ref{lem:approx}, the difference between $Q_n$ and $\widetilde Q_n$ is almost surely negligible at the limit. Moreover,
\[
 Q_n(\xi,\bs\theta_0)=\frac{1}{n}\sum_{t=1}^n\phi_\tau(\eta_t-\xi) \asconv Q(\xi):=\E\phi_\tau(\eta_t-\xi)
\]
for each $\xi$. The map $Q$ is continuous and strictly decreasing, and its unique zero is $\xi_0$. See \cn{holzmann2016expectile} and \cn{kratschmer2017statistical}. In turn, given that $Q(\xi_0-\varepsilon)>0>Q(\xi_0+\varepsilon)$ for any $\varepsilon>0$, the preceding pointwise convergence implies that $\widetilde Q_n(\xi_0-\varepsilon,\widehat{\bs\theta}_n)>0> \widetilde Q_n(\xi_0+\varepsilon,\widehat{\bs\theta}_n)$ almost surely for sufficiently large $n$. Since $\widetilde Q_n(\cdot,\widehat{\bs\theta}_n)$ is continuous and strictly decreasing, its zero lies between these two points, thereby establishing that $\widehat\xi_n\asconv\xi_0$.
\end{proof}

\subsection{Bahadur representation and joint central limit theorem}

\begin{lemma}[Bahadur representation]\label{lem:bahadur}
Under the conditions in Theorem~\ref{thm:asymptotic_expectiles},
\begin{equation}\label{eq:bahadur_correct}
 \sqrt{n}(\widehat\xi_n-\xi_0) =\Psi_0^{-1}\frac{1}{\sqrt{n}}\sum_{t=1}^n\psi_t-\xi_0\bs{J}_0'\sqrt{n}(\widehat{\bs\theta}_n-\bs\theta_0)+o_p(1).
\end{equation}
\end{lemma}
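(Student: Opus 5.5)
The plan is to evaluate the uniform local expansion of Lemma~\ref{lem:local_expectile_expansion} at the random point $(\widehat{\bs\theta}_n,\widehat\xi_n)$. The defining equation $\widetilde Q_n(\widehat\xi_n,\widehat{\bs\theta}_n)=0$ then turns into a linear relation that we can solve for $\sqrt n(\widehat\xi_n-\xi_0)$. For the random point to fall inside $\mathcal N_n(M)$ with high probability, I first need $\sqrt n$-tightness of both estimators.

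For $\widehat{\bs\theta}_n$, tightness follows from Proposition~\ref{prop:consistency_qmle}. The linear representation \eqref{eq:qMLE_linear_rep} combines with the martingale CLT for $n^{-1/2}\sum_t\bs s_t$; the scores are square integrable under Assumptions~\ref{ass:moments} ($r=4$) and~\ref{ass:qml_moments}, so $\sqrt n(\widehat{\bs\theta}_n-\bs\theta_0)=O_p(1)$.

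For $\widehat\xi_n$, I exploit monotonicity, since $\xi\mapsto\widetilde Q_n(\xi,\bs\theta)$ is continuous and strictly decreasing. Fix $\varepsilon>0$ and choose $M_1$ with $\Pr(\sqrt n\|\widehat{\bs\theta}_n-\bs\theta_0\|\le M_1)\ge 1-\varepsilon$. For a constant $K$, apply the expansion of Lemma~\ref{lem:local_expectile_expansion} with $M=\max\{M_1,K\}$ at $\xi=\xi_0\pm K/\sqrt n$. Lemma~\ref{lem:approx}, via \eqref{eq:uniform_approx_expectile}, lets me replace $\widetilde Q_n$ by $Q_n$ at cost $o(n^{-1/2})$ a.s. This gives
\[
\sqrt n\,\widetilde Q_n(\xi_0\pm K/\sqrt n,\widehat{\bs\theta}_n)=\sqrt n Q_n(\xi_0,\bs\theta_0)-\xi_0\Psi_0\bs J_0'\sqrt n(\widehat{\bs\theta}_n-\bs\theta_0)\mp\Psi_0K+o_p(1).
\]
The first two terms on the right are $O_p(1)$. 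The term $\sqrt nQ_n(\xi_0,\bs\theta_0)=n^{-1/2}\sum_t\psi_t$ is a normalized sum of iid mean-zero variables with finite variance, because $\E\eta_t^2<\infty$. Since $\Psi_0>0$, choosing $K$ large makes the sign at $\xi_0-K/\sqrt n$ positive and at $\xi_0+K/\sqrt n$ negative with probability at least $1-2\varepsilon$. Monotonicity then traps $\widehat\xi_n$ in between, so $\sqrt n(\widehat\xi_n-\xi_0)=O_p(1)$.

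Now fix $M$ so that $(\widehat{\bs\theta}_n,\widehat\xi_n)\in\mathcal N_n(M)$ with probability at least $1-\varepsilon$. On this event, uniformity in Lemma~\ref{lem:local_expectile_expansion} allows evaluation at the random point. Using the definition $\widetilde Q_n(\widehat\xi_n,\widehat{\bs\theta}_n)=0$ together with Lemma~\ref{lem:approx}, I get
\[
0=\frac1{\sqrt n}\sum_{t=1}^n\psi_t-\xi_0\Psi_0\bs J_0'\sqrt n(\widehat{\bs\theta}_n-\bs\theta_0)-\Psi_0\sqrt n(\widehat\xi_n-\xi_0)+o_p(1).
\]
Because $\varepsilon$ is arbitrary, the remainder is $o_p(1)$ unconditionally. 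Dividing by $\Psi_0\ge\min\{\tau,1-\tau\}>0$ yields \eqref{eq:bahadur_correct}.

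The main obstacle is the tightness of $\widehat\xi_n$, since the score is not differentiable and the kink terms must be controlled at a random point. The monotone bracketing argument above avoids needing any derivative in $\xi$. One subtlety remains: the neighbourhood $V$ in Lemma~\ref{lem:approx} must contain $\widehat{\bs\theta}_n$. This holds eventually a.s. by Theorem~\ref{thm:consistency_expectiles}, since $\bs\theta_0$ is interior.
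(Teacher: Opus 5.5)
Your proposal is correct and follows essentially the same route as the paper. It first obtains $\sqrt{n}$-tightness of $\widehat{\bs\theta}_n$ from Proposition~\ref{prop:consistency_qmle}, then $\sqrt{n}$-tightness of $\widehat\xi_n$ by evaluating Lemma~\ref{lem:local_expectile_expansion} at $\xi_0\pm K/\sqrt{n}$ and trapping the zero via strict monotonicity of $\xi\mapsto\widetilde Q_n(\xi,\widehat{\bs\theta}_n)$, and finally evaluates the uniform expansion at the random point in $\mathcal{N}_n(M)$ using $\widetilde Q_n(\widehat\xi_n,\widehat{\bs\theta}_n)=0$ and Lemma~\ref{lem:approx} before dividing by $\Psi_0>0$; the only difference is that you make explicit that $\widehat{\bs\theta}_n$ must eventually lie in $V$, which the paper leaves implicit.
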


\begin{proof}
We first establish the rate of $\widehat\xi_n$ without using the Bahadur representation.  Proposition~\ref{prop:consistency_qmle} dictates that $\sqrt{n}(\widehat{\bs\theta}_n-\bs\theta_0)=O_p(1)$.  Consider $M_\theta<\infty$ such that $\Pr\!\left\{\sqrt{n}\|\widehat{\bs\theta}_n-\bs\theta_0\|>M_\theta\right\}<\varepsilon$ for a fixed $\varepsilon>0$ and sufficiently large $n$. In the complementary event, Lemma~\ref{lem:local_expectile_expansion} applies uniformly to $(\widehat{\bs\theta}_n,\xi_0+v/\sqrt{n})$ for every fixed $|v|\le M_\xi$, where $M_\xi$ is arbitrary. It then follows from Lemma~\ref{lem:approx} and $Q_n(\xi_0,\bs\theta_0)=n^{-1}\sum_t\psi_t$ that
\begin{align}
 \sqrt{n}\,\widetilde Q_n \left(\xi_0+\frac{v}{\sqrt{n}},\widehat{\bs\theta}_n\right)
 &= \frac{1}{\sqrt{n}}\sum_{t=1}^n\psi_t -\xi_0\Psi_0\bs{J}_0'\sqrt{n}(\widehat{\bs\theta}_n-\bs\theta_0) -v\Psi_0+o_p(1),                         \label{eq:bahadur_endpoint}
\end{align}
uniformly for $|v|\le M_\xi$ in such an event. The first term in \eqref{eq:bahadur_endpoint} is tight by the iid central limit theorem, whereas the QMLE representation ensures the tightness of second term. Because $\Psi_0>0$, we can choose $M_\xi$ so large that, with probability at least $1-2\varepsilon$, $\widetilde Q_n\left(\xi_0-\frac{M_\xi}{\sqrt{n}},~\widehat{\bs\theta}_n\right)>0$ and $\widetilde Q_n\left(\xi_0+\frac{M_\xi}{\sqrt{n}},~\widehat{\bs\theta}_n\right)<0$ for all large $n$. The map $\xi\mapsto\widetilde Q_n(\xi,\widehat{\bs\theta}_n)$ is continuous and strictly decreasing, with unique zero between the two endpoints. As $\varepsilon$ is arbitrary, $\sqrt{n}(\widehat\xi_n-\xi_0)=O_p(1)$. Since $\sqrt{n}(\widehat{\bs\theta}_n-\bs\theta_0)=O_p(1)$ and $\sqrt{n}(\widehat\xi_n-\xi_0)=O_p(1)$, for every $\varepsilon>0$, one can choose a fixed $M<\infty$ such that $\limsup_{n\to\infty}\Pr\{(\widehat{\bs\theta}_n,\widehat\xi_n)\notin\mathcal{N}_n(M)\}<\varepsilon$. For every $\delta>0$, it then follows that
\[
 \Pr\{|R_n(\widehat{\bs\theta}_n,\widehat\xi_n)|>\delta\} \le \Pr\{(\widehat{\bs\theta}_n,\widehat\xi_n) \notin\mathcal{N}_n(M)\} + \Pr\left\{ \sup_{(\bs\theta,\xi)\in\mathcal{N}_n(M)} |R_n(\bs\theta,\xi)|>\delta \right\},
\]
where $R_n(\bs\theta,\xi)$ denotes the remainder in the uniform local expansion. The second probability converges to zero by the uniform expansion, whereas we can make the first arbitrarily small by increasing $M$. As such, the remainder is $o_p(1)$ at any random pair. Because $\widetilde Q_n(\widehat\xi_n,\widehat{\bs\theta}_n)=0$, Lemma~\ref{lem:approx} gives $0=\frac{1}{\sqrt{n}}\sum_{t=1}^n\psi_t-\xi_0\Psi_0\bs{J}_0'\sqrt{n}(\widehat{\bs\theta}_n-\bs\theta_0)-\Psi_0\sqrt{n}(\widehat\xi_n-\xi_0)+o_p(1)$. Dividing by the strictly positive constant $\Psi_0$ obtains \eqref{eq:bahadur_correct}. 
\end{proof}

\begin{lemma}[Joint central limit theorem]\label{lem:joint_clt}
Let $\bs\Sigma_s=\E[\bs{s}_t\bs{s}_t']$, $\bs\sigma_{s\psi}=\E[\bs{s}_t\psi_t]$ and $\sigma_\psi^2=\E[\psi_t^2]$. Under the conditions in Proposition~\ref{prop:consistency_qmle}, it follows that
\begin{equation}\label{eq:joint_clt_correct}
 \begin{pmatrix}
    n^{-1/2}\sum_{t=1}^n\psi_t\\
    \sqrt{n}(\widehat{\bs\theta}_n-\bs\theta_0)
 \end{pmatrix}
 \dconv\mathcal{N}\left(0,\begin{pmatrix}
   \sigma_\psi^2          & \bs\sigma_{\psi\theta}'\\
   \bs\sigma_{\psi\theta} & \bs\Sigma_\theta
 \end{pmatrix}\right),
\end{equation}
where $\bs\Sigma_\theta=\bs{H}_0^{-1}\Sigma_s\bs{H}_0^{-1}$ and $\bs\sigma_{\psi\theta}=\bs{H}_0^{-1}\bs\sigma_{s\psi}$. Moreover, $\bs\Sigma_s=\E[a(\eta_t)^2]\bs{I}_0$ and $\bs\sigma_{s\psi}=\E[a(\eta_t)\psi_t]\bs{J}_0$ given that $\eta_t$ is independent of $\mathcal{F}_{t-1}$.
\end{lemma}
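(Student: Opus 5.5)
The plan is to reduce the statement to a central limit theorem for a single vector martingale difference sequence and then transfer it through the QMLE linear representation. By Proposition~\ref{prop:consistency_qmle}, $\sqrt{n}(\widehat{\bs\theta}_n-\bs\theta_0)=\bs{H}_0^{-1}n^{-1/2}\sum_{t=1}^n\bs{s}_t+o_p(1)$. The stacked vector on the left of \eqref{eq:joint_clt_correct} therefore equals $\bs{A}\,n^{-1/2}\sum_{t=1}^n\bs{X}_t+o_p(1)$, with $\bs{X}_t=(\psi_t,\bs{s}_t')'$ and $\bs{A}=\mathrm{diag}(1,\bs{H}_0^{-1})$. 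By Slutsky's lemma and the continuous mapping theorem, it suffices to show that $n^{-1/2}\sum_t\bs{X}_t\dconv\mathcal{N}(0,\E[\bs{X}_t\bs{X}_t'])$. Computing $\bs{A}\,\E[\bs{X}_t\bs{X}_t']\,\bs{A}'$ then reproduces the block covariance $\sigma_\psi^2$, $\bs{H}_0^{-1}\bs\sigma_{s\psi}$, $\bs{H}_0^{-1}\bs\Sigma_s\bs{H}_0^{-1}$, using symmetry of $\bs{H}_0$.

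First, I would check that $\{\bs{X}_t,\mathcal{F}_t\}$ is a stationary ergodic martingale difference sequence. We have $\psi_t=\phi_\tau(\eta_t-\xi_0)$, a function of $\eta_t$ alone, and $\eta_t$ is independent of $\mathcal{F}_{t-1}$, so $\E[\psi_t\mid\mathcal{F}_{t-1}]=\E\phi_\tau(\eta_t-\xi_0)=0$ by the definition of $\xi_0$ in Assumption~\ref{ass:innovation_process}. Likewise, $\E[\bs{s}_t\mid\mathcal{F}_{t-1}]=\bs{D}_t\E[a(\eta_t)]=0$ by Assumption~\ref{ass:differentiability_instrumental}, since $\bs{D}_t$ is $\mathcal{F}_{t-1}$-measurable (Assumption~\ref{ass:stationarity_mixing}). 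Stationarity and ergodicity hold because $\bs{X}_t$ is a measurable function of $(\eta_t,y_{t-1},y_{t-2},\ldots)$.

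Second, I would verify square integrability. $|\psi_t|\le\bar\tau(|\eta_t|+|\xi_0|)$ by Lemma~\ref{lem:expectile_identity}, so $\E\psi_t^2<\infty$. For the score term, independence gives $\E\|\bs{s}_t\|^2=\E[a(\eta_t)^2]\,\E\|\bs{D}_t\|^2\le\E[a(\eta_t)^2]\,\E B_t^2<\infty$ under Assumptions~\ref{ass:moments} and~\ref{ass:qml_moments}. The Billingsley--Ibragimov central limit theorem for stationary ergodic square-integrable martingale differences, applied to $\bs\lambda'\bs{X}_t$ for each fixed $\bs\lambda$ together with the Cramér--Wold device, then yields the Gaussian limit with covariance $\E[\bs{X}_t\bs{X}_t']$. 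This route needs no Lindeberg verification, since stationarity and ergodicity take its place.

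Finally, the closed forms follow by conditioning on $\mathcal{F}_{t-1}$ and using independence of $\eta_t$ from $\bs{D}_t$. This gives $\bs\Sigma_s=\E[a(\eta_t)^2\bs{D}_t\bs{D}_t']=\E[a(\eta_t)^2]\bs{I}_0$ and $\bs\sigma_{s\psi}=\E[a(\eta_t)\psi_t\bs{D}_t]=\E[a(\eta_t)\psi_t]\bs{J}_0$. The latter expectation is finite by Cauchy--Schwarz. The only delicate point is justifying the $o_p(1)$ transfer, and that is supplied directly by Proposition~\ref{prop:consistency_qmle}, whose hypotheses are exactly those assumed here. I therefore expect no real obstacle beyond carefully invoking the ergodic martingale CLT rather than an iid CLT, since $\bs{s}_t$ is not independent over time.
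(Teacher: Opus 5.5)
Your proposal is correct and follows the paper's architecture. You stack $(\psi_t,\bs{s}_t')'$ as a stationary martingale difference sequence with respect to $\mathcal{F}_t$, apply Cram\'er--Wold, and transfer the limit to $\sqrt{n}(\widehat{\bs\theta}_n-\bs\theta_0)$ through the QMLE linear representation and Slutsky. You then get the closed forms by conditioning on $\mathcal{F}_{t-1}$. The one genuine difference is which martingale central limit theorem you invoke. The paper uses the Hall--Heyde theorem, so it has to do two things. It shows that $\frac{1}{n}\sum_t\E[q_t(\bs\lambda)^2\mid\mathcal{F}_{t-1}]$ converges, via the ergodic theorem. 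It also verifies a conditional Lindeberg condition by bounding with conditional fourth moments, which uses $\E\|\bs{Z}_t\|^4<\infty$ (available under $r=4$ and Assumption~\ref{ass:qml_moments}). Your Billingsley--Ibragimov route lets stationarity and ergodicity cover both requirements. For this step it needs only $\E\psi_t^2<\infty$, $\E[a(\eta_t)^2]<\infty$ and $\E\|\bs{D}_t\|^2<\infty$. It is therefore shorter, and it makes clear that the fourth-moment conditions enter the lemma only through the QMLE representation you import from Proposition~\ref{prop:consistency_qmle}, not through the CLT itself. Applying Billingsley's theorem to $\bs\lambda'\bs{X}_t$ is legitimate. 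The $\mathcal{F}_{t-1}$-martingale difference property passes to the natural filtration of $\bs\lambda'\bs{X}_t$ by the tower property, and the degenerate case $\bs\lambda'\E[\bs{X}_t\bs{X}_t']\bs\lambda=0$ is trivial. In return, the paper's Lindeberg-type verification is the template that adapts more readily to non-stationary or triangular-array versions of the score sum.
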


\begin{proof}
Let $\bs{Z}_t=(\psi_t,\bs{s}_t')'\in\mathbb{R}^{m+1}$, which is $\mathcal{F}_t$-measurable. In addition, $\E(\bs{Z}_t\mid\mathcal{F}_{t-1})=0$ because $\bs{D}_t$ is $\mathcal{F}_{t-1}$-measurable, $\eta_t$ is independent of $\mathcal{F}_{t-1}$, $\E[\psi_t]=0$ and $\E[a(\eta_t)]=0$. As such, $\{\bs{Z}_t,\mathcal{F}_t\}$ is a stationary martingale-difference sequence. We next verify the martingale central limit theorem through the Cramér-Wold device. Fix $\bs\lambda\in\mathbb{R}^{m+1}$ and set $q_t(\bs\lambda)=\bs\lambda'\bs{Z}_t$. The conditional variance of $q_t(\bs\lambda)$ then is $\E[q_t(\bs\lambda)^2\mid\mathcal{F}_{t-1}]=\bs\lambda'\E(\bs{Z}_t\bs{Z}_t'\mid\mathcal{F}_{t-1})\bs\lambda$, where
\[
 \E(\bs{Z}_t\bs{Z}_t'\mid\mathcal{F}_{t-1}) 
 =\begin{pmatrix}
 \E[\psi_t^2]                && \E[\psi_ta(\eta_t)]\bs{D}_t'\\
 \E[\psi_ta(\eta_t)]\bs{D}_t && \E[a(\eta_t)^2]\bs{D}_t\bs{D}_t'
 \end{pmatrix}.
\]
All entries are stationary and ergodic functions of the past, as well as integrable by Assumption~\ref{ass:moments} with $r=4$ and Assumption~\ref{ass:qml_moments}. By the ergodic theorem, $\frac{1}{n}\sum_{t=1}^n\E\{q_t(\bs\lambda)^2\mid\mathcal{F}_{t-1}\}\asconv\bs\lambda'\bs\Sigma_Z\bs\lambda$, where
 $\bs\Sigma_Z=
 \begin{bsmallmatrix}
     \sigma_\psi^2     & \bs\sigma_{s\psi}'\\
     \bs\sigma_{s\psi} & \bs\Sigma_s
 \end{bsmallmatrix}$. For every $\varepsilon>0$, the conditional Lindeberg condition
\[
 \frac{1}{n}\sum_{t=1}^n\E\big[q_t^2(\bs\lambda)\bs{1}\{|q_t(\bs\lambda)|>\varepsilon\sqrt{n}\}\mid\mathcal{F}_{t-1}\big]\le\frac{1}{\varepsilon^2 n}\, \frac{1}{n}\sum_{t=1}^n\E\big[q_t^4(\bs\lambda)\mid\mathcal{F}_{t-1}\big]
\]
converges to zero in probability given the average in the right-hand side is $O_p(1)$ by the ergodic theorem. In fact, $\E\|\bs{Z}_t\|^4<\infty$ follows from $|\psi_t|\le\bar\tau(|\eta_t|+|\xi_0|)$, $\|\bs{s}_t\|=|a(\eta_t)|\|\bs{D}_t\|$, independence between $\eta_t$ and $\bs{D}_t$, Assumption~\ref{ass:moments} with $r=4$, and Assumption~\ref{ass:qml_moments}. The martingale central limit theorem \ca{hallheyde1980}{see, for example,} yields
$\bs\lambda'\frac{1}{\sqrt{n}}\sum_{t=1}^n\bs{Z}_t\dconv\mathcal{N}(0,\bs\lambda'\bs\Sigma_Z\bs\lambda)$ for every $\bs\lambda$. The Cramér-Wold device \ca{vandervaart1998asymptotic}{see, e.g.,} then implies $\frac{1}{\sqrt{n}}\sum_{t=1}^n\bs{Z}_t\dconv\mathcal{N}(0,\bs\Sigma_Z)$. Combining this result with \eqref{eq:qMLE_linear_rep} and Slutsky's theorem proves \eqref{eq:joint_clt_correct}. Finally, independence between $\eta_t$ and $\bs{D}_t$ implies  $\bs\Sigma_s=\E[a(\eta_t)^2]\bs{I}_0$ and $\bs\sigma_{s\psi}=\E[a(\eta_t)\psi_t]\bs{J}_0$. Finally, we can use the fact that $\bs{H}_0=\kappa_h\bs{I}_0$ to obtain $\bs\Sigma_\theta=\frac{\E[a(\eta_t)^2]}{\kappa_h^2}\,\bs{I}_0^{-1}$ and $\bs\sigma_{\psi\theta}=\frac{\E[a(\eta_t)\psi_t]}{\kappa_h}\,\bs{I}_0^{-1}\bs{J}_0$.
\end{proof}

\subsection{Asymptotic distribution of the unconditional expectile}

\begin{proof}[Proof of Theorem~\ref{thm:asymptotic_expectiles}]
It follows from Lemmata~\ref{lem:bahadur} and~\ref{lem:joint_clt} that the influence representation reads
\[
 \sqrt{n}(\widehat\xi_n-\xi_0) =\frac{1}{\sqrt{n}}\sum_{t=1}^n \left\{\Psi_0^{-1}\psi_t -\xi_0\bs{J}_0' \bs{H}_0^{-1}\bs{s}_t\right\}+o_p(1)
\]
and hence $\sqrt{n}(\widehat\xi_n-\xi_0)\dconv\mathcal{N}(0,V_\xi)$, where $V_\xi=\Psi_0^{-2}\sigma_\psi^2+\xi_0^2\bs{J}_0'\Sigma_\theta\bs{J}_0-2\xi_0\Psi_0^{-1}\bs{J}_0'{\bs\sigma}_{\psi\theta}$ given that
\[
 \operatorname{Cov}\Big(n^{-1/2}\sum_{t=1}^n\psi_t,\,\sqrt{n}(\widehat{\bs\theta}_n-\bs\theta_0)\Big)\longrightarrow\bs{H}_0^{-1}\E(\bs{s}_t\psi_t)=\bs{\bs\sigma}_{\psi\theta}.
\]
Plugging in the expressions for $\bs\Sigma_\theta$ and $\bs\sigma_{\psi\theta}$ then yields a more compact formula for the asymptotic variance: namely,
\[
 V_\xi =\frac{\sigma_\psi^2}{\Psi_0^2} +\left\{ \frac{\xi_0^2\E[a(\eta_t)^2]}{\kappa_h^2} -\frac{2\xi_0\E[a(\eta_t)\psi_t]}{\Psi_0\kappa_h} \right\}\bs{J}_0'\bs{I}_0^{-1}\bs{J}_0.
\]
\end{proof}

\subsection{Conditional expectile at the observed forecast state}

Due to the positive homogeneity of the expectile and independence between $\eta_{n+1}$ and $\mathcal{F}_n$, the one-step-ahead conditional expectile reads $c_{n+1}=\sigma_{n+1}(\bs\theta_0)\xi_0$, which we estimate by $\widehat{c}_{n+1}=\widetilde\sigma_{n+1}(\widehat{\bs\theta}_n)\widehat\xi_n$. For $k\ge1$, let $\mathrm{BL}_1(\mathbb{R}^k)=\left\{f:\mathbb{R}^k\to\mathbb{R}~\mbox{such that}~\|f\|_\infty\le1,~|f(u)-f(v)|\le\|u-v\|\right\}$, and define
\[
 d_{BL}^{(k)}(P,Q)=\sup_{f\in\mathrm{BL}_1(\mathbb{R}^k)}\left|\int f\,\mbox{d}P-\int f\,\mbox{d}Q\right|,
\]
with $d_{BL}:=d_{BL}^{(1)}$. We employ $d_{BL}^{(k)}$ because it metrizes weak convergence on Euclidean spaces \cite{vandervaartwellner1996weak}. Define $\bs{Y}_n:=(n^{-1/2}\sum_{t=1}^n\psi_t,\,\sqrt{n}(\widehat{\bs\theta}_n-\bs\theta_0))'$ and denote by $\bs\Sigma_Y$ its covariance matrix in \eqref{eq:joint_clt_correct}. Let $\bs{D}_{n+1}=\partial_{\bs\theta}\log\sigma_{n+1}(\bs\theta_0)$ and $V_{n+1}=\bs{A}_{n+1}\bs\Sigma_Y\bs{A}_{n+1}'$, with $\bs{A}_{n+1}=\sigma_{n+1}(\bs\theta_0)\left(\Psi_0^{-1},\,\xi_0(\bs{D}_{n+1}-\bs{J}_0)'\right)$. For the finite-memory state in Assumption~\ref{ass:forecast}, define analogously $V_{n+1}^{[m]}=\bs{A}_{n+1}^{[m]}\bs\Sigma_Y\bs{A}_{n+1}^{[m]'}$ and $\bs{A}_{n+1}^{[m]}=\sigma_{n+1}^{[m]}\left(\Psi_0^{-1},\,\xi_0(\bs{D}_{n+1}^{[m]}-\bs{J}_0)'\right)$. Finally, let $m_n\to\infty$ with $m_n/n\to0$ and write $\mathcal{H}_n=\mathcal{H}_{n,m_n}$.

\begin{proof}[Proof of Theorem~\ref{thm:conditional_expectiles}]
Letting $\bs\Delta_{\theta,n}=\widehat{\bs\theta}_n-\bs\theta_0$ and $\Delta_{\xi,n}=\widehat\xi_n-\xi_0$ yields by Taylor's theorem
\[
 \sigma_{n+1}(\widehat{\bs\theta}_n)=\sigma_{n+1}(\bs\theta_0)+\sigma_{n+1}(\bs\theta_0)\bs{D}_{n+1}'\bs\Delta_{\theta,n}+\mathcal R_{\sigma,n},
\]
where $|\mathcal R_{\sigma,n}|\le\frac{1}{2}\sup_{\bs\theta\in V}\left\|\partial^2_{\bs\theta\bs\theta'}\sigma_{n+1}(\bs\theta)\right\|\|\bs\Delta_{\theta,n}\|^2$ in the event $\widehat{\bs\theta}_n\in V$. Since $\widehat\xi_n=\xi_0+\Delta_{\xi,n}$,
$$
\sigma_{n+1}(\widehat{\bs\theta}_n)\widehat\xi_n-\sigma_{n+1}(\bs\theta_0)\xi_0=\sigma_{n+1}(\bs\theta_0) \left[\Delta_{\xi,n}+\xi_0\bs{D}_{n+1}'\bs\Delta_{\theta,n}+\bs{D}_{n+1}'\bs\Delta_{\theta,n}\bs\Delta_{\xi,n}\right]+\mathcal R_{\sigma,n}(\xi_0+\Delta_{\xi,n}).
$$
The remainder term is $o_p(n^{-1/2})$ given that $\sqrt{n}\,|\mathcal R_{\sigma,n}| (|\xi_0|+|\Delta_{\xi,n}|)=o_p(1)$ by Assumption~\ref{ass:forecast}. In addition, the last term within brackets is also of order $o_p(n^{-1/2})$ in view that
\[
 \sqrt{n}\,\sigma_{n+1}(\bs\theta_0)\|\bs{D}_{n+1}\| \|\bs \Delta_{\theta,n}\||\Delta_{\xi,n}|=\sigma_{n+1}(\bs\theta_0)\|\bs{D}_{n+1}\| \frac{ \|\sqrt{n}\,\bs\Delta_{\theta,n}\| |\sqrt{n}\,\Delta_{\xi,n}|}{\sqrt{n}}=o_p(1).
\]
Consequently, 
\begin{equation}\label{eq:observed_forecast_delta}
 \sqrt{n}\left[\sigma_{n+1}(\widehat{\bs\theta}_n)\widehat\xi_n-\sigma_{n+1}(\bs\theta_0)\xi_0\right]=\sigma_{n+1}(\bs\theta_0)\sqrt{n}\,\Delta_{\xi,n}+\xi_0\sigma_{n+1}(\bs\theta_0)\bs{D}_{n+1}'\sqrt{n}\,\bs\Delta_{\theta,n}+o_p(1).
\end{equation}
It also follows from Assumption~\ref{ass:approximation} that $\sqrt{n}\left|\widetilde\sigma_{n+1}(\widehat{\bs\theta}_n)-\sigma_{n+1}(\widehat{\bs\theta}_n) \right||\widehat\xi_n|=o_p(1)$. The Bahadur representation then gives $\sqrt{n}\,\Delta_{\xi,n}=\Psi_0^{-1}\frac{1}{\sqrt{n}}\sum_{t=1}^n\psi_t-\xi_0\bs{J}_0'\sqrt{n}\,\bs\Delta_{\theta,n}+o_p(1)$. Plugging this expression into \eqref{eq:observed_forecast_delta} yields
\begin{align*}
 \sqrt{n}(\widehat c_{n+1}-c_{n+1})&=\sigma_{n+1}(\bs\theta_0)\Psi_0^{-1}\frac{1}{\sqrt{n}}\sum_{t=1}^n\psi_t+\xi_0\sigma_{n+1}(\bs\theta_0)(\bs{D}_{n+1}-\bs{J}_0)'\sqrt{n}\,\bs\Delta_{\theta,n}+o_p(1)\\
 &=\bs{A}_{n+1}\bs{Y}_n+o_p(1).
\end{align*}
To handle the dependence between $\bs{A}_{n+1}$ and $\bs{Y}_n$, define $\bs{Y}_t^{\star}=\big(\psi_t,\,\bs{H}_0^{-1}\bs{s}_t\big)'$, $\bs{Y}_{n,0}=\frac{1}{\sqrt{n}}\sum_{t=1}^{n-m_n}\bs{Y}_t^{\star}$, and $\bs{Y}_{n,1}=\frac{1}{\sqrt{n}}\sum_{t=n-m_n+1}^{n}\bs{Y}_t^{\star}$. By \eqref{eq:qMLE_linear_rep}, $\bs{Y}_n=\bs{Y}_{n,0}+\bs{Y}_{n,1}+o_p(1)$. In addition, since $\{\bs{Y}_t^{\star},\mathcal{F}_t\}$ is a square-integrable martingale-difference sequence, $\E\|\bs{Y}_{n,1}\|^2=\frac{m_n}{n}\E\|\bs{Y}_0^{\star}\|^2\longrightarrow0$, so that $\bs{Y}_{n,1}=o_p(1)$ and $\bs{Y}_{n,0}\dconv\bs{Y}\sim\mathcal{N}(0,\bs\Sigma_Y)$. By causality, $\bs{Y}_{n,0}$ is measurable with respect to $\sigma(\eta_t:t\le n-m_n)$, while $\bs{A}_{n+1}^{[m_n]}$ is $\mathcal{H}_n$-measurable. These two random objects are independent. Moreover, \eqref{eq:forecast_terminal_contraction} and tightness of the terminal coefficients imply
$\|\bs{A}_{n+1}^{[m_n]}-\bs{A}_{n+1}\|=o_p(1)$. It then follows that
\begin{equation}\label{eq:observed_forecast_block}
 \sqrt{n}(\widehat c_{n+1}-c_{n+1})=\bs{A}_{n+1}^{[m_n]}\bs{Y}_{n,0}+r_n,
 \qquad\mbox{with}~r_n=o_p(1).
\end{equation}
For any deterministic row vector $\bs{a}$,
\begin{equation}\label{eq:observed_BL_linear_map}
 d_{BL}\Big(\mathcal{L}(\bs{a}\bs{Y}_{n,0}),\mathcal{L}(\bs{a}\bs{Y})\Big)\le\max(1,\|\bs{a}\|)d_{BL}^{(m+1)}\Big(\mathcal{L}(\bs{Y}_{n,0}),\mathcal{L}(\bs{Y})\Big).
\end{equation}
Conditional on $\mathcal{H}_n$, $\bs{A}_{n+1}^{[m_n]}$ is fixed and $\bs{Y}_{n,0}$ has its unconditional law. Since $\|\bs{A}_{n+1}^{[m_n]}\|=O_p(1)$, \eqref{eq:observed_BL_linear_map} and a truncation argument give $d_{BL}\left(\mathcal{L}(\bs{A}_{n+1}^{[m_n]}\bs{Y}_{n,0}\mid\mathcal{H}_n),\mathcal{N}(0,V_{n+1}^{[m_n]})\right)\pconv0$. Finally,
$$
d_{BL}\Big(\mathcal{L}(X+r_n\mid\mathcal{H}_n),\mathcal{L}(X\mid\mathcal{H}_n)\Big)\le\E\big[\min(2,|r_n|)\mid\mathcal{H}_n\big],
$$
which converges to zero in probability because $r_n=o_p(1)$. Bearing \eqref{eq:observed_forecast_block} in mind, this proves that $d_{BL}\Big(\mathcal{L}\big(\sqrt{n}(\widehat{c}_{n+1}-c_{n+1})\mid\mathcal{H}_n\big),\mathcal{N}(0,V_{n+1}^{[m_n]})\Big)\pconv 0$. To wrap up the proof, it suffices to appreciate that $V_{n+1}^{[m_n]}-V_{n+1}\pconv0$ by \eqref{eq:forecast_terminal_contraction} and the continuity of $\bs{A}\mapsto \bs{A}\bs\Sigma_Y\bs{A}'$.
\end{proof}

\subsection{Consistent estimation of the asymptotic variance}

To establish the plug-in estimators for each component of the asymptotic variance, we first define $\widehat{\bs{D}}_t=\partial_{\bs\theta}\log\widetilde\sigma_t(\widehat{\bs\theta}_n)$ and then let $\widehat\eta_t=y_t/\widetilde\sigma_t(\widehat{\bs\theta}_n)$, $\widehat\psi_t=\phi_\tau(\widehat\eta_t-\widehat\xi_n)$, $\widehat w_t=w_\tau(\widehat\eta_t-\widehat\xi_n)$, $\widehat{\bs{s}}_t=a(\widehat\eta_t)\widehat{\bs{D}}_t$, $\widehat\Psi_n=\frac{1}{n}\sum_{t=1}^n\widehat w_t$, $\widehat{\bs{J}}_n=\frac{1}{n}\sum_{t=1}^n\widehat {\bs{D}}_t$, $\widehat{\bs{I}}_n=\frac{1}{n}\sum_{t=1}^n\widehat {\bs{D}}_t\widehat {\bs{D}}_t'$, $\widehat\kappa_h=-\frac{1}{n}\sum_{t=1}^n b(\widehat\eta_t)$, and $\widehat{\bs{H}}_{F,n}=\widehat\kappa_h\,\widehat {\bs{I}}_n$. The latter includes a subscript $F$ to emphasize the use of the factorization $\bs{H}_0=\kappa_h\bs{I}_0$ rather than a sample average of the full QML Hessian. We are now ready to define $\widehat{\bs\Sigma}_s =\frac{1}{n}\sum_{t=1}^n\widehat{\bs{s}}_t\widehat{\bs{s}}_t'$, $\widehat{\bs\sigma}_{s\psi}=\frac{1}{n}\sum_{t=1}^n\widehat{\bs{s}}_t\widehat\psi_t$,  $\widehat\sigma_\psi^2=\frac{1}{n}\sum_{t=1}^n\widehat\psi_t^2$, $\widehat{\bs\Sigma}_\theta=\widehat{\bs{H}}_{F,n}^{-1}\widehat{\bs\Sigma}_s \widehat{\bs{H}}_{F,n}^{-1}$, and $\widehat{\bs\sigma}_{\psi\theta}=\widehat{\bs{H}}_{F,n}^{-1}\widehat{\bs\sigma}_{s\psi}$.

\begin{proposition}[Consistent variance estimation]
\label{prop:variance_estimation}
Under the conditions in Theorem~\ref{thm:asymptotic_expectiles} and Assumption~\ref{ass:plugin},
\begin{equation}\label{eq:Vhat_component}
 \widehat{V}_\xi=\widehat\Psi_n^{-2}\widehat\sigma_\psi^2+\widehat\xi_n^2\widehat{\bs{J}}_n'\widehat{\bs\Sigma}_\theta\widehat{\bs{J}}_n -2\widehat\xi_n\widehat\Psi_n^{-1}\widehat{\bs{J}}_n'\widehat{\bs\sigma}_{\psi\theta}\pconv{V}_\xi.
\end{equation}
Alternatively, if we define $\widehat\varphi_t=\widehat\Psi_n^{-1}\widehat\psi_t-\widehat\xi_n\widehat{\bs{J}}_n'(\widehat{\bs{H}}_{F,n})^{-1}\widehat{\bs{s}}_t$, then $\widehat V_{F,\xi}=\frac{1}{n}\sum_{t=1}^n(\widehat\varphi_t-\overline{\widehat\varphi})^2\pconv V_\xi$.
\end{proposition}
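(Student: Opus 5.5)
Since $V_\xi$ is a continuous function of $(\Psi_0,\sigma_\psi^2,\xi_0,\bs J_0,\bs\Sigma_\theta,\bs\sigma_{\psi\theta})$ at a point where $\Psi_0>0$ and $\bs H_0=\kappa_h\bs I_0$ is positive definite, the first claim follows from the continuous mapping theorem once each plug-in component is shown to be consistent. The targets are
\[
\widehat\Psi_n\pconv\Psi_0,\quad \widehat\sigma_\psi^2\pconv\sigma_\psi^2,\quad \widehat{\bs J}_n\pconv\bs J_0,\quad \widehat{\bs I}_n\pconv\bs I_0,\quad \widehat\kappa_h\pconv\kappa_h,\quad \widehat{\bs\Sigma}_s\pconv\bs\Sigma_s,\quad \widehat{\bs\sigma}_{s\psi}\pconv\bs\sigma_{s\psi}.
\]
We already have $\widehat\xi_n\asconv\xi_0$ from Theorem~\ref{thm:consistency_expectiles}. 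The common strategy for each average is the same. First, replace finite-history quantities $\widetilde\sigma_t,\widetilde{\bs D}_t$ by their stationary versions via Lemma~\ref{lem:approx}, using the envelopes $C_0,C_1$ and the fourth moments in Assumptions~\ref{ass:approximation} and~\ref{ass:qml_moments}. Next, replace $(\widehat{\bs\theta}_n,\widehat\xi_n)$ by $(\bs\theta_0,\xi_0)$ using Lipschitz bounds on a shrinking neighbourhood $V$. Finally, apply the ergodic theorem at the truth, together with the independence of $\eta_t$ and $\mathcal F_{t-1}$ to identify the limits.

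For the smooth pieces, write $\eta_t(\bs\theta)=A_t(\bs\theta)\eta_t$. Lemma~\ref{lem:expectile_identity} gives $|\psi_{t,\tau}(\bs\theta,\xi)-\psi_t|\le\bar\tau(B_t|\eta_t|\|\bs\theta-\bs\theta_0\|+|\xi-\xi_0|)$. Combined with $|\psi_t|\le\bar\tau(|\eta_t|+|\xi_0|)$ and the factorisation $a^2-b^2=(a-b)(a+b)$, this yields $|n^{-1}\sum(\widehat\psi_t^2-\psi_t^2)|\le o_p(1)\cdot n^{-1}\sum(1+B_t|\eta_t|)^2=o_p(1)$, using $r=4$. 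For $\widehat{\bs J}_n$ and $\widehat{\bs I}_n$, the mean-value theorem with the envelope $\|\partial_{\bs\theta}\bs D_t\|\le B_t$ suffices. For $\widehat\kappa_h$, $\widehat{\bs\Sigma}_s$ and $\widehat{\bs\sigma}_{s\psi}$, Assumption~\ref{ass:plugin} gives $|a(\widehat\eta_t)-a(\eta_t)|\le L_{a,t}\|\widehat{\bs\theta}_n-\bs\theta_0\|$, up to the finite-history correction, and similarly for $b$. Products such as $\widehat{\bs s}_t\widehat{\bs s}_t'-\bs s_t\bs s_t'$ then expand into terms each bounded by $\|\widehat{\bs\theta}_n-\bs\theta_0\|$ times a sample average of products of $L_{a,t}$, $|a(\eta_t)|$, $B_t$ and $|\psi_t|$. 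These averages are $O_p(1)$ by Hölder's inequality, because $\E L_{a,t}^4$, $\E|a(\eta_t)|^4$ and $\E B_t^4$ are finite. The limits are then $\E[a(\eta_t)^2]\bs I_0$ and $\E[a(\eta_t)\psi_t]\bs J_0$, matching Lemma~\ref{lem:joint_clt}, and $-\E b(\eta_t)=\kappa_h$.

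The main obstacle I expect is $\widehat\Psi_n$, because $w_\tau$ is discontinuous at zero and no Lipschitz argument applies. I would bound
\[
|\widehat w_t-w_t|\le|1-2\tau|\,\bs1\{|\eta_t-\xi_0|\le|\widehat\eta_t-\widehat\xi_n-\eta_t+\xi_0|\}.
\]
On the event $\|\widehat{\bs\theta}_n-\bs\theta_0\|+|\widehat\xi_n-\xi_0|\le\delta$, the right-hand side is at most $\bs1\{|\eta_t-\xi_0|\le\delta(1+B_t|\eta_t|)+C_0\rho^t|y_t|\underline\sigma^{-2}\}$. The finite-history term is summable, so it contributes $O(1/n)$. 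The remaining average converges by the ergodic theorem to $\Pr\{|\eta_t-\xi_0|\le\delta(1+B_t|\eta_t|)\}$, and this probability tends to zero as $\delta\downarrow0$ by continuity of $F_\eta$ at $\xi_0$, exactly as in Step 3 of Lemma~\ref{lem:local_expectile_expansion}. Letting $n\to\infty$ first and then $\delta\downarrow0$ gives $\widehat\Psi_n\pconv\Psi_0$.

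For the alternative estimator, expand
\[
\widehat V_{F,\xi}=n^{-1}\sum\widehat\varphi_t^2-\overline{\widehat\varphi}^{\,2}.
\]
Writing $\widehat c=\widehat\xi_n(\widehat{\bs H}_{F,n})^{-1}\widehat{\bs J}_n$, the first term equals $\widehat\Psi_n^{-2}\widehat\sigma_\psi^2+\widehat c'\widehat{\bs\Sigma}_s\widehat c-2\widehat\Psi_n^{-1}\widehat c'\widehat{\bs\sigma}_{s\psi}$, which is the same continuous function of the components as before and so converges to $V_\xi$. For the second term, $\overline{\widehat\varphi}$ is a continuous function of $n^{-1}\sum\widehat\psi_t$ and $n^{-1}\sum\widehat{\bs s}_t$. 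By the same replacement arguments, these converge to $\E\psi_t=0$ and $\E\bs s_t=\E[a(\eta_t)]\bs J_0=0$, so $\overline{\widehat\varphi}^{\,2}\pconv0$.
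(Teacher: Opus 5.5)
Your proposal is correct and follows essentially the same route as the paper. It uses componentwise consistency via finite-history removal, envelope and Lipschitz replacement, the ergodic theorem and continuous mapping, together with the same discontinuity bound on $\widehat w_t$ for $\widehat\Psi_n$ and the same expansion of $\widehat V_{F,\xi}$. There, the paper notes that $n^{-1}\sum_t\widehat\psi_t$ and $n^{-1}\sum_t\widehat{\bs{s}}_t$ are exactly zero by the expectile equation and the QML first-order condition, but your law-of-large-numbers argument also works. The one step to phrase more carefully is the initialization term in $\widehat\Psi_n$: almost-sure summability of $e_t=C_0\rho^t|y_t|/\underline\sigma^2$ does not make the indicators summable. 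What it does give is $e_t\le\delta$ for all but finitely many $t$ almost surely, so that term can be absorbed into the $\delta$-indicator up to an $O(1/n)$ error, which is exactly what your conclusion needs.
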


\begin{proof}
We start with the consistency of the primitive quantities and then apply the continuous mapping theorem. We do so in five steps. The first concerns the removal of the finite-history initialization, whereas the second handles the derivatives and factorized Hessian. The third and fourth steps deal with the expectile- and QML-score moments. Finally, the last step applies the continuous mapping theorem.\vskip 1em

\noindent\emph{Step 1}\\
In the event that $(\widehat{\bs\theta}_n,\widehat\xi_n)\in V\times K$, for a fixed compact $K$ containing $\xi_0$, Lemma~\ref{lem:approx} applies uniformly to the smooth QML quantities and to $\widehat\psi_t$. The probability that $(\widehat{\bs\theta}_n,\widehat\xi_n)\in V\times K$ converges to one by Proposition~\ref{prop:consistency_qmle} and Theorem~\ref{thm:consistency_expectiles}. Accordingly, it suffices to establish the limits in the enunciate with $\eta_t^\circ=A_t(\widehat{\bs\theta}_n)\eta_t$ and $\bs{D}_t^\circ=\bs{D}_t(\widehat{\bs\theta}_n)$ in lieu of the finite-history residuals and derivatives.
However, the indicator weight requires a separate argument because it is not Lipschitz. It turns out that, for any real $u$ and $v$, $|w_\tau(u)-w_\tau(v)|\le c_\tau\bs{1}\{|u|\le|u-v|\}$ and hence
\begin{equation}
 |\widehat w_t-w_t|\le c_\tau\bs{1}\big\{|r_t|\le|\widehat\eta_t-\eta_t|+|\widehat\xi_n-\xi_0|\big\},
 \label{eq:w_indicator_bound}
\end{equation}
with $r_t=\eta_t-\xi_0$. Assume now that $\|\widehat{\bs\theta}_n-\bs\theta_0\|+|\widehat\xi_n-\xi_0|\le\epsilon$ and let $a_{t,\epsilon}=C\epsilon(1+B_t|\eta_t|)$ and $e_t=C_0|y_t|\rho^t/\underline\sigma^2$. Because $\bs{1}\{|r_t|\le a_{t,\epsilon}+e_t\}\le\bs{1}\{|r_t|\le 2a_{t,\epsilon}\}+\bs{1}\{|r_t|\le2e_t\}$, we can treat stationary and initialization effects separately. The expectation of the first indicator converges to zero as $\epsilon\downarrow0$ by dominated convergence given that $\Pr(\eta_t=\xi_0)=0$. This means the ergodic theorem controls its sample average. For the initialization term, $\Pr(|r_t|\le e_t)\le\Pr(|r_t|\le a)+a^{-1}\E[e_t]$ for every $a>0$. Now, we first let $t\to\infty$, bearing in mind that $\E[e_t]=O(\rho^t)$, and then let $a\downarrow0$. This ensures that $\Pr(|r_t|\le e_t)\to0$, while Cesàro averaging yields $\frac{1}{n}\sum_{t=1}^n\Pr(|r_t|\le e_t)\to0$. It then follows from Markov's inequality and \eqref{eq:w_indicator_bound} that 
\begin{equation}\label{eq:Psi_consistency_detail}
 \frac{1}{n}\sum_{t=1}^n|\widehat w_t-w_t|\pconv0\quad\mbox{and}\quad\widehat\Psi_n\pconv\Psi_0.
\end{equation}\vskip 1em

\noindent\emph{Step 2}\\
By the derivative envelope in Assumption~\ref{ass:moments}, $\frac{1}{n}\sum_{t=1}^n\|{\bs{D}}_t^\circ-{\bs{D}}_t\| \le \|\widehat{\bs\theta}_n-\bs\theta_0\| \frac{1}{n}\sum_{t=1}^n B_t=o_p(1)$. The same argument in squared norm, with $r=4$ in Assumption~\ref{ass:moments}, gives way to $\widehat{\bs{J}}_n\pconv\bs{J}_0$ and $\widehat{\bs{I}}_n\pconv\bs{I}_0$. By Assumption~\ref{ass:plugin}, $\frac{1}{n}\sum_{t=1}^n|b(\eta_t^\circ)-b(\eta_t)| \le\|\widehat{\bs\theta}_n-\bs\theta_0\|\frac{1}{n}\sum_{t=1}^n L_{b,t}=o_p(1)$. As such, the ergodic theorem and $\E|b(\eta_t)|<\infty$ imply
\begin{equation}
 \widehat\kappa_h\pconv\kappa_h=-\E[b(\eta_t)],\quad\mbox{and}\quad\widehat{\bs{H}}_{F,n}\pconv\kappa_h\bs{I}_0=\bs{H}_0.
 \label{eq:Hfactor_consistency}
\end{equation}
Since $\kappa_h>0$ and $\bs{I}_0$ is positive definite, $\widehat{\bs{H}}_{F,n}$ is invertible with probability approaching one and $\widehat{\bs{H}}_{F,n}^{-1}\pconv\bs{H}_0^{-1}$.\vskip 1em

\noindent\emph{Step 3}\\
Lemma~\ref{lem:expectile_identity} gives $|\psi_{t,\tau}(\widehat{\bs\theta}_n,\widehat\xi_n)-\psi_t|\le\bar\tau\left(|\eta_t^\circ-\eta_t|+|\widehat\xi_n-\xi_0|\right)$. Furthermore,
\[
 \frac{1}{n}\sum_{t=1}^n|\eta_t^\circ-\eta_t|^2 \le C\|\widehat{\bs\theta}_n-\bs\theta_0\|^2 \frac{1}{n}\sum_{t=1}^n B_t^2\eta_t^2=o_p(1),
\]
where the last average is of order $O_p(1)$ given the moment conditions and the ergodic theorem. This means that
\begin{equation}
 \frac{1}{n}\sum_{t=1}^n|\widehat\psi_t-\psi_t|^2\pconv0.
 \label{eq:psi_L2_plugin}
\end{equation}
By Cauchy--Schwarz,
\begin{align*}
 \left|\frac{1}{n}\sum_{t=1}^n (\widehat\psi_t^2-\psi_t^2)\right| &\le \left\{\frac{1}{n}\sum_{t=1}^n |\widehat\psi_t-\psi_t|^2\right\}^{1/2} \left\{\frac{1}{n}\sum_{t=1}^n (|\widehat\psi_t|+|\psi_t|)^2\right\}^{1/2},
\end{align*}
which is $o_p(1)$ given that the second factor is $O_p(1)$ by Assumption~\ref{ass:differentiability_instrumental} and the moment conditions on the innovations. Altogether, this ensures that $\widehat\sigma_\psi^2\pconv\E[\psi_t]^2=\sigma_\psi^2$.\vskip 1em

\noindent\emph{Step 4}\\
From Assumption~\ref{ass:plugin} and the derivative bounds,
\begin{align*}
 \frac{1}{n}\sum_{t=1}^n\|\widehat s_t-s_t\|^2 &\le C\frac{1}{n}\sum_{t=1}^n \left[ |a(\eta_t^\circ)-a(\eta_t)|^2\|D_t^\circ\|^2+a(\eta_t)^2\|{\bs{D}}_t^\circ-{\bs{D}}_t\|^2 \right]+o_p(1) =o_p(1).
\end{align*}
The last equality follows from the consistency of $\widehat{\bs\theta}_n$, the Lipschitz envelopes, Assumption~\ref{ass:moments} with $r=4$, and Assumption~\ref{ass:differentiability_instrumental}. Applying then Cauchy-Schwarz gives $\widehat\Omega_s=\frac{1}{n}\sum_{t=1}^n\widehat{\bs{s}}_t \widehat{\bs{s}}_t'\pconv\E({\bs{s}}_t{\bs{s}}_t')=\bs\Sigma_s$. Similarly, using \eqref{eq:psi_L2_plugin},
\begin{align*}
 \left\|\frac{1}{n}\sum_{t=1}^n \widehat {\bs{s}}_t\widehat\psi_t -\frac{1}{n}\sum_{t=1}^n{\bs{s}}_t\psi_t\right\|
 &\le \left\{\frac{1}{n}\sum_{t=1}^n \|\widehat {\bs{s}}_t-{\bs{s}}_t\|^2\right\}^{1/2} \left\{\frac{1}{n}\sum_{t=1}^n\widehat\psi_t^2\right\}^{1/2}\\
 &\quad+ \left\{\frac{1}{n}\sum_{t=1}^n\|{\bs{s}}_t\|^2\right\}^{1/2} \left\{\frac{1}{n}\sum_{t=1}^n |\widehat\psi_t-\psi_t|^2\right\}^{1/2} =o_p(1).
\end{align*}
The ergodic theorem therefore yields
\begin{equation}
 \widehat {\bs\sigma}_{s\psi}\pconv\E({\bs{s}}_t\psi_t)={\bs\sigma}_{s\psi}.
 \label{eq:cross_consistency_detail}   
\end{equation}\vskip 1em
 
\noindent\emph{Step 5}\\
Together, \eqref{eq:Psi_consistency_detail}, \eqref{eq:Hfactor_consistency} and \eqref{eq:cross_consistency_detail} imply $\widehat{\bs\Sigma}_\theta\pconv{\bs\Sigma}_\theta$ and $\widehat{\bs\sigma}_{\psi\theta}\pconv{\bs\sigma}_{\psi\theta}$. Given that $\widehat\xi_n\pconv\xi_0$ and $\widehat {\bs{J}}_n\pconv {\bs{J}}_0$, it suffices to apply the continuous mapping theorem to obtain \eqref{eq:Vhat_component}. In particular, expanding the (uncentered) second moment of $\widehat\varphi_t$ gives exactly the three terms in \eqref{eq:Vhat_component}. Moreover,
\[
 \overline{\widehat\varphi} =\widehat\Psi_n^{-1}\frac{1}{n}\sum_{t=1}^n\widehat\psi_t -\widehat\xi_n\widehat J_n'
  (\widehat {\bs{H}}_n^{\,F})^{-1}  \frac{1}{n}\sum_{t=1}^n\widehat {\bs{s}}_t=o_p(1).
\]
The first sample average is zero by the defining expectile equation. The second is zero provided that the interior QML estimator satisfies its first-order condition, otherwise it is of order $o_p(1)$ by the plug-in law of large numbers. This means that subtracting $\overline{\widehat\varphi}^{\,2}$ is asymptotically negligible, thereby establishing the consistency of the alternative estimator $\widehat V_{F,\xi}$.
\end{proof}

\begin{remark}[Fully-factorized covariance estimators]
Alternatively, we can also estimate $\bs\Sigma_\theta$ and $\bs\sigma_{\psi\theta}$ respectively by
\[
  \frac{\sum_{t=1}^n a(\widehat\eta_t)^2}{n\,\widehat\kappa_h^2}\,\widehat{\bs{I}}_n^{-1}\quad\mbox{and}\quad\frac{\sum_{t=1}^n a(\widehat\eta_t)\widehat\psi_t}{n\,\widehat\kappa_h}\,\widehat{\bs{I}}_n^{-1}\widehat{\bs{J}}_n
\]
to exploit the factorization result.
\end{remark}

Finally, it remains to prove that the feasible prediction interval for the conditional expectiles is asymptotically valid.

\begin{proof}[Proof of Corollary \ref{cor:conditional_expectile_interval}]
Assumptions~\ref{ass:approximation} and~\ref{ass:forecast} imply that $\widetilde\sigma_{n+1}(\widehat{\bs\theta}_n)-\sigma_{n+1}(\bs\theta_0)=o_p(1)$ and $\widehat{\bs{D}}_{n+1}-\bs{D}_{n+1}=o_p(1)$ given the consistency of $\widehat{\bs\theta}_n$ and the derivative envelope. Proposition~\ref{prop:variance_estimation} then provides the consistency of the remaining components of the variance estimator, so that $\widehat{V}_{n+1}-V_{n+1}\pconv0$. In addition, positive definiteness of $\bs\Sigma_Y$ and the lower bound on the conditional scale imply that $V_{n+1}$ is bounded away from zero. This means that $\widehat{V}_{n+1}^{[m_n]}/V_{n+1}\pconv1$ since $V_{n+1}^{[m_n]}-V_{n+1}=o_p(1)$. For every $\varepsilon>0$, the conditional probability that this ratio differs from one by more than $\varepsilon$ converges to zero in probability because its expectation coincides with the corresponding unconditional probability. It then suffices to apply the conditional Slutsky's theorem to obtain the result, with the conditional coverage statement ensuing immediately from the continuity of the standard Gaussian distribution at $\pm z_{1-\alpha/2}$.
\end{proof}

\end{document}